\tikzset{%
vertex/.style={circle,fill=black!20,inner sep=0pt,minimum size=18pt},%
small vertex/.style={circle,fill=black!20,inner sep=0pt,minimum size=9pt},%
player/.style={circle,fill=black!20,inner sep=0pt,minimum size=18pt},%
marked player/.style={circle,draw=black,fill=white,inner sep=0pt,minimum size=17pt},%
dirlink/.style={draw,postaction={decorate,decoration=%
	{markings,mark=at position -.8pt with {\arrow[line width=1pt]{stealth'}}}}},%
subgraph/.style={draw,circle,decorate,decoration={coil,aspect=0},minimum size=2cm},%
every picture/.style={>=stealth',line width=.8pt},%
highlight/.style={draw,line width=5pt,-,black!30},%
on grid,%
auto,%
}
\newcommand{\vE}{\vec{E}}
\newcommand{\vG}{\vec{G}}
\newcommand{\diam}{\mathrm{diam}}
\newcommand{\Bi}{\mathsf{B}}
\newcommand{\Uni}{\mathsf{U}}
\newcommand{\nrm}[1]{|{#1}|}
\newcommand{\alignskip}{\quad\:}
\newcommand{\PS}{\ensuremath{\mathsf{PS}}}
\newcommand{\nn}{\nu}
\newcommand{\sep}{\mathrm{sep}}
\newcommand{\payoff}{\pi}
\newcommand{\Rel}{\mathit{R}}
\newcommand{\tG}{\widetilde{G}}
\newcommand{\tV}{\widetilde{V}}
\newcommand{\tP}{\widetilde{P}}
\newcommand{\tC}{\widetilde{C}}
\newcommand{\shp}{\mathcal{P}}
\newcommand{\mmax}{m_{\max}}
\newcommand{\Emax}{E_{\max}}
\renewcommand{\S}{\ensuremath{\mathcal{S}}}
\newcommand{\Left}{\mathit{left}}
\newcommand{\Right}{\mathit{right}}
\newcommand{\bcc}{\textsc{bcc}}
\renewcommand{\P}{\mathrm{Pr}}
\renewcommand{\rel}{\mathrm{rel}}
\renewcommand{\SC}{C}
\renewcommand{\acro}[1]{#1}
\newcommand{\NE}{\ensuremath{\text{{\acro{NE}}}}}
\newcommand{\PNE}{{\ensuremath{\text{\acro{PNE}}}}}
\renewcommand{\PS}{{\ensuremath{\text{\acro{PS}}}}}
\newcommand{\MNE}{{\ensuremath{\text{\acro{MaxNE}}}}}
\newcommand{\ULF}{{\ensuremath{\text{\acro{ULF}}}}}
\newcommand{\BLF}{{\ensuremath{\text{\acro{BLF}}}}}
\renewcommand{\bcc}{{\ensuremath{\text{\acro{BCC}}}}}
\renewcommand{\OPT}{\mathrm{OPT}}
\renewcommand{\Uni}{\mathrm{U}}
\renewcommand{\Bi}{\mathrm{B}}
\newcommand{\subtree}[4]{%
	\begin{pgfonlayer}{background}
	\begin{scope}[shift=(#1),rotate=#2]
		\coordinate (subtree0) at (.6*#3,0);
		\coordinate (subtree1) at (0,0);
		\coordinate (subtree2) at (-30:#3);
		\coordinate (subtree3) at (30:#3);
		\path (subtree1) edge (subtree2)
			(subtree2) edge (subtree3)
			(subtree3) edge (subtree1);
		\node at (subtree0) {#4};
	\end{scope}
	\end{pgfonlayer}}
\newcommand{\TreeStructureSkel}{%
	\tikzset{node distance=2.5cm}
	\node[player] (c) { };
	\node[player] (u1) [right = 2.5 of c] {$u_1$};
	\node[player] (u2) [left = 2.5 of c] {$u_2$};
	\node[player] (u3) [below left = of c] {$u_3$};
	\node[player] (uN) [below right = of c] {$u_N$};
	\subtree{u1}{45}{2.5}{$T_1$}
	\subtree{u3}{225}{2}{$T_3$}
	\subtree{uN}{315}{2}{$T_N$}
	\path (c) edge node[swap] {$e_N$} (uN);
	\node (d1) [right = .8 of u3] { };
	\node (d2) [left = .8 of uN] { };
	\path (d1) edge[loosely dotted] (d2);}
\begin{document}

\gentitlepage%
{The Price of Anarchy for\\ Network Formation\\ in an Adversary Model}%
{}%
{Lasse Kliemann\\[1em]
\smaller Department of Computer Science\\
Christian-Albrechts-Universität zu Kiel\\
Christian-Albrechts-Platz 4\\
24118 Kiel, Germany\\
E-Mail:~\href{mailto:lki@informatik.uni-kiel.de}{lki@informatik.uni-kiel.de}\\
Web:~\url{http://lasse-kliemann.name}}%
{%
Keywords: network formation; equilibrium; price of anarchy;
unilateral link formation; bilateral link formation;
adversary model; network robustness.\\[1em]
Permanent ID of this document: 92050128-8870-4eeb-9396-e969b41d7680.\\
This version is dated 2012-02-21.\\[1em]
\textbf{This is an extended and improved version of my article\\ in \textit{Games}, vol.~2, 2011~\cite{Kli11a}.}\\
Some constants have been improved in this version.
More importantly, 
it contains an additional section (\autoref{sec:bilateral}) on bilateral link formation.}

\newpage
\tableofcontents

\newpage
\begin{abstract}
\noindent
We study network formation
with $n$ \term{players}
and \term{link cost} $\alpha>0$.
After the network is built,
an adversary randomly deletes one link according to a certain probability distribution.
Cost for player $v$ incorporates 
the expected number of players to which $v$ 
will become disconnected.
We show existence of \term{equilibria}
and a \term{price of stability} of $1+o(1)$
under moderate assumptions on the adversary and $n\geq 9$.
\par
As the main result, we prove bounds on the \term{price of anarchy}
for two special adversaries:
one removes a link chosen uniformly at random,
while the other removes a link 
that causes a maximum number of player pairs to be separated.
For \term{unilateral} link formation
we show a bound of $O(1)$ on the price of anarchy for both adversaries,
the constant being bounded by $10+o(1)$ and $8+o(1)$, respectively.
For \term{bilateral} link formation 
we show $O(1+\sqrt{\sfrac{n}{\alpha}})$ for one adversary (if $\al>\frac{1}{2}$),
and $\Theta(n)$ for the other
(if $\al>2$ considered constant and $n \geq 9$).
The latter is the worst that can happen
for any adversary in this model (if $\al=\Om(1)$).
This points out substantial differences between unilateral
and bilateral link formation.
\end{abstract}

\section{Network Formation}
In network formation,
a multitude of individuals, called \term{players},
form a network in such a way that each player decides for herself 
to which other players she would like to connect.
So players can be considered vertices\footnote{We use \enquote{player} and \enquote{vertex} synonymously.}
of a (to-be-built) network.
Any outcome of this, \ie any network,
can be evaluated from the point of view of each player
via an \term{individual cost}.
Individual cost comprises \term{building cost},
proportional to the number of links\footnote{%
We use \enquote{link} and \enquote{edge} synonymously.}
built by the player,
and \term{indirect cost},
which expresses properties of the network.
\term{Social cost} is the sum of individual cost over all players.
There are two parameters: $n$ is the number of players
and $\alpha>0$ is the cost of a link.
Another crucial feature is how links can be formed:
unilaterally or bilaterally.
Under \term{unilateral link formation}
a player can connect to any other player
and is charged the amount of $\alpha$ for each link.
Under \term{bilateral link formation},
the consent of both endpoints is required
and if they both agree, then they pay $\alpha$ each.
When the network is so that 
no player sees a way to improve her individual cost,
we speak of an \term{equilibrium}.
The finer facets of the equilibrium concept 
have to be chosen according to the link formation rule:
\term{Nash equilibrium} is well suited for unilateral link formation,
whereas for bilateral link formation,
\term{pairwise Nash equilibrium} or \term{pairwise stability} are better suited;
definitions will be given later.
When the social cost is minimal, we speak of an \term{optimum}.\footnote{%
Optimal networks are also called \enquote{efficient} in the literature.}
The \term{price of anarchy}~\cite{KP99,Pap01} measures overall performance loss
due to distributed operation,
compared to when a central authority would enforce an optimum:
the price of anarchy is the \emphasis{worst-case} ratio of the social cost of an equilibrium
to that of an optimum.
One is interested in bounds on the price of anarchy,
especially upper bounds.
The price of anarchy is a static measure
in the sense that it does not try to assess
how a network might evolve over time.
It instead builds upon the assumption that
equilibrium networks will emerge
from evolutionary processes.
A related concept is \term{price of stability},
\ie the \emphasis{best-case} ratio of the social cost of an equilibrium
to that of an optimum.
This work's focus is on upper bounds on the price of anarchy,
although we prove tight bounds on the price of stability 
and some structural results along the~way.
\paragraph{Our Contribution.}
We study the price of anarchy in an \emphasis{adversary model}.
After the network is built, an adversary deletes
exactly one link from it.
The adversary is modeled by a random experiment;
hence in general there is an uncertainty which link will be deleted,
but players know the probability distribution according 
to which the adversary chooses the link to destroy.
Indirect cost of a player $v$ is defined as the expected number of players 
to which $v$ will lose connection when the adversary strikes.
Formally, an adversary is a mapping from networks
to probability distributions on the edges of the particular network.
\par
Although it appears limiting that the adversary can only destroy one link,
this model already is challenging to analyze.
It is a contribution to the understanding of how networks are formed
when it is important that every vertex can reach every other vertex,
for example for data transmission or delivery of~goods.
\par
After preparations and discussion of related work 
(\autoref{sec:model-framework} to \autoref{sec:bridge-tree})
we start out with a simple $O(1+\frac{n}{\alpha})$ bound 
on the price of anarchy for any adversary 
and independent of the link formation rule and the equilibrium concept,
but under the assumption that equilibria only have a linear (in $n$) number of edges
(\autoref{sec:simple-bound}).
This assumption will later be shown to be valid
for the two special adversaries under consideration
and unilateral link formation.
In the three sections that follow (\autoref{sec:uni-opt-ne-stability} 
to \autoref{sec:uni-smart}), we consider unilateral link formation.
We constructively show existence of Nash equilibria
under some moderate assumptions on the adversary and $n\geq 9$,
including a co-existence of two very different topologies
for the same range of parameters.
A~$1+o(1)$ bound on the price of stability follows from our existence results.
Then for two specific adversaries
we improve the $O(1+\frac{n}{\al})$ bound on the price of anarchy to $O(1)$.
These two adversaries are chosen to represent extreme cases:
the first one, called \term{simple-minded},
chooses one link uniformly at random.
The second one, called \term{smart},
chooses uniformly at random 
from the set of those links whose removal causes 
a maximum number of vertex pairs to be separated.
The proof techniques for the simple-minded adversary
are roughly similar to what has been used
for other models before, \eg~\cite{FLM+03},
namely we relate to the diameter of equilibria.
For the smart adversary, a new approach has to be taken;
it works by an appropriate decomposition of the graph.
\par
Finally, we consider bilateral link formation (\autoref{sec:bilateral}).
The constructions done in \autoref{sec:uni-opt-ne-stability} 
for unilateral link formation
carry over with little effort,
and so does the bound on the price of stability.
For the simple-minded adversary,
we show an $O(1+\sqrt{\sfrac{n}{\alpha}})$ bound on the price of anarchy if $\al > \frac{1}{2}$
and $O(1+\sqrt{\sfrac{n}{\alpha^{1.5}}})$ otherwise.
For the smart adversary,
we show an $O(1+\frac{n}{\alpha})$ upper bound
and an $\Omega(1+\frac{n}{\alpha})$ lower bound,
the latter requiring $\alpha>2$ and $n \geq 9$.
So if $\alpha$ is considered a constant and $\alpha>2$ (and $n \geq 9$),
the price of anarchy for the smart adversary
jumps from $O(1)$ to the worst possible, namely $\Omega(n)$,
when switching from unilateral to bilateral link formation.
\par
We also consider convexity of cost.
This is important for the relation between 
the two equilibrium concepts used for
bilateral link formation.
If cost is convex, 
then pairwise Nash equilibrium and pairwise stability are equivalent.
We prove convexity for the simple-minded adversary.
For the smart adversary, we disprove convexity;
yet we have to leave open the adjacent question
whether pairwise Nash equilibrium and pairwise stability
in fact diverge in this case.
\paragraph{Open Problems.}
Tight bounds on the price of anarchy for other adversaries,
or for a general adversary are left for future work.
As one of the most intriguing open problems,
we leave the case of an adversary removing more than one link.
Since our proofs rely heavily on the restriction of only one link being removed,
this is expected to be a new challenge.
Finally, the jump from $O(1)$ to $\Omega(n)$ 
induced by a switch from unilateral to bilateral link formation
raises the question how our model behaves with other link formation rules
\eg in the recently introduced models with transfers~\cite{BJ07}.

\section{Model Framework}
\label{sec:model-framework}%
We give a rigorous description of the model framework
that will be used in the following.
Let $n\geq 3$
and $V$ a set of $n$ vertices, say $V=\setn{n}\df\set{1,\hdots,n}$.
Each vertex represents an individual, called a \term{player}.
Each player names a list of other players to which she would like to build an edge.
The decisions of player $v$ are collected in a vector $S_v\in\OI^n$,
with $S_{vw}=1$ meaning that $v$ would like to have the edge $\set{v,w}$
in the network.
Such an $S_v$ is called a \term{strategy} for player~$v$.
A vector of strategies $S=(\eli{S}{n})$, one for each player,
is called a \term{strategy profile}.
A~strategy profile can be written as a matrix $\OI^{n\times n}$ 
and interpreted as the adjacency matrix of a directed graph $\vG(S)=(V,\vE(S))$.
Then $(v,w)\in \vE(S)$ if and only if 
player $v$ would like to have the edge $\set{v,w}$.
We will often work with this representation of strategy profiles.
Denote $\S(n)$ the set of all strategy profiles for $n$ players.
We use sets $F\subseteq V\times V$ to denote strategy changes.
Define $S+F$ and $S-F$ by
setting for all $x,y\in V$
\begin{equation*}
(S+F)_{xy} \df \begin{cases}
1 & \text{if $(x,y)\in F$}\\
S_{xy} & \text{otherwise}
\end{cases}\quad\text{and}\quad
(S-F)_{xy} \df \begin{cases}
0 & \text{if $(x,y)\in F$}\\
S_{xy} & \text{otherwise}
\end{cases}\enspace.
\end{equation*}
If $F=\set{(v,w)}$, we write $S+(v,w)$ and $S-(v,w)$.
For instance, $S+(v,w)$ means that we add to $S$
the request of player $v$ for the edge $\set{v,w}$.
\par
The graph which is actually built is called the \term{final graph} and denoted $G(S)$.
We are interested in two different versions of the final graph:
\begin{itemize}
\item The \term{unilateral final graph},
denoted $G^\Uni(S)\df(V,E^\Uni(S))$, where
\begin{equation*}
E^\Uni(S)\df\setst{\set{v,w}}{S_{vw}=1\lor S_{wv}=1}\enspace.
\end{equation*}
So the wish of one endpoint, either $v$ or $w$, 
is enough to have $\set{v,w}$ in the final graph.
We also call this \term{unilateral link formation} (\ULF).
Throughout \autoref{sec:uni-opt-ne-stability} to \autoref{sec:uni-smart}
we will only consider \ULF.
\item The \term{bilateral final graph},
denoted $G^\Bi(S)\df(V,E^\Bi(S))$, where
\begin{equation*}
E^\Bi(S)\df\setst{\set{v,w}}{S_{vw}=1\land S_{wv}=1}\enspace.
\end{equation*}
So both endpoints, $v$ and $w$, have to agree on having $\set{v,w}$ in the final graph.
Otherwise, it will not be built.
We also call this \term{bilateral link formation} (\BLF).
We will consider \BLF\ in \autoref{sec:bilateral}.
\end{itemize}
We omit the \enquote{$\Uni$} and \enquote{$\Bi$} superscripts
when a statement or definition addresses both versions,
or when a restriction to a particular version is clear from context.
\par
We speak of \term{selling} (or \term{deleting}, \term{removing}) an edge $e$ 
if a player changes her strategy so that $e$ is no longer
part of the final graph.
We speak of \term{buying} (or \term{adding}, \term{building}) an edge $e$ 
if a player changes her strategy so that $e$ is then 
part of the final graph.
\par
Fix parameters $n$ and $\alpha>0$.
Given a strategy profile $S\in\S(n)$ 
each player experiences a cost $C_v(S)$, her \term{individual cost}.
It is comprised of \term{building cost} plus \term{indirect cost}.
Building cost is computed by counting $\alpha$ for each edge
that $v$ requested.\footnote{%
This deviates from the definition in~\cite{Kli10a}.
However, for \term{essential} strategy profiles (defined below),
the definitions given here and in~\cite{Kli10a} coincide.}
Indirect cost can be defined in many different ways 
and usually captures properties of the final graph,
we denote it $I_v(G(S))$
and sometimes just $I_v(S)$ for a streamlined notation.
Denoting $\nrm{S_v}\df\sum_{w\in V} S_{vw}$, 
we can write out the individual cost
$C_v(S) \df \nrm{S_v} \, \alpha + I_v(G(S))$.
An equivalent concept found in the literature is \term{payoff}:
properties of the final graph are expressed by \term{income},
and payoff is income minus building~cost.
\par
The indirect cost $I_v(\cdot)$ is a placeholder 
to be filled in in order to have a concrete model.
For example, the model in~\cite{FLM+03} uses $I_v(G)=\sum_{w\in V}\dist_G(v,w)$,
where the \term{distance} $\dist(v,w)$ is the length of a shortest path
between $v$ and $w$ and equals $\infty$ if there is no such path.
We call this the \term{sum-distance model}.
The price of anarchy in the sum-distance model is particularly well-studied.
We will introduce our definition of indirect cost
in \autoref{sec:adversary-model}.
\par
We call indirect cost \term{anonymous} if
for each final graph $G=(V,E)$
and each graph automorphism $\phi:V\map V$ of $G$,
we have $I_v(G)=I_{\phi(v)}(G)$ for all $v\in V$.
In other words, anonymity of indirect cost means that
$I_v(G)$ does not depend on $v$'s identity,
but only on $v$'s position in the final graph $G$.
This is of importance in particular if $G$ has symmetry.
For instance, if $G$ is a cycle, 
then all vertices experience the same indirect cost.
If $G$ is a path, then both endpoints experience the same indirect cost.
\par
The \term{social cost} of $S$ is $\SC(S)\df\sum_{v\in V}C_v(S)$.
When we sum up the building cost over all players, we also speak of \term{total building cost};
when we sum up the indirect cost over all players, we speak of \term{total indirect cost}.
Hence social cost is total building cost plus total indirect cost.
A~strategy profile $S^*$ is called \term{optimal} if 
$\SC(S^*)=\min_{S\in\S(n)} \SC(S)$.
This is with respect to fixed $\alpha$;
denote $\OPT(n,\alpha)$ the social cost of an optimum
for given $n$ and~$\alpha$.
An undirected graph $G$ is called optimal
if $G=G(S^*)$ for an optimal~$S^*$.
\par
A strategy profile $S$ is called \term{essential}\footnote{%
The \enquote{essential} term was used in~\cite{BG00a} in the context of ULF.
In~\cite{CG07}, the concept is called \enquote{non-superfluous} in the context of BLF.
In~\cite{Kli10a}, \enquote{clean} was used the way we use \enquote{essential} here.}
if for all \mbox{$v,w\in V$}
the following implication holds:
$S_{vw}=1 \Longrightarrow S_{wv}=0$ when using \ULF,
and $S_{vw}=1 \Longrightarrow S_{wv}=1$ when using \BLF.
In other words,
an essential strategy profile does not contain \term{unnecessary} or \term{useless} requests,
respectively.
In an essential strategy profile,
building cost deserves its name in the following sense:
players pay only for edges that would not be in the final graph
without this payment (relevant for \ULF)
and which actually appear in the final graph (relevant for \BLF).
This means that in \ULF,
each edge in the final graph is paid for $\alpha$ by exactly one of its endpoints,
namely the one who requested it.
In \BLF,
each edge in the final graph is paid for $\alpha$ by each of its endpoints.
Hence if $S$ is essential then
with \ULF\ $\SC(S) = \card{E(S)} \, \alpha + \sum_{v\in V} I_v(G(S))$,
and with \BLF\ $\SC(S) = 2 \card{E(S)} \, \alpha + \sum_{v\in V} I_v(G(S))$.
Social cost then only depends on the final graph.
In \BLF, an essential $S$ is fully determined by its final graph $G(S)$.
\par
For each strategy profile $S$, 
dropping all unnecessary or useless requests
results in an essential strategy profile $S'$ with the same final graph
and with the same or a smaller individual cost for each player.
Moreover, it is easy to see 
that if $S$ is an equilibrium of any of the three kinds introduced below,
then $S'$ is an equilibrium of the same kind.
It is hence reasonable to restrict to essential strategy profiles,
and we will do so in the following.
\par
\ULF\ and \BLF\ are \emphasis{similar} regarding link removal:
in both,
if $v$ pays for a link~$e$, then $v$ can have the link removed unilaterally,
\ie by changing her strategy whilst the strategies of all other players are maintained.
(For \ULF, we rely on the restriction to essential strategy profiles here.)
\ULF\ and \BLF\ are \emphasis{different} regarding link formation:
there is no way for $v$ to form a link unilaterally when \BLF\ is in effect.
(We rely on the restriction to essential strategy profiles again here.)
However, with \ULF, a player can form any link unilaterally.
\par
Recall that we can specify strategy profiles as directed graphs.
Furthermore,
since the social cost is fully determined by the final graph
(since we restrict to essential strategy profiles),
it suffices to consider the final graph 
(which is an undirected graph) in places
where only the social cost is relevant.
It is also sufficient to consider the final graph
when dealing with \BLF,
since the final graph under \BLF\ fully specifies 
the underlying strategy profile
(restricting to essential ones).
\par
A strategy profile $S$ is called a \term{Nash equilibrium} (\NE)
if no player can strictly improve her individual cost 
by changing her strategy
given the strategies of the other players, \ie
\begin{equation*}
C_v(S+A-D) \geq C_v(S) \quad\quad \forall A,D\subseteq\set{v}\times V \quad \forall v\in V\enspace.
\end{equation*}
Denote the set of all \NE\ for given $n$ and $\alpha$ by $\NE(n,\alpha)$.
An undirected graph $G$ is called a \NE\
if there exists a \NE\ $S$ such that $G=G(S)$.
The \term{price of anarchy} (with respect to \NE)
is the social cost 
of a worst-case \NE\ divided by the social cost of an optimum,
\ie
\begin{equation*}
\frac{\max_{S\in\NE(n,\alpha)} \SC(S)}{\OPT(n,\alpha)} 
\enspace.
\end{equation*}
When we replace \enquote{$\max$} for \enquote{$\min$},
we speak of \term{price of stability}.
Both notions are meant relative to a given $n$ and $\al$.
They extend naturally to other equilibrium concepts, instead of \NE.
\par
We use \NE\ as equilibrium concept for \ULF.
However, \NE\ is \emphasis{not} an adequate equilibrium concept for \BLF.
With \BLF, the empty graph is a \NE\
regardless of other properties of the model,
since link formation cannot happen unilaterally.
This not only appears unreasonable,
it also trivially pushes the price of anarchy to $\infty$ for many models,
including the sum-distance model.
A~remedy  is to introduce a minimum of cooperation:
absence of a link from the final graph
requires the additional justification that adding this link
would be an impairment to at least one of its endpoints.
Otherwise it shall be built.
This is formalized in the following definition.
A~strategy profile $S$ is called a \term{pairwise Nash equilibrium} (\PNE)
if it is a \NE\ and for all $v,w\in V$
such that $\set{v,w}\not\in E(S)$ the following implication holds:\footnote{%
This can also be found in the literature with strict inequality in the premise.
Both variants have their advantages and disadvantages.
The advantage of ours is that it rules out the empty strategy profile 
as \PNE\ whenever indirect cost $\infty$ is assigned to disconnected final graphs.
A disadvantage is that in order to relate \NE\ to \PNE,
as we will do in \autoref{sec:bilateral},
a refinement of \NE\ is required, 
which we call \enquote{maximal Nash equilibrium} (\MNE).
It will be introduced later.}%
\begin{equation}
\label{eqn:pne}%
C_v(S+(v,w)+(w,v)) \leq C_v(S)
\:\Longrightarrow\:
C_w(S+(v,w)+(w,v)) > C_w(S) 
\enspace.
\end{equation}
We call an undirected graph $G$ a \PNE\ if there
exists a strategy profile $S$ being a \PNE\ and $G=G(S)$.
\par
A related concept is \term{pairwise stability} (\PS).
A strategy profile $S$ is called \term{pairwise stable} (\PS)
if condition~\eqref{eqn:pne} holds for all $\set{v,w}\not\in E(S)$
and if for all $\set{v,w}\in E(S)$:\footnote{%
In fact, one of the two conditions, say $C_v(S-(v,w))\geq C_v(S)$, 
clearly suffices.}
\begin{equation*}
C_v(S-(v,w)) \geq C_v(S)
\quad \text{and} \quad C_w(S-(w,v)) \geq C_w(S)
\enspace.
\end{equation*}
We call an undirected graph $G$ \PS\ if there
exists a strategy profile $S$ being \PS\ and $G=G(S)$.
So, with \PS, only single-link deviations have to be considered.
\par
We make the convention that whenever we speak of \NE\ (or \MNE, introduced below),
we mean that relative to \ULF.
Whenever we speak of \PNE\ or \PS,
we mean that relative to \BLF.
\par
Clearly, a \PNE\ is also \PS,
and so for fixed $n$ and $\al$,
the price of anarchy with respect to \PNE\ is upper-bounded
by the price of anarchy with respect to \PS.
The converse holds if cost is convex
on the set of \PS\ strategy profiles.
Convexity of cost relates removal of multiple links
to removal of each of those links alone.
This addresses the difference between \PNE\ and \PS:
in the former, removal of multiple links has to be considered,
whereas the latter is only concerned with removal of single links.
Let $v\in V$ and $S$ a strategy profile.
We call $C_v$ \term{convex in $S$}
if for all $\set{w_1,\hdots,w_k}\subseteq V$ we have
\begin{equation*}
C_v\big(S-(v,w_1)-\hdots-(v,w_k)\big) - C_v(S)
\geq \sum_{i=1}^k \big(C_v(S-(v,w_i)) - C_v(S)\big)\enspace,
\end{equation*}
or, equivalently,
\begin{equation}
\label{eqn:def-convex-2}%
I_v\big(S-(v,w_1)-\hdots-(v,w_k)\big) - I_v(S)
\geq \sum_{i=1}^k \big(I_v(S-(v,w_i)) - I_v(S)\big)\enspace.
\end{equation}
We call $C_v$ \term{convex} on a set of strategy profiles $\mathcal{S}$,
if it is convex in every $S\in\mathcal{S}$.
We call $C_v$ \term{convex} if it is convex on $\S(n)$, 
\ie the set of all strategy profiles for the given number $n$ of players.
We say that \enquote{cost is convex} if $C_v$ is convex for each player $v$.
It was shown by Corbo and Parkes~\cite{CP05},
their proof being based on a result by Calv\'{o}-Armengol 
and \.{I}lkili\c{c}~\cite{CI05},
that cost is convex in the sum-distance model.
Hence, in the sum-distance model \PNE\ and \PS\ coincide.
\par
For both, \PNE\ and \PS, condition~\eqref{eqn:pne} implies
that if indirect cost $\infty$ is assigned to a disconnected final graph,
all \PNE\ and \PS\ graphs are connected.
(The same holds for \NE.)
\par
To study the relation between \NE\ (with \ULF)
and \PNE\ (with \BLF)
we need a refinement of \NE,
not widely known in the literature.
We call a \NE\ $S$ a \term{maximal Nash equilibrium} (\MNE)
if $C_v(S+(v,w_1)+\hdots+(v,w_k)) > C_v(S)$
for all $\set{v,w_1},\hdots,\set{v,w_k}\not\in E(S)$.
That is, we exclude the possibility 
that a player can buy additional links so
that the gain in her indirect cost and the additional building cost
nullify each other.
We will require this notion in \autoref{sec:bilateral}.
\begin{remark}
\label{rem:mne}%
A \NE\ $S$ is maximal
if indirect cost $I_v(S)$ has its minimum possible value for all players $v$
(which is $0$ for most models).
A \NE\ is also maximal,
if there exists $\eps>0$
such that it is still a \NE\ for link cost $\alpha-\eps$ instead of $\alpha$.
Hence, if $S$ is a \NE\ for all $\alpha\geq f(n)$,
for some function~$f$,
this implies that $S$ is a \MNE\ for all~$\alpha>f(n)$.
\end{remark}
\par\pagebreak
We require some basic graph-theoretic notions.
Let an undirected graph $G=(V,E)$ be given,
that is, $V$ is a finite set and $E\subseteq {V\choose 2}$.
A~\term{walk} of length $\ell$ is a sequence of vertices $W=(\elix{v}{0}{\ell})$
such that $\set{v_{i-1},v_i}\in E$ for all $i\in\setn{\ell}$.
Denote 
$V(W)\df\set{\elix{v}{0}{\ell}}$ its vertices
and $E(W)\df\setst{\set{v_{i-1},v_i}}{i\in\setn{\ell}}$ its edges.
The walk is called a \term{path} if all its vertices are distinct,
that is, if $\card{V(W)}=\ell+1$.
The walk is called a \term{cycle} if
it has at least length $3$ (\ie $\ell \geq 3$) and
all its vertices except the last are distinct
(\ie $\card{\set{\elix{v}{0}{\ell-1}}}=\ell$)
and the walk is closed (\ie $v_0=v_\ell$).
Sometimes we use a notation
that gives names to the edges in the walk,
like $(v_0,e_1,v_1,e_2,\hdots,e_{\ell},v_{\ell})$.
If $C$ is a cycle
and $e=\set{u,w}$ is an edge with $u,w\in V(C)$
but $e\not\in E(C)$,
we call $e$ a \term{chord}.
For a subset $W\subseteq V$ denote $G[W]\df(W,\,{W \choose 2} \cap E)$
the \term{induced subgraph} of $W$,
or the \term{subgraph induced} by $W$.
If $G$ is a graph, then $V(G)$ denotes its set of vertices
and $E(G)$ its set of edges;
this is useful when $G$ was not introduced writing \enquote{$G=(V,E)$}.
More graph-theoretic notions will be introduced
along the way as we need them.
\par
One might suggest using multigraphs instead of graphs,
since in our adversary model, connectivity under removal of edges is relevant.
However, none of our results becomes false when we allow multigraphs.
Where not obvious, a remark on this is made.
So we can stick to the simpler notion of graphs.
\par
In order to not have to introduce names for all occurring constants,
we use \enquote{$O$} and \enquote{$\Omega$} notation.
For our results, we use this notation in the following understanding 
(it does not necessarily apply to all cited results).
We write
\enquote{$x=O(y)$}
if there exists a constant $c>0$ such that
$x \leq c y$.
The constant may only depend
on other constants and is in particular
independent of the non-constant quantities that constitute $x$ and $y$,
\eg parameters $n$ and $\al$.
We do not implicitly require that some quantities, \eg $n$,
have to be large.
Analogously, we write
\enquote{$x=\Omega(y)$}
if there exists a constant $c>0$ such that
$x \geq c y$.
Note that
\enquote{$O$} indicates an upper bound, 
making no statement about a lower bound;
while \enquote{$\Omega$} indicates a lower bound,
making no statement about an upper bound.
We write $x=\Theta(y)$ if $x=O(y)$ and $x=\Omega(y)$;
the constants used in the \enquote{$O$}
and the \enquote{$\Omega$} statement may be different, of course.
\par
The \enquote{$o$} notation is only used in one form,
namely $o(1)$ substituting a quantity that tends to $0$
when $n$ tends to infinity,
regardless whether other parameters are fixed or not.
Whenever we write \enquote{$o(1)$} in an expression,
it is meant as an upper bound, making no statement about a lower bound.

\section{Adversary Model}
\label{sec:adversary-model}%
An \term{adversary} $\mathcal{A}$ is a mapping
assigning to each graph $G=(V,E)$
a probability measure $\P_G^{\mathcal{A}}$ on the edges $E$ of $G$.
Given a connected graph $G$,
the \term{relevance} of an edge $e$ for a player $v$ is the number 
of vertices that can, starting at $v$, \emphasis{only} be reached via~$e$.
We denote the relevance by $\rel_G(e,v)$
and the sum of all relevances for a player 
by $\Rel_{G}(v)\df\sum_{e\in E} \rel_{G}(e,v)$.
An edge of a connected graph is called a \term{bridge}
if its removal destroys connectivity,
or equivalently, if it is no part of any cycle.
The relevance $\rel_G(e,v)$ is $0$ iff $e$ is \emphasis{not} a bridge.
Given a strategy profile $S$ where $G(S)$ is connected,
we define the \term{individual cost} of a player $v$ by
\begin{equation*}
C_v(S) \df \nrm{S_v} \, \alpha 
	+ \sum_{e\in E(S)} \rel_{G(S)}(e,v) \,\, \Pr[_{G(S)}^{\mathcal{A}}]{\set{e}}\enspace.
\end{equation*}
The indirect cost is the expected number of vertices
to which $v$ will lose connection when exactly one edge is removed 
from $G(S)$ randomly and according to the probability measure given by 
the adversary~$\mathcal{A}$.
For this indirect cost,
we use the term \term{disconnection cost} in the following
instead of \enquote{indirect cost}.
We define the indirect cost to be $\infty$ when $G(S)$ is not connected,
so we can concentrate on connected graphs in our study of
optima and equilibria.
We usually omit the \enquote{$G(S)$} subscripts
and also the \enquote{$\mathcal{A}$} superscript;
we also write \enquote{$E$} instead of \enquote{$E(S)$}
and \enquote{$m$} for the number of edges, \ie $m=\card{E(S)}$.
\begin{remark*}
Since $\infty$ is assigned to disconnected final graphs,
optima, NE, and PS graphs are connected.
\end{remark*}
\begin{proof}
This is clear for optima and also for NE (under ULF):
since a connected graph has finite indirect cost,
a player would always choose to build enough links 
in order to make the graph connected.
For PS (under BLF) it is a consequence of having non-strict inequality
in the premise of~\eqref{eqn:pne}.
\end{proof}
\par
The \term{separation} of an edge $e$, denoted $\sep(e)$,
is the number of ordered vertex pairs that will be separated by the removal of $e$.
For a bridge $e$, 
denote $\nn(e)$ the number of vertices in the component of $G-e$
that has a minimum number of vertices;
we have $\nn(e)\leq\sfloor{\frac{n}{2}}$.
If~$e$ is not a bridge, we define $\nn(e)\df0$.
Then $\sep(e) = 2 \nn(e) \, (n-\nn(e))$
and also $\sep(e) = \sum_{v\in V} \rel(e,v)$.
If $e$ is a bridge, then $\sep(e) \geq 2 \, (n-1)$.
We can express the social cost now
(total building cost given for \ULF):
\begin{equation*}
\SC(S) \df \sum_{v\in V} C_v(S)
= m \, \alpha + \sum_{v\in V} 
\sum_{e\in E} \rel(e,v) \, \Pr{\set{e}} 
= m \, \alpha + \sum_{e\in E} \sep(e) \, \Pr{\set{e}}
\enspace.
\end{equation*}
\par
We call an adversary \term{symmetric} if 
for a fixed graph, the probability of an edge only
depends on its separation, \ie
$\sep(e) = \sep(e')$ implies $\Pr{\set{e}) = \P(\set{e'}}$ for all $e,e'$.
The following proposition is proved straightforwardly.
\begin{proposition}
\label{prop:symmetric-is-anonymous}%
A symmetric adversary induces anonymous disconnection cost.
\end{proposition}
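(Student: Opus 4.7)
The plan is to unwind the two definitions and show that any graph automorphism of the final graph permutes edges in a way that preserves both relevance and adversary probability, so that the sum defining disconnection cost is invariant under the automorphism.

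First I would fix a connected graph $G=(V,E)$ and an automorphism $\phi:V\to V$ of $G$, and note that $\phi$ induces a bijection $E\to E$ via $e=\{u,w\}\mapsto\phi(e)\df\{\phi(u),\phi(w)\}$. The first key observation is that $\phi$ preserves relevance: for every $e\in E$ and $v\in V$,
\begin{equation*}
\rel_G(e,v) \;=\; \rel_G(\phi(e),\phi(v))\enspace.
\end{equation*}
This is because $\phi$ is an isomorphism from $G-e$ to $G-\phi(e)$, so the set of vertices in $G-e$ that are disconnected from $v$ is mapped bijectively by $\phi$ to the set of vertices in $G-\phi(e)$ disconnected from $\phi(v)$; these are exactly the vertices counted by $\rel_G(e,v)$ and $\rel_G(\phi(e),\phi(v))$ respectively.

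The second key observation, obtained by summing the identity above over all $v\in V$, is that $\sep(e)=\sep(\phi(e))$. Since the adversary is symmetric, this gives $\Pr{\{e\}}=\Pr{\{\phi(e)\}}$ for every $e\in E$. Putting everything together and using the change-of-variables $e' = \phi(e)$ in the sum,
\begin{equation*}
I_{\phi(v)}(G) \;=\; \sum_{e'\in E} \rel_G(e',\phi(v))\,\Pr{\{e'\}}
\;=\; \sum_{e\in E} \rel_G(\phi(e),\phi(v))\,\Pr{\{\phi(e)\}}
\;=\; \sum_{e\in E} \rel_G(e,v)\,\Pr{\{e\}}
\;=\; I_v(G)\enspace,
\end{equation*}
which is the claimed anonymity.

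There is no real obstacle; the only thing to be a little careful about is that the preservation of relevance uses the full isomorphism $\phi$ (not just its action on edges), so it is worth stating explicitly why $\phi$ restricts to a graph isomorphism $G-e\to G-\phi(e)$ before invoking it.
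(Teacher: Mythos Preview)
Your proposal is correct and follows essentially the same route as the paper: establish $\rel(e,v)=\rel(\phi(e),\phi(v))$ and $\sep(e)=\sep(\phi(e))$, invoke symmetry for $\Pr{\{e\}}=\Pr{\{\phi(e)\}}$, and reindex the sum. The only cosmetic difference is that the paper handles bridges and non-bridges separately to prove these identities, whereas you argue more directly via the isomorphism $G-e\cong G-\phi(e)$ and obtain the separation identity by summing the relevance identity over~$v$.
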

\begin{proof}
Let $G=(V,E)$ be connected
and $\phi:V\map V$ a graph automorphism of $G$.
If $e=\set{v,w}$ is a non-bridge, then $\phi(e)\df\set{\phi(v),\phi(w)}$
is a non-bridge as well, and so $\sep(e)=0=\sep(\phi(e))$
and $\rel(e,v)=0=\rel(\phi(e),\phi(v))$ for all $v\in V$.
Let $e=\set{v,w}$ be a bridge
and $G_1$, $G_2$ be the two components of $G-e$.
Then $\phi(e)$ is a bridge as well.
Let $G'_1$, $G'_2$ be the two components of $G-\phi(e)$.
Then $\phi(V(G_i)) = V(G'_i)$ for all $i\in\set{1,2}$,
or $\phi(V(G_i)) = V(G'_j)$ for all $i,j\in\set{1,2}$,
$i\neq j$.
In either case, $\sep(e)=\sep(\phi(e))$,
and also
$\rel(e,v)=\rel(\phi(e),\phi(v))$ for all $v\in V$.
In total, we have for all $e\in E$ and all $v\in V$:
\begin{align}
\label{eqn:automorphism-sep} \sep(e) & = \sep(\phi(e)) \\
\label{eqn:automorphism-rel} \rel(e,v) & = \rel(\phi(e),\phi(v))
\end{align}
\par
Let $v\in V$.
Then:
\begin{align*}
I_{\phi(v)}(G)
	& = \sum_{e\in E} \rel(e,\phi(v)) \, \Pr{\set{e}} \\
& = \sum_{e\in E} \rel(\phi(e),\phi(v)) \, \Pr{\set{\phi(e)}} 
	& \text{$\phi$ is bijective} \\
& = \sum_{e\in E} \rel(e,v) \, \Pr{\set{\phi(e)}}
	& \text{by~\eqref{eqn:automorphism-rel}} \\
& = \sum_{e\in E} \rel(e,v) \, \Pr{\set{e}}
	& \text{by~\eqref{eqn:automorphism-sep} and symmetric adversary} \\
& = I_v(G)\enspace.
\tag*{\qedhere}
\end{align*}
\end{proof}
\par
The converse of \autoref{prop:symmetric-is-anonymous} does not hold,
as shown in \autoref{fig:anonymous-nonsymmetic}
\vpageref{fig:anonymous-nonsymmetic}.
\autoref{prop:symmetric-is-anonymous} is useful
since symmetry can be recognized
directly from the definitions of
the two special adversaries studied later,
and so we know that they induce anonymous disconnection cost.
\begin{figure}
\centering
\begin{tikzpicture}
 	\node (cl) {};
	\foreach \i in {0,72,144,216,288}{
		\node[player] (cl\i) at ([shift={(\i:1.5cm)}] cl) { };
	};
	\path (cl0) edge (cl72)
		(cl72) edge (cl144)
		(cl144) edge (cl216)
		(cl216) edge (cl288)
		(cl288) edge (cl0);
	\node[player] at (cl0) {$u$};
	\node[player] (c) [right = 3.5cm of cl] {$v$};
 	\node[player] (cr) [right = 3.5cm of c] {};
	\foreach \i in {0,90,180,270}{
		\node[player] (cr\i) at ([shift={(\i:1.5cm)}] cr) { };
	};
	\path (cr0) edge (cr90)
		(cr90) edge (cr180)
		(cr180) edge (cr270)
		(cr270) edge (cr0)
		(cr270) edge (cr)
		(cr90) edge (cr);
	\node[player] at (cr180) {$w$};
	\path (cl0) edge node {$\sfrac{2}{3}$} (c)
		(c) edge node {$\sfrac{1}{3}$} (cr180);
\end{tikzpicture}
\caption{%
	\label{fig:anonymous-nonsymmetic}%
	Let this be the final graph $G=(V,E)$
	and $G_1$ the subgraph to the left starting with~$u$ 
	(\ie the cycle on $5$ vertices),
	and $G_2$ the subgraph to the right starting with $w$.
	Let the adversary assign probabilities, say 
	$\Pr{\set{\set{u,v}}}\df\sfrac{2}{3}$
	and
	$\Pr{\set{\set{v,w}}}\df\sfrac{1}{3}$
	and $0$ to all other edges.
	Then all players in $G_1$ experience the same disconnection cost,
	and the same holds for $G_2$.
	(Precisely, we have 
	$I_x(G) = \sfrac{2}{3} \cdot 6 + \sfrac{1}{3} \cdot 5 = 4+\sfrac{5}{3}$ for all $x\in V(G_1)$
	and $I_y(G) = \sfrac{1}{3} \cdot 6 + \sfrac{2}{3} \cdot 5 = 2 + \sfrac{10}{3}$ for all $y\in V(G_2)$
	and $I_v(G) = \sfrac{2}{3} \cdot 5 + \sfrac{1}{3} \cdot 5 = 5$.)
	The adversary is not symmetric,
	since $\sep(\set{u,v})=5\cdot6=\sep(\set{v,w})$.
	However, disconnection cost is anonymous,
	since an automorphism can only permute players
	within $G_1$ and $G_2$, respectively.}
\end{figure}
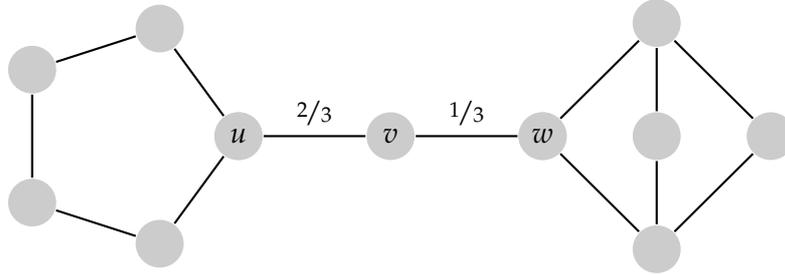%

\section{Related Work}
\label{sec:related-work}%
There is a vast body of literature on game-theoretic network formation,
by far not limited to studies of the price of anarchy.
A~good starting point is the survey by Jackson~\cite{Jac04} from 2004.
We cite several publications below with a bias towards 
studies of the price of anarchy.
In a separate subsection \vpageref{subsec:comparison},
we give a detailed comparison of our model
with work being particularly related to it,
namely~\cite{CFSK04,JW96,BS08a,HS03,HS05,BG00}.
\par\smallskip
Bilateral link formation follows a concept 
given by Myerson~\cite{Mye02} in a different context.
We quote~\cite[\nref{p.}{228}]{Mye02},
emphasis added:
\begin{quote}
\small
Now consider \textbf{a link-formation process in which each player independently writes down 
a list of players with whom 
he wants to form a link},
and the payoff allocation is the fair allocation above
for \textbf{the graph that contains a link 
for every pair of players that have named each other.}
\end{quote}
\par
Jackson and Wolinsky~\cite{JW96} in 1996 introduced the symmetric connections model
and the equilibrium concept of pairwise stability.
The symmetric connections model is best described using the notions of income and payoff.
The income for player $v$ is
$\sum_{\substack{w\in V\\ w\neq v}} \delta^{\dist_{G(S)}(v,w)}$,
where $\delta\in(0,1)$ is a parameter.
Her payoff is income minus building cost.
Note that we have an exponential dependence on distance.
This models to some extent that each link has a probability of $1-\delta$ for failure.
We will elaborate on this later.
\par
Jackson and Wolinsky discussed several variations of \PS,
including what would later be known as \PNE.
We quote~\cite[\nref{p.}{67}]{JW96}:
\begin{quote}
\small
Another possible strengthening of the stability notion would allow for
richer combinations of moves to threaten the stability of a network. Note
that the basic stability notion we have considered requires only that a
network be immune to one deviating action at a time. It is not required
that a network be immune to more complicated deviations, such as a
simultaneous severance of some existing links and an introduction of a new
link by two players [..].
\end{quote}
\par
Watts~\cite{Wat01} in 2001 studied the symmetric connections model
with an extended equilibrium concept:
a graph is considered stable if no player wishes to sell any link
and if no two players wish to establish an additional link
while deleting any number of their links.
Calv\'{o}-Armengol and \.{I}lkili\c{c}~\cite{CI05}
and Corbo and Parkes~\cite{CP05} in 2005 discussed different equilibrium concepts and their relations:
\PNE, \PS,
and proper equilibrium~\cite{Mye02}.
In~\cite{CI05}, among other results,
it was shown that the symmetric connections model has convex cost.
\par
Bloch and Jackson~\cite{BJ07} in 2007 introduced a model with transfers:
each player $v$ decides how much she is willing to pay for a link $\set{v,w}$
or how much she would demand the other endpoint $w$ to pay for the link.
If $v$ offers at least as much as $w$ demands, or vice versa,
the link $\set{v,w}$ is established in the final graph.
Appropriate equilibrium concepts were introduced and discussed.
Bloch and Jackson also compared \PS,
\PNE,
and their transfer model in a separate publication~\cite{BJ06}.
\par
Bala and Goyal~\cite{BG00} in 2000 and in a unilateral setting 
studied a model
where players wish to be connected by a path to as many other players as possible,
but path lengths are unimportant.
They also considered a unilateral version of the symmetric connections model.
In another publication~\cite{BG00a} in the same year,
they extended the first model by allowing each link to fail with a probability $1-p$.
Haller and Sarangi~\cite{HS03,HS05} in 2003 extended this model again
by allowing each link $\set{v,w}$ to fail with its own probability $1-p_{vw}$.
We will elaborate on this later.
\par
Anshelevich, Dasgupta, Tardos, and Wexler~\cite{ADTW08} in 2003
studied the price of anarchy and algorithmic aspects
of a model in which each player has a set of terminals 
and aims to construct a network which connects her terminals.
For a related model,
Anshelevich, Dasgupta, Kleinberg, Tardos, Wexler, and Roughgarden~\cite{ADK+04} in 2004
studied the price of stability.
Also in 2004, Christin and Chuang~\cite{CC04} studied a model for network formation
with an extended cost function modeling peer-to-peer networks,
and Christin, Grossklags, and Chuang~\cite{CGC04}
looked at it under the aspect of different game-theoretic principles.
\par
Chun, Fonseca, Stoica, and Kubiatowicz~\cite{CFSK04} in 2004
experimentally studied an extended version of the sum-distance model.
\par
Johari, Mannor, and Tsitsiklis~\cite{JMT06} in 2006
studied a model in which each vertex wishes to send a given amount of traffic 
to some of the other vertices,
and only cares whether the traffic eventually arrives at the destination.
There is a handling cost at each vertex,
which is proportional to the amount of traffic through that vertex.
\par\smallskip
The work of Fabrikant, Luthra, Maneva, Papadimitriou, and Shenker \cite{FLM+03}
from 2003 is to the best of the author's knowledge the first quantitative study of 
the price of anarchy in a model that fits into the framework considered here,
as per \autoref{sec:model-framework}.
They considered the unilateral sum-dis\-tance model
and proved a bound of $\max\set{1,\,O(\sqrt{\alpha})}$ on the price of anarchy in general,
and an $O(1)$ bound for $\alpha>\frac{(n-1)\,n}{2}$.
They conjectured that for $\alpha=\Omega(1)$,
all non-transient \NE\ were trees
-- the Tree Conjecture.
A \NE\ is called \term{transient}
when there exists a sequence of strategy changes 
in which each player changing her strategy maintains her individual cost,
and finally a strategy profile is reached which is not a \NE\ anymore.
The Tree Conjecture was based on the observation
that all \NE\ constructed so far at that time, for $\alpha>2$,
were trees or transient ones (namely the Petersen graph for $\alpha\leq 4$).
The Tree Conjecture was later, in 2006, disproved by
Albers, Eilts, Even-Dar, Mansour, and Roditty~\cite{AEE+06}
by showing that for each $n_0$,
there exists a non-transient \NE\ on $n\geq n_0$ vertices 
containing cycles, 
for any $1<\alpha\leq\sqrt{\sfrac{n}{2}}$.
\par
Corbo and Parkes~\cite{CP05} in 2005 considered the bilateral version of the sum-distance model.
They showed an $O(\sqrt{\alpha})$ bound for \mbox{$1\leq\alpha<n^2$} on the price of anarchy.
As noticed later in 2007 by Demaine et al.~\cite{DHMZ07}, the proof in fact yields
$O(\min\set{\sqrt{\alpha},\,\sfrac{n}{\sqrt{\alpha}}})$.
\par
Albers et al.~\cite{AEE+06} in 2006 not only disproved the Tree Conjecture,
but also improved the bounds on the price of anarchy for the unilateral sum-distance model:
they gave constant upper bounds for $\alpha =O(\sqrt{n})$ 
and $\alpha\geq 12 n \sceil{\log n}$,
as well as an upper bound
for any $\alpha$ of
\begin{equation*}
15 \, \parens{1+\parens{\min\set{\sfrac{\alpha^2}{n},\,\sfrac{n^2}{\alpha}}}^{\sfrac{1}{3}}}
\enspace.
\end{equation*}
An $O(1)$ upper bound for $\alpha=O(\sqrt{n})$ was also independently proved by Lin~\cite{Lin03}.
These bounds were again improved by
Demaine, Hajiaghayi, Mahini, and Zadimoghaddam~\cite{DHMZ07} in 2007.
They showed a bound of $2^{O(\sqrt{\log n})}$ for any $\alpha$
and a constant bound for $\alpha=O(n^{1-\eps})$ for any constant $\eps>0$.
For the bilateral version,
they proved the $O(\min\set{\sqrt{\alpha},\,\sfrac{n}{\sqrt{\alpha}}})$ bound of Corbo and Parkes tight.
Recently, in 2010, Mihal{\'a}k and Schlegel~\cite{MS10a}
proved that for the unilateral sum-distance model and $\al \geq 273 n$,
all equilibria are trees,
which implies a constant bound on the price of anarchy in that range of~$\al$.
\par
Moscibroda, Schmid, and Wattenhofer~\cite{MSW06} in 2006
studied the price of anarchy in a variation of 
the sum-distance model
where the distance between two vertices is generalized,
that is, it may be given by any metric.
The cost function uses the stretch,
that is the actual distance in the constructed graph
divided by the distance that a direct connection would provide.
Halevi and Mansour~\cite{HM07} in 2007 studied the price of anarchy in the sum-distance model
under the generalization that each player has a list of \enquote{friends},
that is, a list of other vertices and she is only interested in her distance to those.
Demaine et al.~in~\cite{DHMZ07} in 2007 also considered the max-distance model:
indirect cost for $v$ is $\max_{w\in V}\dist(v,w)$.
Upper bounds were shown for \ULF\
and tight bounds for \BLF.
For \ULF, improved bounds were recently shown in~\cite{MS10a}.
\par
Brandes, Hoefer, and Nick~\cite{BHN08} in 2008 studied
a variant of the sum-distance model assigning a finite distance to
pairs of disconnected players, allowing for disconnected equilibria.
They proved structural properties and bounds on the price of anarchy.
Laoutaris, Poplawski, Rajaraman, Sundaram, and Teng \cite{LPR+08} in 2008
considered \term{bounded budget connection games},
a variant of the sum-distance model with player-dependent link costs,
lengths, and preferences $w(u,v)$ expressing the importance 
for player $u$ of having a good connection to player $v$,
and finally a budget for each player limiting the number of links that this player can build.
They considered existence of equilibria and proved bounds on the price of anarchy and stability.
An important special case is the \term{uniform} version,
which has link costs, link lengths, and preferences all equal,
and also all players have the same limit on their budget.
Recently, this uniform version was also studied by Demaine and Zadimoghaddam~\cite{DZ10}.
They proved a tight upper bound
and, more importantly, showed how to induce equilibria with small social cost.
They used a technique called \term{public service advertising},
previously studied for different games by Balcan, Blum, and Mansour~\cite{BBM09}.
\par
Baumann and Stiller~\cite{BS08a} in 2008
considered the price of anarchy in the symmetric connections model.
Demaine et al.~\cite{DHMZ09} in 2009 studied the price of anarchy 
in a cooperative variant of the sum-distance model.
They also looked at the case that
links can only be formed for certain pairs of vertices,
that is, the underlying \enquote{host} graph needs not to be a complete one.

\subsection*{Comparison of Our Model with Related Work}
\label{subsec:comparison}%
Our adversary model addresses robustness in a way
that -- to the best of the author's knowledge -- has not been studied theoretically before.
We compare our approach to previous work that also addresses robustness.
\par
Chun, Fonseca, Stoica, and Kubiatowicz~\cite{CFSK04} 
experimentally studied an extended version of the sum-distance model
and considered robustness.
To simulate failures, they removed some vertices randomly.
To simulate attacks, they removed vertices starting with those 
having highest degree.
\par
The symmetric connections model of Jackson and Wolinsky~\cite{JW96}
can also be interpreted from a robustness point-of-view.
Recall that in the symmetric connections model there is a parameter $\delta\in(0,1)$,
and payoff $\payoff_v(S)$ for player $v$ under strategy profile $S$ is defined
\begin{equation*}
\payoff_v(S) \df \sum_{\substack{w\in V\\ w\neq v}} \delta^{\dist_{G(S)}(v,w)} - \nrm{S_v} \, \alpha\enspace.
\end{equation*}
\par
An interpretation is that $v$ receives one unit of income from 
each other vertex $w$ along a shortest path between $v$ and $w$.
However, each link has a probability $1-\delta$ of failure,
so the expected income from $w$ is the probability that none 
of the $\dist_{G(S)}(v,w)$ links fails, 
which is $\delta^{\dist_{G(S)}(v,w)}$ if we assume stochastic independence of failures.
For \BLF,
Baumann and Stiller~\cite{BS08a} gave an expression for the exact price of anarchy
for $\alpha\in(\delta-\delta^2,\delta-\delta^3)$,
which implies an $O(1)$ bound
(the constant being bounded by $\frac{4}{1+2\delta}$).
The price of anarchy is $1$ for $\alpha<\delta-\delta^2$, 
following from~\cite{JW96}.
The price of anarchy in the range $\alpha>\delta-\delta^3$ is not fully understood yet.
\par\medskip\noindent
The symmetric connections model is different from ours in many respects:
\begin{itemize}
\item All links have the same probability of failure.
In our model, links can have different probabilities,
and these may even depend on the final graph.\footnote{%
However, large parts of our analysis will be restricted to two specific cases:
one in which the adversary picks a link uniformly at random (simple-minded adversary)
and another in which he picks a link that causes maximum overall damage (smart adversary).}
\item The failure of a link $e$ and the failure of a link $f$ are independent events for $e\neq f$,
at least along the concerned paths.
In our model, the failures of $e$ and $f$ are mutually exclusive events.
\item Alternative paths are not considered; 
it is assumed that routing happens along a specific shortest path
that is fixed before the random experiment that models the link failures is conducted.
In our model, \emphasis{all} paths are considered.
However, we do not consider path lengths.
\end{itemize}
\par
Bala and Goyal~\cite{BG00a} studied a variation of the symmetric connections model,
which is closer to ours.
In their model, each vertex receives an amount of $1$ from each
vertex it is connected to via some path.
Each link has a probability $1-p$ of failure, 
$p\in[0,1]$ being the same for all links and independent of the final graph.
Failures of two distinct links are stochastically independent.
Income of a vertex $v$ is the expected number of vertices
to which $v$ is connected via a path.
Unilateral link formation is used.
They considered structural properties of optima and \NE,
in particular pointing out cases where \NE\ are \enquote{super-connected},
\ie connected and not containing bridges.
They also showed that for some regions of parameters,
there exist \NE\ that are also optima (\ie they show a price of stability of $1$ for these regions).
\par
Haller and Sarangi~\cite{HS03,HS05} studied an extension of
the model of Bala and Goyal~\cite{BG00a}.
In their model, each link $\set{v,w}$ may fail with its \emphasis{own} probability \mbox{$1-p_{vw}$.}
They also considered structural properties of optima and \NE\
as well as relations of optima and \NE,
including the price of stability similar to~\cite{BG00a}.
Like the symmetric connections model, 
their model shows several differences to ours:
\begin{itemize}
\item The failure probability of each link $\set{v,w}$ is $1-p_{vw}$,
independent of the final graph.\footnote{%
Haller and Sarangi also briefly discussed
failure probabilities depending on the final graph.
They considered an example where for non-increasing functions $f_v(\cdot)$ and parameters $P_{vw}$
the probabilities are defined $p_{vw}(S)\df f_v(\deg_{G(S)}(v)) \, f_w(\deg_{G(S)}(w)) \, P_{vw}$
if $v$ and $w$ have a link between them, and $0$ otherwise.}
In our model, failure probabilities depend on the final graph.
\item Failures of two different links are stochastically independent.
In our model, they are mutually exclusive events.
(This difference is exactly as between the symmetric connections model and ours.)
\end{itemize}
\par\smallskip
Generally, independent link failures model 
the unavailability of links due to, \eg
deterioration, maintenance times,
or influences affecting the whole infrastructure or large parts of it
(\eg natural disasters).
Our adversary model, on the other hand,
models the situation when faced with an entity
that is malicious but only has limited means 
so that it can only destroy a limited number of links
(we limit this number to~$1$ in this work).

\section{The Bridge Tree}
\label{sec:bridge-tree}%
We conduct some preparations for the analysis of equilibria
in our adversary model,
which will be useful
regardless of the link formation rule and the equilibrium concept.
In the end, in \autoref{lem:double-count},
we will have developed 
a simple method to bound the sum of relevances $\Rel(v)$
for each player $v$, which will later help to bound the disconnection cost.
It will be helpful in several places to
consider a variation of the block graph,\footnote{%
See, \eg~\cite[\nref{p.}{56}]{Die05} for the definition of the block graph.}
which we call the \term{bridge tree}.
Its definition requires some preparation.
If $W\subseteq V$ is maximal under the condition
that the induced subgraph $G[W]$ is connected and does not contain any bridges of $G[W]$,
we call $W$ a \term{bridgeless connected component}, abbreviated \enquote{\bcc}.
The proof of the following proposition is straightforward.
\begin{proposition}
\label{prop:bridge-tree-definition}%
A set of vertices $W \subseteq V$ is a \bcc\ 
if and only if $W$ is maximal under the condition
that the induced subgraph $G[W]$ is connected and does not contain any bridges of $G$.
\end{proposition}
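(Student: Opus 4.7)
The plan is to show that, under either definition, $W$ equals a connected component of $G-B$, where $B$ is the set of bridges of $G$. Since both definitions then pick out the same objects, they are equivalent.

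I first record an asymmetric observation: any edge of $E(G[W])$ that is a bridge of $G$ is necessarily a bridge of $G[W]$, because a cycle in $G[W]$ is in particular a cycle in $G$. Thus the \bcc\ condition ``no bridges of $G[W]$'' is strictly stronger than the proposition's condition ``no bridges of $G$'', so a priori a set maximal under the former could be properly contained in one maximal under the latter. Conversely, if $G[W]$ is connected and contains no bridges of $G$, then any two vertices of $W$ are joined in $G[W]$ by a path of non-bridges, so $W$ is entirely contained in a single component of $G-B$.

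Next, I verify that each component $U$ of $G-B$ satisfies both conditions. Connectedness of $G[U]$ is inherited from $G-B$, and no bridge of $G$ can have both endpoints in $U$ since such a bridge would close with an alternative $G-B$-path to form a cycle, contradicting its bridge status; hence $G[U]=(G-B)[U]$ has no bridges of $G$ among its edges. For the \bcc\ condition, take any $e\in E(G[U])$: as $e$ is a non-bridge of $G$, it lies on some cycle $C$ in $G$; all edges of $C$ are non-bridges, so every vertex of $C$ is non-bridge-connected to the endpoints of $e$ and therefore lies in $U$. Thus $C\subseteq G[U]$ and $e$ is not a bridge of $G[U]$. Maximality of $U$ under either definition follows because any strict extension $W'\supsetneq U$ must, by connectedness of $G[W']$, contain some edge between $U$ and a distinct component of $G-B$; such an edge is a bridge of $G$, and by the asymmetric observation is also a bridge of $G[W']$, violating both conditions simultaneously.

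Finally, let $W$ be maximal under either of the two conditions. In both cases $E(G[W])$ contains no bridges of $G$ — directly by hypothesis for the proposition's condition, or via the asymmetric observation in the \bcc\ case — so by the second half of the first step, $W$ lies inside some component $U$ of $G-B$. Since $U$ itself satisfies both conditions and contains $W$, maximality forces $W=U$. The only step requiring any real care is verifying that $G[U]$ contains no bridges of $G[U]$, which rests on the ``a cycle through a vertex of $U$ stays inside $U$'' argument; everything else reduces to the two bookkeeping observations of the first paragraph.
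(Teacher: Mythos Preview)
Your proof is correct and takes a genuinely different route from the paper's. The paper argues the equivalence directly: starting from a set $W$ maximal under one condition, it extends $W$ to some $U$ satisfying the other condition and then uses a cycle-augmentation argument (if $G[U]$ had a bridge of $G[U]$ that is not a bridge of $G$, the cycle witnessing this in $G$ could be adjoined to $U$, contradicting maximality) to force $U=W$. You instead introduce a third, canonical object---the connected components of $G-B$, where $B$ is the set of bridges of $G$---and show that both maximality conditions pick out exactly these components. Your approach has the advantage of yielding an explicit intrinsic description of \bcc s as a by-product, which is independently useful (and indeed is how many readers think of them). The paper's approach is slightly more self-contained in that it never names $G-B$, but the core graph-theoretic fact (a cycle through a non-bridge stays among non-bridges) is the same in both arguments; you invoke it to show $C\subseteq G[U]$, the paper invokes it to augment $U$ by $V(C)$.
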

\begin{proof}
Let $W$ be maximal under the condition of $G[W]$ being connected
and not containing any bridges of $G[W]$,
\ie we follow the original definition given above.
Clearly, $G[W]$ does not contain any bridges of~$G$,
since if removal of some edge disconnects $G$, 
then it also disconnects $G[W]$ if the endpoints of this edge are in $W$.
We choose $U\supseteq W$ maximal under the condition that $G[U]$ is connected
and $G[U]$ does not contain any bridges of $G$.
Suppose $U\neq W$.
Then $G[U]$ contains a bridge $e$ of $G[U]$.
Since this is no bridge of $G$, it is located on a cycle $C$.
Then $V(C)\nsubseteq U$, since $e$ is a bridge of $G[U]$.
But $G[U\cup V(C)]$ would still be connected and would contain no bridge of $G$.
This contradicts the maximality of~$U$.
\par
Now let $W$ be maximal under the condition of $G[W]$ being connected
and not containing any bridges of $G$.
If $G[W]$ contained a bridge $e$ of $G[W]$ (but not of $G$),
we could use the cycle-argument from before to augment $W$ 
and have a contradiction to its maximality.
Suppose there is $U\supsetneq W$ such that $G[U]$ is connected
and $G[U]$ does not contain any bridges of $G[U]$.
Then $G[U]$ contains a bridge of $G$.
As noted earlier, this is also a bridge of $G[U]$, a contradiction.
\end{proof}
What we call \enquote{\bcc} is sometimes called \enquote{block} in the literature,
and what we call \enquote{bridge tree} is then called \enquote{bridge-block tree}.
We refrain from using \enquote{block} here,
since it usually is related to \emphasis{vertex-}connectivity;
see, \eg~\cite[\nref{p.}{55}]{Die05}.
\par
Every vertex is contained in exactly one \bcc.
If $W$ is a \bcc,
we have to remove at least $2$ edges from $G[W]$ 
in order to make it disconnected.
A graph from which we have to remove at least $2$ edges to make it disconnected
is also called being \enquote{$2$-edge-connected} in common terminology,
provided that it has more than $1$ vertices;
see, \eg~\cite[\nref{p.}{12}]{Die05}.
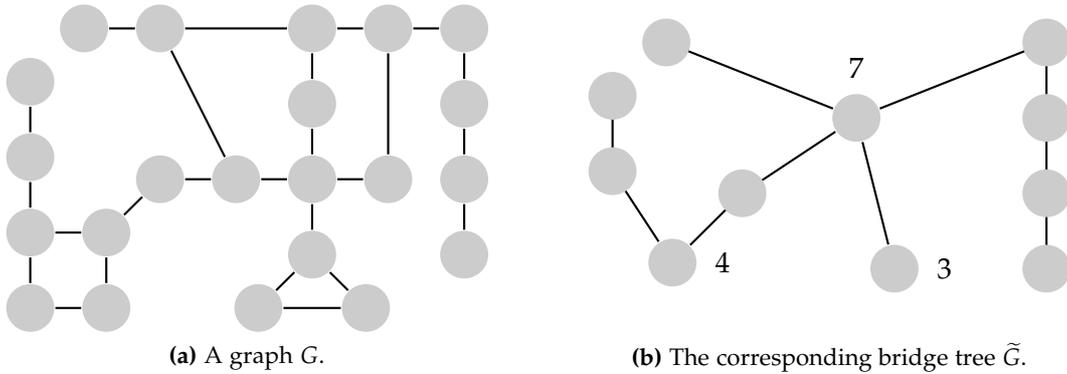
\begin{figure}
\newcommand{\skel}{%
	\node (a1) { };
	\node (a2) [below = of a1] { };
	\node (a3) [left = of a2] { };
	\node (a4) [above = of a3] { };
	\node (b1) [above right = of a1] { };
	\node (b2) [above = 2cm of b1] { };
	\node (b3) [right = 2cm of b2] { };
	\node (b4) [right = of b3] { };
	\node (b5) [below = 2cm of b4] { };
	\node (b6) [left = of b5] { };
	\node (b7) [left = of b6] { };
	\node (b8) [below = of b3] { };
	\node (c1) [left = of b2] { };
	\node (d1) [above = of a4] { };
	\node (d2) [above = of d1] { };
	\node (e1) [below = of b6] { };
	\node (e2) [below left = of e1] { };
	\node (e3) [below right = of e1] { };
	\node (f1) [right = of b4] { };
	\node (f2) [below = of f1] { };
	\node (f3) [below = of f2] { };
	\node (f4) [below = of f3] { };}
\centering
\subfloat[]%
[\label{fig:bridge-tree:1}A graph $G$.]%
{\begin{tikzpicture}[every node/.style=player,node distance=1cm]
	\skel
	\path (a1) edge (a2) (a2) edge (a3) (a3) edge (a4) (a4) edge (a1);
	\path (a1) edge (b1);
	\path (b2) edge (b3)
		(b3) edge (b4)
		(b4) edge (b5)
		(b5) edge (b6)
		(b6) edge (b7)
		(b7) edge (b1);
	\path (b7) edge (b2);
	\path (b3) edge (b8) (b8) edge (b6);
	\path (b2) edge (c1);
	\path (d1) edge (d2);
	\path (b6) edge (e1);
	\path (e1) edge (e2) (e2) edge (e3) (e3) edge (e1);
	\path (a4) edge (d1);
	\path (b4) edge (f1);
	\path (f1) edge (f2) 
		(f2) edge (f3)
		(f3) edge (f4);
\end{tikzpicture}}
\hfill
\subfloat[]%
[\label{fig:bridge-tree:2}The corresponding bridge tree $\tG$.]%
{\begin{tikzpicture}[node distance=1cm]
	\skel
	\begin{scope}[every node/.style=player]
	\node (B) [above right = 1cm and 1.5 cm of b1, label=above:{$7$}] {};
	\node (E) [below = of b6, label=right:{$3$}] {};
	\node (A) [below left = 1.3 cm of b1, label=right:{$4$}] {};
	\end{scope}
	\path (A) edge (d1) (d1) edge (d2);
	\path (A) edge (b1) (b1) edge (B);
	\path (B) edge (c1);
	\path (B) edge (E);
	\path (B) edge (f1)
		(f1) edge (f2)
		(f2) edge (f3)
		(f3) edge (f4);
	\begin{scope}[every node/.style=player]
	\node at (b1) { };
	\node at (c1) { };
	\node at (d1) { };
	\node at (d2) { };
	\node at (f1) { };
	\node at (f2) { };
	\node at (f3) { };
	\node at (f4) { };
	\end{scope}
\end{tikzpicture}}
\caption{%
	\label{fig:bridge-tree}%
	Bridge tree construction.
	Vertices representing \bcc s of more than $1$ vertices 
	have their number of vertices attached, here $4$, $7$, and $3$,
	respectively.}
\end{figure}%
\par
Now we introduce the bridge tree of a graph $G=(V,E)$.
It is the graph $\tG=(\widetilde{V},\widetilde{E})$ defined by:
\begin{align*}
\widetilde{V}&\df\setst{B\subseteq V}{\text{$B$ is a \bcc}}\enspace,\\
\widetilde{E}&\df
\setst{\set{B,B'}}{B,B'\in \widetilde{V} \land
\exists v\in B, w\in B': \set{v,w}\in E}\enspace.
\end{align*}
Then $\tG$ is a tree (assuming $G$ is connected).
By \autoref{prop:bridge-tree-definition},
if $\phi$ maps each vertex of $G$ to the \bcc\ in which it is contained,
then $\set{v,w} \mapsto \set{\phi(v),\phi(w)}$
maps from the set of bridges of $G$
to the set of edges of $\tG$ and is bijective.
We make the following special convention concerning the bridge tree:
\begin{convention*}
\label{con:vertex-counting}%
Whenever we speak of the number of vertices in a subgraph $T$ of the bridge tree,
we count $\card{B}$ for each vertex $B\in V(T)$.
\end{convention*}
\par
In other words, we count the vertices that would be there
if we expanded $T$ back to
its corresponding subgraph of $G$.
\autoref{fig:bridge-tree} 
shows an example.
Since each vertex of $G$ is in exactly one \bcc,
counting in this way for $\widetilde{V}$ yields the number of vertices in $G$, \ie~$n$.
\par
On several occasions, when considering the effect of building additional edges,
we treat vertices of the bridge tree as players.
This is justified since edges inside \bcc s have relevance $0$.
Hence for a strategy profile $S$
and $B,B'\in \tV$ 
the effect in disconnection cost of a new edge between a player from $B$
and a player of $B'$ is specific to the pair $\set{B,B'}$
and not to the particular players.
\par
For a path $P$~in $G$, 
let $\tP$ be its contracted counterpart in $\tG$,
\ie we replace in $P$ each maximal sequence of vertices 
from the same \bcc\ $B\in\widetilde{V}$ with $B$.
Then the length $\length{\tP}$ of $\tP$ 
is the number of bridges in $P$.
For each pair $v,w\in V$ denote $P(v,w)$ an arbitrary shortest path from $v$ to $w$;
and $\shp(v)\df\setst{P(v,w)}{w\in V}$.
The bridge tree helps bounding the disconnection cost.
We conclude this section with a preparation for this.
For each $v\in V$ and $e\in E$ we easily observe:
\begin{equation}
\label{eqn:rel-paths}%
\rel(e,v)=\begin{cases}
0 & \text{if $e$ is a non-bridge}\\
\card{\setst{P\in\shp(v)}{e\in E(P)}} & \text{if $e$ is a bridge.}
\end{cases}
\end{equation}
We use this to prove the following lemma,
which we will apply in 
\autoref{sec:uni-simple} and \autoref{sec:bilateral}.
The lemma relates relevance to path length, and so to diameter.
This is possible since it is the same
to count for each edge the number of paths that cross this edge
(establishing the connection to relevance)
as to count for each path the number of its edges
(establishing the connection to path length).
\begin{lemma}
\label{lem:double-count}%
For each $v\in V$ we have
$R(v)
\leq (n-1) \, \diam(\tG)$.
\end{lemma}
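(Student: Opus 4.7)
The plan is to use double counting, swapping the order of summation in the expression for $R(v)$ and then bounding the contribution of each shortest path using the diameter of the bridge tree.

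First, I would start from the definition $R(v) = \sum_{e \in E} \rel(e,v)$ and apply equation~\eqref{eqn:rel-paths}. Since non-bridges contribute $0$, only bridges matter, and for a bridge $e$ we have $\rel(e,v) = \card{\{P \in \shp(v) : e \in E(P)\}}$. Then I would interchange the order of summation to get
\begin{equation*}
R(v) \;=\; \sum_{e \in E} \card{\{P \in \shp(v) : e \in E(P), \ e \text{ is a bridge}\}}
\;=\; \sum_{P \in \shp(v)} \card{\{e \in E(P) : e \text{ is a bridge}\}}.
\end{equation*}

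Next, I would interpret the inner count graph-theoretically: the number of bridges lying on the shortest path $P(v,w)$ is exactly the length $\length{\tP(v,w)}$ of its contracted counterpart in the bridge tree $\tG$, by the bijection (noted just after the definition of $\tG$) between bridges of $G$ and edges of $\tG$. Hence
\begin{equation*}
R(v) \;=\; \sum_{w \in V} \length{\tP(v,w)}.
\end{equation*}
For $w = v$ the contracted path has length $0$ and contributes nothing, while for each of the remaining $n-1$ choices of $w$ the length $\length{\tP(v,w)}$ is bounded above by $\diam(\tG)$, since $\tP(v,w)$ is a path in the tree $\tG$ between two of its vertices. Summing yields $R(v) \leq (n-1)\,\diam(\tG)$.

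I do not expect a real obstacle here; the only subtlety is to make sure that the bridges on $P(v,w)$ really correspond one-to-one with the edges of $\tP(v,w)$, which follows from $\tP$ being obtained by contracting maximal segments of $P$ inside a single \bcc\ (such segments consist entirely of non-bridge edges, while each bridge edge of $P$ connects two distinct \bcc s and so survives as an edge of $\tP$).
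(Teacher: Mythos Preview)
Your proposal is correct and follows essentially the same approach as the paper: apply~\eqref{eqn:rel-paths}, swap the order of summation, identify the bridge count on each path with the length of its contraction $\tP$ in the bridge tree, and bound each of the $n-1$ nontrivial terms by $\diam(\tG)$. The paper's justification for the final inequality is phrased slightly differently (``the bridge tree is a tree and so every path is a shortest path''), but the content is the same as yours.
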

\begin{proof}
Fix $v\in V$.
We have
\begin{align*}
R(v) & \stackrel{\text{def}}{=}
\sum_{e\in E} \rel(e,v)
\stackrel{\text{\eqref{eqn:rel-paths}}}{=}
\sum_{\substack{e\in E\\\text{$e$ is a bridge}}} \card{\setst{P\in\shp(v)}{e\in E(P)}}\\
& = \sum_{P\in \shp(v)} \card{\setst{e\in E(P)}{\text{$e$ is a bridge}}}
= \sum_{P\in \shp(v)} \length{\tP}
\leq (n-1) \, \diam(\tG)\enspace.
\end{align*}
The last estimation is true
since the bridge tree is a tree and so
every path is a shortest path.
\end{proof}

\section{A Simple Bound on the Price of Anarchy}
\label{sec:simple-bound}%
We give an upper bound on the price of anarchy 
for a general adversary.
It holds independently of the link formation rule and the equilibrium concept,
provided that equilibria have few edges.
\begin{proposition}
\label{prop:simple-bound}%
Let $S$ be any strategy profile
and (as usual) $m=\card{E(S)}$.
\begin{enumerate}[label=(\roman*)]
\item \label{prop:simple-bound:ii}%
If $m=O(n)$, then $\frac{C(S)}{\OPT(n,\al)} = O(1+\frac{n}{\alpha})$.
\item \label{prop:simple-bound:iii}%
If $m=O(n)$ and $\alpha=\Omega(n)$, 
then $\frac{C(S)}{\OPT(n,\al)} = O(1)$.
\end{enumerate}
\end{proposition}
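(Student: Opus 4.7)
The plan is to bound $\SC(S)$ from above and $\OPT(n,\alpha)$ from below, both in terms of $n\alpha$ and $n^2$, and then take the quotient. Throughout we may assume $G(S)$ is connected, since otherwise $\SC(S)=\infty$ by the convention in \autoref{sec:adversary-model} and, simultaneously, $m=O(n)$ would be meaningless for the intended application; in any case, disconnected $S$ cannot realize a finite ratio.

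For the upper bound on $\SC(S)$, I would split into total building cost and total disconnection cost. Using the essential-profile formulas, total building cost is at most $2m\alpha$, which is $O(n\alpha)$ by the assumption $m=O(n)$. Total disconnection cost is
\[
\sum_{v\in V} I_v(S) \;=\; \sum_{e\in E} \sep(e)\,\Pr{\{e\}},
\]
and for any edge $e$ we have $\sep(e)=2\nu(e)(n-\nu(e))\le 2\lfloor n/2\rfloor(n-\lfloor n/2\rfloor)\le n^2/2$, as noted in \autoref{sec:adversary-model}. Since $\sum_{e\in E}\Pr{\{e\}}=1$, total disconnection cost is at most $n^2/2$. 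Combining,
\[
\SC(S) \;\le\; 2m\alpha + \tfrac{n^2}{2} \;=\; O(n\alpha + n^2).
\]

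For the lower bound on $\OPT(n,\alpha)$, any optimum has connected final graph (by the remark in \autoref{sec:adversary-model}), hence at least $n-1$ edges, and therefore its building cost alone is at least $(n-1)\alpha$, giving $\OPT(n,\alpha)=\Omega(n\alpha)$. Dividing,
\[
\frac{\SC(S)}{\OPT(n,\alpha)} \;=\; O\!\left(\frac{n\alpha + n^2}{n\alpha}\right) \;=\; O\!\left(1+\frac{n}{\alpha}\right),
\]
which is \ref{prop:simple-bound:ii}. Statement \ref{prop:simple-bound:iii} then follows at once: the hypothesis $\alpha=\Omega(n)$ gives $n/\alpha=O(1)$, so the bound collapses to $O(1)$.

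There is no real obstacle here; the proof is a bookkeeping exercise once one observes the two crucial universal facts: $\sep(e)\le n^2/2$ for every edge of every connected graph on $n$ vertices, and every connected graph has at least $n-1$ edges. The only point requiring a pinch of care is that the building-cost bound $2m\alpha$ covers both \ULF\ (where it is even $m\alpha$) and \BLF, so the proposition truly is independent of the link formation rule as claimed.
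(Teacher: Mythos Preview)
Your proof is correct and follows essentially the same approach as the paper's own proof: bound $\SC(S)$ by $O(n\alpha+n^2)$ via $\sep(e)=O(n^2)$ and $\sum_e \Pr{\{e\}}=1$, lower-bound $\OPT(n,\alpha)$ by $\Omega(n\alpha)$ via connectedness, and divide. Your version is slightly more explicit about constants and about covering both \ULF\ and \BLF\ with the $2m\alpha$ building-cost bound, but the argument is the same.
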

\begin{proof}
Since $\sep(e)=O(n^2)$ for all $e$, we have
\begin{equation*}
\SC(S)
=O\parens{m \, \alpha + n^2 \, \sum_{e\in E}\Pr{\set{e}}}
=O(n \alpha + n^2)
\enspace.
\end{equation*}
Since an optimum is connected,
the optimal social cost is $\Omega(n\alpha)$.
Dividing by this yields \ref{prop:simple-bound:ii}.
Assertion \ref{prop:simple-bound:iii} follows from \ref{prop:simple-bound:ii}.
\end{proof}
The following corollary is obvious.
\begin{corollary}
Fix any link formation rule and equilibrium concept.
\begin{enumerate}[label=(\roman*)]
\label{cor:simple-bound}%
\item \label{cor:simple-bound:ii}%
If the number of edges in each equilibrium is $O(n)$,
then the price of anarchy is $O(1+\frac{n}{\alpha})$.
\item \label{cor:simple-bound:iii}%
If the number of edges in each equilibrium is $O(n)$
and moreover $\alpha=\Omega(n)$, then the price of anarchy is $O(1)$.
\end{enumerate}
\end{corollary}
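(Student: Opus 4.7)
The corollary is an immediate consequence of \autoref{prop:simple-bound}, so the plan reduces to taking suprema carefully over the set of equilibria while keeping the hidden constants uniform. Specifically, fix any link formation rule and equilibrium concept, and let $\mathcal{E}(n,\alpha)$ denote the set of equilibrium strategy profiles of the chosen kind. The price of anarchy is by definition $\sup_{S\in\mathcal{E}(n,\alpha)} \SC(S)/\OPT(n,\alpha)$.

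For assertion~\ref{cor:simple-bound:ii}, the hypothesis that \enquote{the number of edges in each equilibrium is $O(n)$} means there is a single constant $c>0$ (independent of $S$, $n$, and $\alpha$) such that $\card{E(S)} \leq c n$ for every $S\in\mathcal{E}(n,\alpha)$. For any such $S$, \autoref{prop:simple-bound}\ref{prop:simple-bound:ii} yields $\SC(S)/\OPT(n,\alpha) = O(1+n/\alpha)$, and inspecting the proof of \autoref{prop:simple-bound} shows that the hidden constant depends only on $c$ (and the trivial absolute constants in $\sep(e)=O(n^2)$ and $\OPT(n,\alpha)=\Omega(n\alpha)$), not on $S$ itself. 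Taking the supremum over $S\in\mathcal{E}(n,\alpha)$ therefore preserves the bound, giving assertion~\ref{cor:simple-bound:ii}.

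Assertion~\ref{cor:simple-bound:iii} is then immediate from \ref{cor:simple-bound:ii}: under the additional assumption $\alpha=\Omega(n)$ we have $n/\alpha=O(1)$, so the $O(1+n/\alpha)$ bound collapses to $O(1)$. Alternatively, one can cite \autoref{prop:simple-bound}\ref{prop:simple-bound:iii} directly.

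There is no real obstacle; the only thing worth being pedantic about is the uniformity of the hidden constants, which is exactly what allows the bound to survive the supremum defining the price of anarchy.
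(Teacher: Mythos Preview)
Your proposal is correct and matches the paper's own treatment: the paper simply declares the corollary \enquote{obvious} from \autoref{prop:simple-bound}, and the discussion immediately following the corollary in the paper makes exactly your point about uniformity of the hidden constants across all equilibria. There is nothing to add.
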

\par
A remark on the meaning of $O$ and $\Om$ is in order.
Recall that we use this notation to avoid having to
introduce all occurring constants explicitly,
and that a \term{constant} is required to be independent of all game parameters,
strategy profiles, etc.
For example, 
the constant hidden in the premise \enquote{$m=O(n)$} 
in \autoref{prop:simple-bound}\ref{prop:simple-bound:ii} 
is required to be independent of $n$, $\al$, and $S$,
while the constant hidden in the conclusion 
\enquote{$\frac{C(S)}{\OPT(n,\al)} = O(1+\frac{n}{\alpha})$}
is guaranteed to have the same independence.
The proof reflects that this is true.
In \autoref{cor:simple-bound}\ref{cor:simple-bound:ii},
it is required that there exists a constant $c>0$
such that for each equilibrium $S$ we have $\card{E(S)} \leq cn$.
Here as well, $c$ is required to be independent of $n$, $\al$, and $S$.
\par
The main goal of \autoref{sec:uni-simple} and \autoref{sec:uni-smart}
is to show a bound of $O(1)$ on the price of anarchy for \ULF\ and restricted to two special adversaries,
which are chosen to mark extreme cases.
We will there proceed in showing the $O(n)$ bound on the number of edges in a \NE\ first.
In \autoref{sec:uni-simple}, we will then bound disconnection cost of \NE\ by $O(n\al)$.
In \autoref{sec:uni-smart}, we achieve the same bound 
under the condition that $\al < c n$ for a constant specified there.
If $\al \geq cn$, then we are done by \autoref{cor:simple-bound}\ref{cor:simple-bound:iii}.
\par
The $O(1)$ bound would follow trivially
if we could show an $O(n\alpha)$ bound
for the social cost of any strategy profile.
However, later \autoref{prop:sc-path} shows
that there is no hope for this, 
and hence we will have to exploit characteristics of \NE\
in order to prove our bounds.

\section{Optima, Nash Equilibria, Price of Stability}
\label{sec:uni-opt-ne-stability}%
We stick to \ULF\ (and \NE\ as equilibrium concept) in this
and the following two sections.
The aim of this section is to
construct optima and \NE,
and finally to show how a bound on the price of stability follows easily.
The adversaries considered are a general one,
\ie without any additional assumptions,
and one inducing anonymous disconnection cost.
Although we do everything for \ULF\ and \NE\ here,
most of the results can be carried over to \BLF\ and \PNE\ or \PS,
as will be discussed in \autoref{sec:bilateral}.
\begin{proposition}
\label{prop:uni-opt}%
An optimum has social cost $\Theta(n\alpha)$.
More precisely:
\begin{enumerate}[label=(\roman*)]
\item If $\alpha\leq 2\,(n-1)$, the cycle is an optimum;
it has social cost $n\alpha$.
\item If $\alpha\geq 2\,(n-1)$, a star is an optimum;
it has social cost $(n-1)\,(\alpha+2)$.
\end{enumerate}
\end{proposition}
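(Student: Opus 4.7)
The plan is to compute the social cost of the cycle and the star directly, then match these with a lower bound obtained by a case split on the number of edges.

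First I would handle the two explicit computations. For the cycle on $n$ vertices, every edge lies on the enclosing cycle and is therefore no bridge, so $\sep(e)=0$ for every $e$, the disconnection term vanishes regardless of the adversary, and $\SC = n\alpha$. For the star, each of the $n-1$ edges is a bridge with $\nn(e)=1$, so $\sep(e)=2\nn(e)(n-\nn(e))=2(n-1)$ uniformly; since $\sum_e \Pr{\set{e}}=1$, the disconnection contribution is exactly $2(n-1)$ irrespective of how the adversary spreads probability over the edges, giving $\SC=(n-1)\alpha+2(n-1)=(n-1)(\alpha+2)$.

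For the matching lower bound I would argue as follows. Let $S$ be any strategy profile with $G(S)$ connected (otherwise $\SC(S)=\infty$) and write $m=\card{E(S)}$; connectivity forces $m\geq n-1$. If $m\geq n$ then already $\SC(S)\geq m\alpha\geq n\alpha$. If $m=n-1$ then $G(S)$ is a tree, every edge is a bridge, and $\sep(e)=2\nn(e)(n-\nn(e))\geq 2(n-1)$ (the integer product $\nn(n-\nn)$ on $1\leq \nn\leq \sfloor{\frac{n}{2}}$ is minimized at $\nn=1$). Hence $\sum_e \sep(e)\,\Pr{\set{e}} \geq 2(n-1)\sum_e \Pr{\set{e}} = 2(n-1)$, so $\SC(S)\geq(n-1)\alpha+2(n-1)$. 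Combining both subcases, every admissible $S$ satisfies $\SC(S)\geq \min\set{n\alpha,\,(n-1)\alpha+2(n-1)}$, and the two expressions cross at exactly $\alpha=2(n-1)$.

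Parts (i) and (ii) then follow by comparing this lower bound with the explicit computations: for $\alpha\leq 2(n-1)$ the minimum equals $n\alpha$, attained by the cycle, and for $\alpha\geq 2(n-1)$ it equals $(n-1)(\alpha+2)$, attained by the star. The $\Theta(n\alpha)$ claim is then immediate: in case (i) the cost equals $n\alpha$, and in case (ii) we have $(n-1)\alpha\leq(n-1)(\alpha+2)\leq(n-1)\alpha+\alpha=n\alpha$ (using $\alpha\geq 2(n-1)$ for the middle step), so the cost lies between $\frac{n-1}{n}\cdot n\alpha$ and $n\alpha$. There is no real obstacle here; the clean point making the analysis trivial is that both candidate graphs make the random term $\sum_e \sep(e)\Pr{\set{e}}$ independent of the adversary (equal to $0$ for the cycle and to $2(n-1)$ for the star), so the optimization reduces to a pure arithmetic threshold comparing the building-cost savings of a tree against the separation penalty it incurs.
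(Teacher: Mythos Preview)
Your proof is correct and follows essentially the same approach as the paper: both split on whether the graph contains a cycle (equivalently $m\geq n$ versus $m=n-1$), use $\sep(e)\geq 2(n-1)$ to lower-bound the disconnection cost of any tree, and then compare against the explicit costs of the cycle and the star. The only cosmetic difference is that you phrase the dichotomy via the edge count while the paper phrases it as ``cycle or tree,'' and you spell out the $\Theta(n\alpha)$ verification a bit more explicitly.
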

\begin{proof}
An optimum can only be the cycle or a tree,
because any graph containing a cycle has already the building cost $n\alpha$ of the cycle,
and the cycle has optimal disconnection cost.
So an optimum is either the cycle, or it is cycle-free.
Let $T$ be any tree.
We have its indirect cost:
\begin{align*}
\MoveEqLeft
\sum_{e\in E(T)} \sep(e) \, \Pr{\set{e}}
=2\sum_{e\in E(T)}\nn(e)\,(n-\nn(e)) \, \Pr{\set{e}}\\
& \geq2\cdot1\,(n-1)\sum_{e\in E(T)}\Pr{\set{e}}
=2\,(n-1)\enspace.
\end{align*}
Hence the social cost of a tree is at least
\begin{equation*}
(n-1)\,\alpha+2\,(n-1)=(n-1)\,(\alpha+2)\enspace.
\end{equation*}
The social cost of the cycle is $n\alpha$.
So if $\alpha\leq 2\,(n-1)$, the cycle is better or as good as any tree,
hence it is an optimum.
If $\alpha > 2\,(n-1)$, then we look for a good tree.
A star has social cost $(n-1)\,(\alpha+2)$, 
which matches the lower bound given above,
and is hence optimal (and better than the cycle).
However, for $\alpha = 2\,(n-1)$,
both cycle and star are optimal with social cost $2n\,(n-1)$.
\end{proof}
\par
The following simple remark later will help establishing concrete bounds
on the price of anarchy.
\begin{remark}
\label{rem:concrete-bound-general}%
Assume there are constants $c_0, c_1 > 0$ such that
the social cost of all equilibria is bounded by $(c_1 n + c_0)\,\al$.
Then the price of anarchy is bounded by $c_1 + \frac{c_1 + c_0}{n-1}$.
\end{remark}
\begin{proof}
If the optimum is $n \al$,
we have the ratio $\frac{(c_1 n + c_0)\,\al}{n\al} = c_1 + \frac{c_0}{n} 
\leq c_1 + \frac{c_1 + c_0}{n-1}$.
Otherwise, if the optimum is $(n-1)\,(\al+2)$,
we have the ratio 
$\frac{(c_1 n + c_0)\,\al}{(n-1)\,(\al+2)} 
< \frac{c_1 n + c_0}{n-1} 
= \frac{c_1 \, (n-1) + c_1 + c_0}{n-1} 
= c_1 + \frac{c_1 + c_0}{n-1}$.
\end{proof}
\par
The following two propositions can be proved by
appropriate cost-benefit analysis.
\begin{proposition}
\label{prop:uni-ne-star}%
Let $S$ be a star
with edges pointing outward.
\begin{enumerate}[label=(\roman*)]
\item\label{prop:uni-ne-star:i}%
If $\alpha\geq n-1$, then $S$ is a \NE.
\item\label{prop:uni-ne-star:ii}%
If $\alpha\geq 2-\frac{1}{n-1}$, 
then $S$ is a \NE\ if disconnection cost is anonymous.
\end{enumerate}
In both cases, strict inequality implies a \MNE.
\end{proposition}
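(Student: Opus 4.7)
The plan is to verify the Nash equilibrium conditions by going player by player, bounding the benefit of every available deviation. Write $c$ for the center and $u_1, \ldots, u_{n-1}$ for the leaves; \enquote{edges pointing outward} means $c$ requests every edge, so in the essential profile $c$ carries the full $(n-1)\alpha$ of building cost and each leaf's strategy is empty.

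First I would handle the center's deviations. Since $c$ is already linked to every other vertex, no new edge is available, and every edge of the star is a bridge, so any deletion disconnects $G(S)$ and sends disconnection cost to $\infty$. Hence $c$ has no improving deviation, independent of $\alpha$ and of the adversary.

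Next I would bound deviations by a leaf $u_j$. A leaf owns no edge, so the only meaningful deviation adds some nonempty set of $m \geq 1$ edges to other leaves, at total building cost $m\alpha$. The benefit is at most the current $I_{u_j}(S)$, because disconnection cost after the deviation is still non-negative. So the task reduces to showing $\alpha \geq I_{u_j}(S)$. For part (i), the crude bound $I_{u_j}(S) \leq n-1$ (at most $n-1$ other vertices exist to lose) together with $\alpha \geq n-1$ closes the case. For part (ii), by the star's symmetry and anonymity each edge $\{c,u_i\}$ must carry the same adversary probability $\tfrac{1}{n-1}$; combining this with $\rel(\{c,u_j\},u_j) = n-1$ and $\rel(\{c,u_i\},u_j) = 1$ for $i \neq j$ gives
\[
I_{u_j}(S) \;=\; (n-1)\cdot \tfrac{1}{n-1} \;+\; (n-2)\cdot \tfrac{1}{n-1} \;=\; 2 - \tfrac{1}{n-1},
\]
and $\alpha \geq 2 - \tfrac{1}{n-1}$ is exactly the hypothesis.

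Finally, the \MNE\ claims follow from Remark \ref{rem:mne}: strict inequality leaves room to shrink $\alpha$ by a small $\eps > 0$ while preserving the NE hypothesis, and the remark then promotes \NE\ to \MNE. The subtlety that will require the most care is the multi-edge deviation of a leaf: instead of quantifying savings edge by edge, the clean move is to bound the total benefit by the single quantity $I_{u_j}(S)$ and compare it with $m\alpha \geq \alpha$, so that the analysis of a single new link simultaneously rules out every larger bundle.
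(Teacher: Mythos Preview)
Your argument is correct and follows essentially the same route as the paper: rule out deviations by the center (no new edges possible, deletions disconnect), then bound a leaf's gain from buying links by her current disconnection cost, computing that cost as $n-1$ in general and as $2-\tfrac{1}{n-1}$ under anonymity via the equal-probability observation, and finally invoke \autoref{rem:mne} for maximality. The one spot worth tightening is the phrase \enquote{by the star's symmetry and anonymity each edge must carry the same probability}: anonymity is a statement about disconnection costs, not directly about probabilities, so the clean justification is that $I_{u_j}=(n-2)p_j+1$ together with $I_{u_j}$ being constant in $j$ forces all $p_j$ equal---which is exactly the step the paper also glosses as \enquote{follows easily}.
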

\begin{proof}
\ref{prop:uni-ne-star:i}~Since all edges point outward,
the center player is the only one
that could sell edges,
but this would make the graph disconnected.
Exchanges of edges by the center cannot lead to a different strategy profile.
The maximum disconnection cost 
is experienced by a leaf vertex
when the probability measure is concentrated
on the one edge that connects it to the rest.
The disconnection cost is then $n-1$.
Since this is at most $\alpha$,
there is no incentive to buy additional edges.
Hence no player can strictly improve her individual cost
by changing her strategy.
(If multiple edges between the same players were allowed,
the center could build additional edges.
However, since $\al > 1$, this would increase her cost.)
\par
\ref{prop:uni-ne-star:ii}~Disconnection cost of the center is $1$.
By anonymity,
all leafs experience the same disconnection cost.
It follows easily from this
that all edges have the same probability, namely $\frac{1}{n-1}$.
Disconnection cost of a leaf hence is
\begin{equation*}
\frac{(n-1) + (n-2)}{n-1}
=\frac{(n-1) + (n-1) - 1}{n-1}
= 2 - \frac{1}{n-1}\enspace.
\end{equation*}
Now we apply the same arguments as for part~\ref{prop:uni-ne-star:i}.
Maximality is clear in both cases by
\autoref{rem:mne}.
\end{proof}
\begin{proposition}
\label{prop:uni-ne-cycle}%
Let $S$ be a cycle
with all edges pointing in the same direction
(either all clockwise or all counter-clockwise).
\begin{enumerate}[label=(\roman*)]
\item\label{prop:uni-ne-cycle:i}%
If $\alpha\leq 1$, then $S$ is a \MNE.
\item\label{prop:uni-ne-cycle:ii}%
If $\alpha\leq \frac{1}{2}\sfloor{\frac{n-1}{2}}$, 
then $S$ is a \MNE\ if disconnection cost is anonymous.
\end{enumerate}
\end{proposition}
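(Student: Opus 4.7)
In the cycle $S$ with consistent orientation, each player $v$ is the sole requester of a single edge $\{v,v+1\}$; since no edge of a cycle is a bridge, her disconnection cost is $0$ and hence $C_v(S)=\alpha$. The NE (and MNE) conditions will be checked by case analysis on all deviations $S_v \to S_v'$ with $S_v'\subseteq\{v\}\times(V\setminus\{v\})$. The resulting graph $G'$ differs from the cycle by deletion of $\{v,v+1\}$ (when $(v,v+1)\notin S_v'$) and addition of chords $\{v,u\}$ for the remaining requests; her new cost is $|S_v'|\,\alpha+I_v(G')$.

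Easy cases: whenever $|S_v'|\geq 1$, the building cost alone already meets $\alpha$ and $I_v(G')\geq 0$, so $C_v(S')\geq C_v(S)$. For MNE maximality, any pure-buying deviation (keeping $(v,v+1)$ and adding chords) leaves $G'$ $2$-edge-connected, so $I_v(G')=0$ and the cost strictly exceeds $\alpha$ as soon as one chord is added. Thus maximality is automatic, and the entire NE analysis reduces to the single remaining case $S_v'=\emptyset$.

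In that remaining case $v$ drops her request entirely and $G'$ is the path obtained from the cycle by deleting $\{v,v+1\}$, with $v$ at one endpoint. Her cost is $I_v(G')$, and I must show $I_v(G')\geq\alpha$. For Part~\ref{prop:uni-ne-cycle:i}, every edge of the path is a bridge with $\rel(e,v)\geq 1$, so $I_v(G')\geq \sum_{e\in E(G')}\Pr{\set{e}}=1\geq\alpha$, settling the general-adversary case.

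The main technical step is Part~\ref{prop:uni-ne-cycle:ii}. Here I exploit anonymity to constrain the adversary's distribution on the path: the only non-trivial automorphism is the reversal $v_i\leftrightarrow v_{n+1-i}$, and the requirement $I_{v_j}=I_{v_{n+1-j}}$ together with the telescoping identity $I_{v_{j+1}}-I_{v_j}=(2j-n)\,\Pr{\set{e_j}}$ (obtained from the explicit formula $I_{v_j}=\sum_{i<j} i\,\Pr{\set{e_i}}+\sum_{i\geq j}(n-i)\,\Pr{\set{e_i}}$) forces the symmetric probability pattern $\Pr{\set{e_i}}=\Pr{\set{e_{n-i}}}$ for every $i$. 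Substituting this into $I_v(G')=\sum_{i=1}^{n-1}(n-i)\,\Pr{\set{e_i}}$ and pairing the contributions of $e_i$ and $e_{n-i}$ then produces the required lower bound of at least $\tfrac12\lfloor(n-1)/2\rfloor$, which dominates $\alpha$ by assumption. Once the probability symmetry on the path is in hand, the remaining pairing computation is routine; this symmetry derivation is the only nontrivial obstacle.
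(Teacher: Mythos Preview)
Your proof is correct and follows essentially the same approach as the paper: both reduce to the single deviation $S_v'=\emptyset$, and for part~(ii) both derive the probability symmetry $\Pr{\set{e_i}}=\Pr{\set{e_{n-i}}}$ on the resulting path from anonymity via the same telescoping identity $I_{v_{j+1}}-I_{v_j}=(2j-n)\,\Pr{\set{e_j}}$ (the paper writes it as $I_i=(n-2i)\,\Pr{\set{e_i}}+I_{i+1}$). Your handling of the ``easy'' deviations is even slightly cleaner than the paper's, and your final pairing computation, while presented differently from the paper's half-sum argument, yields the same bound (indeed, your pairing actually gives $I_v(G')=\tfrac{n}{2}$ exactly, which is stronger than the stated $\tfrac12\lfloor\tfrac{n-1}{2}\rfloor$).
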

\begin{proof}
Maximality in both cases is due to the cycle
having minimum disconnection cost, namely $0$.
Buying or exchanging edges is also not beneficial
since the cycle already has minimum disconnection cost.
We only have to check whether it is beneficial 
for a player $v$ to sell her one edge.
Selling the edge yields a path with $v$ at one of its ends.
This increases disconnection cost for $v$ to at least $1$,
since the removal of any edge disconnects $v$ from at least one other vertex.
This proves~\ref{prop:uni-ne-cycle:i}.
\par
To prove~\ref{prop:uni-ne-cycle:ii},
we have to establish a better lower bound on
the new disconnection cost for~$v$.
Anonymity of indirect cost allows us to do so.
Let the path be $(v_1,e_1,v_2,\hdots,e_{n-1},v_n)$ with $v=v_1$.
We claim that
\begin{equation}
  \label{eqn:symmetry-path}%
  \Pr{\set{e_i}} = \Pr{\set{e_{n-i}}} \quad \text{for all $i\in\set{1,\hdots,n-1}$}
  \enspace,
\end{equation}
\ie the adversary behaves like a symmetric one on this graph.
From~\eqref{eqn:symmetry-path} it follows 
\mbox{$\sum_{i=1}^{\sceil{\sfrac{(n-1)}{2}}} \Pr{\set{e_i}} \geq \frac{1}{2}$}.
Since each of the edges $e_1,\hdots,e_{\sceil{\sfrac{(n-1)}{2}}}$
has relevance at least $\sfloor{\sfrac{(n-1)}{2}}$
for $v$, the proposition follows.
\par
We are left with proving~\eqref{eqn:symmetry-path}.
For each $i \in \setn{n}$ we write $I_i$ for the indirect cost of vertex $v_i$, 
and moreover define its \term{left indirect cost} by
$I_i^\Left \df \sum_{j=1}^{i-1} j \, \Pr{\set{e_j}}$
and its \term{right indirect cost} by
$I_i^\Right \df \sum_{j=1}^{n-i} j \, \Pr{\set{e_{n-j}}}$.
Then clearly $I_i = I_i^\Left + I_i^\Right$.
It suffices to show~\eqref{eqn:symmetry-path} for $i < \frac{n}{2}$.
We have (in fact even for $i \leq n-1$) on the one hand:
\begin{align*}
  I_i & = I_i^\Left + (n-i) \, \Pr{\set{e_i}} + I_{i+1}^\Right \\
  & = I_i^\Left + (n-i) \, \Pr{\set{e_i}} + I_{i+1} - I_{i+1}^\Left \\
  & = I_i^\Left + (n-i) \, \Pr{\set{e_i}} + I_{i+1} - \parens{I_i^\Left + i \, \Pr{\set{e_i}}} \\
  & = (n-2i) \, \Pr{\set{e_i}} + I_{i+1}\enspace.
\end{align*}
On the other hand, 
along the same lines we prove $I_{n-i+1} = (n-2i) \, \Pr{\set{e_{n-i}}} + I_{n-i}$.
By anonymity, $I_i = I_{n-i+1}$ and $I_{i+1} = I_{n-i}$.
It follows $(n-2i) \, \Pr{\set{e_i}} = (n-2i) \, \Pr{\set{e_{n-i}}}$,
and since $i < \frac{n}{2}$, this means $\Pr{\set{e_i}} = \Pr{\set{e_{n-i}}}$.
\end{proof}
For anonymous disconnection cost,
this proves existence of \NE\
for all ranges of $\alpha$
provided that $n\geq 9$,
since then $2-\frac{1}{n-1}\leq2\leq\frac{1}{2}\sfloor{\frac{n-1}{2}}$.
In the range 
$2-\frac{1}{n-1}\leq\alpha\leq\frac{1}{2}\sfloor{\frac{n-1}{2}}$
two very different topologies -- namely cycle and star --
co-exist as \NE.
\begin{convention*}
All our statements on upper bounds on the price of anarchy
are restricted to those combinations of $n$ and $\al$ for which
equilibria exist for the respective adversary.
Instead of this convention, we could rely on the maximum over
the empty set being defined to~$-\infty$.
Hence any alleged upper bound on the price of anarchy would be true
in case that no equilibria exist.
\end{convention*}
\par\medskip
The following is a consequence of 
\autoref{prop:uni-opt}, \autoref{prop:uni-ne-star},
and\linebreak \autoref{prop:uni-ne-cycle}.
\begin{theorem}
\label{thm:uni-stability}%
For anonymous disconnection cost and $n \geq 9$ the price of stability is bounded by $1+\frac{8}{n-2} = 1+o(1)$.
\end{theorem}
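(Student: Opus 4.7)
The plan is to exhibit, for every admissible value of $\alpha$, a \NE\ whose social cost is at most $\parens{1+\frac{8}{n-2}}$ times the optimum identified by \autoref{prop:uni-opt}. The two building blocks are the outward-directed star (a \NE\ for $\al \geq 2-\frac{1}{n-1}$ by \autoref{prop:uni-ne-star}\ref{prop:uni-ne-star:ii}) and the directed cycle (a \NE\ for $\al \leq \frac{1}{2}\sfloor{\frac{n-1}{2}}$ by \autoref{prop:uni-ne-cycle}\ref{prop:uni-ne-cycle:ii}). The first thing to check is that these two ranges jointly cover all $\al > 0$ once $n \geq 9$: here $\frac{1}{2}\sfloor{\frac{n-1}{2}} \geq 2$ while $2-\frac{1}{n-1} < 2$, so the cycle covers $\al \leq 2$ and the star covers $\al \geq 2-\frac{1}{n-1}$, with a non-empty overlap. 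Hence at least one of the two is always available.

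Next I split by which graph is optimal. If $\al \leq \frac{1}{2}\sfloor{\frac{n-1}{2}}$, then in particular $\al \leq 2(n-1)$, so the cycle is simultaneously a \NE\ and an optimum, giving ratio $1$. Symmetrically, if $\al \geq 2(n-1)$, then $\al > 2-\frac{1}{n-1}$, so the star is simultaneously a \NE\ and an optimum, again giving ratio $1$. The only interesting case is the middle range $\frac{1}{2}\sfloor{\frac{n-1}{2}} < \al < 2(n-1)$, where the cycle is the optimum (cost $n\al$) but need not be a \NE, while the star is guaranteed to be a \NE\ (cost $(n-1)(\al+2)$).

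In that middle range the witness \NE\ is the star, so the ratio equals
\begin{equation*}
\frac{(n-1)(\al+2)}{n\al} \;=\; \frac{n-1}{n} + \frac{2(n-1)}{n\al}\enspace.
\end{equation*}
Using the hypothesis of the case together with $\sfloor{\frac{n-1}{2}} \geq \frac{n-2}{2}$ yields $\al > \frac{n-2}{4}$, so the right-hand side is strictly less than $1 + \frac{8(n-1)}{n(n-2)} \leq 1 + \frac{8}{n-2}$. Combining the three cases gives the claimed bound, which is manifestly $1+o(1)$.

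There is no substantial obstacle: the argument is essentially a case analysis matching the two constructed \NE\ to the two shapes of optimum, and the only computation is the one-line estimate above. The one place that requires care is verifying that the assumption $n \geq 9$ is exactly what is needed to guarantee coverage of the $\al$-axis by the two \NE\ constructions; everything else is direct substitution into \autoref{prop:uni-opt}, \autoref{prop:uni-ne-star}, and \autoref{prop:uni-ne-cycle}.
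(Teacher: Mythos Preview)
Your proof is correct and follows essentially the same approach as the paper: a three-case split according to whether the cycle, the star, or both are available as equilibria matching the shape of the optimum, with the only nontrivial ratio arising in the middle range and handled by the same estimate $\alpha \geq \frac{1}{2}\sfloor{\frac{n-1}{2}} \geq \frac{n-2}{4}$. The paper's write-up differs only cosmetically, bounding $\frac{(n-1)(\alpha+2)}{n\alpha} \leq 1 + \frac{2}{\alpha}$ first and then substituting, whereas you expand the fraction before substituting; both arrive at $1+\frac{8}{n-2}$.
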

\begin{proof}
For $\alpha\leq\frac{1}{2}\sfloor{\frac{n-1}{2}}$ the cycle is a \NE\
as well as an optimum, and so the price of stability is $1$.
For $\alpha\geq 2\,(n-1)$ a star is a \NE\
as well as an optimum, and so the price of stability is $1$.
\par
For $\frac{1}{2}\sfloor{\frac{n-1}{2}}\leq\alpha\leq 2\,(n-1)$,
the star is a \NE\ (since $2 \leq \al$ by $n\geq 9$, so $2-\frac{1}{n-1} < 2 \leq \al$)
and the cycle is an optimum.
The price of stability so is upper-bounded by
\begin{equation*}
\frac{(n-1)\,(\alpha+2)}{n\alpha}
\leq 1+\frac{2}{\alpha}
\leq 1+\frac{4}{\sfloor{\frac{n-1}{2}}}
\leq 1+\frac{8}{n-2}
\enspace.\qedhere
\end{equation*}
\end{proof}

\section{Simple-Minded Adversary}
\label{sec:uni-simple}%
The simple-minded adversary picks an edge uniformly at random,
that is, $\Pr{\set{e}}=\frac{1}{m}$ for all $e\in E$.
Then we have individual and social cost:
\begin{align*}
C_v(S) & = \nrm{S_v}\,\alpha + \frac{1}{m} \sum_{e\in E} \rel(e,v) 
	= \nrm{S_v}\,\alpha + \frac{1}{m} \Rel(v)
	\quad \text{for $v\in V$,}\\
\SC(S) &= m\,\alpha + \frac{1}{m} \sum_{v\in V} \Rel(v)\enspace.
\end{align*}
Clearly, this is a symmetric adversary
and hence disconnection cost is anonymous.
All results in this section are for the simple-minded adversary.
As promised earlier,
we give an example for a non-linear (in~$n$) social cost.
\begin{proposition}
\label{prop:sc-path}%
Social cost of a path is
$(n-1)\,\alpha + \frac{1}{3} n \, (n+1) = \Theta(n\alpha + n^2)$.
\end{proposition}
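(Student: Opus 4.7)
The plan is a direct calculation using the formula $\SC(S) = m\alpha + \sum_{e\in E}\sep(e)\,\Pr{\set{e}}$ established in \autoref{sec:adversary-model}, specialized to the simple-minded adversary where $\Pr{\set{e}}=\frac{1}{m}$.

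First, I would set up notation: label the vertices of the path as $v_1,\dots,v_n$ and the edges as $e_i=\set{v_i,v_{i+1}}$ for $i\in\setn{n-1}$. Then $m=n-1$, so building cost equals $(n-1)\,\alpha$, matching the first summand. Next, observe that every edge of a path is a bridge, and removing $e_i$ splits the path into two components of sizes $i$ and $n-i$; hence $\nn(e_i)=\min\set{i,n-i}$ and the formula $\sep(e) = 2\,\nn(e)\,(n-\nn(e))$ from \autoref{sec:adversary-model} gives $\sep(e_i) = 2\,i\,(n-i)$.

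Second, I would compute the sum of separations. Using standard closed forms,
\begin{equation*}
\sum_{i=1}^{n-1} 2\,i\,(n-i)
= 2n\,\frac{(n-1)\,n}{2} - 2\,\frac{(n-1)\,n\,(2n-1)}{6}
= \frac{(n-1)\,n\,\bigl(3n-(2n-1)\bigr)}{3}
= \frac{(n-1)\,n\,(n+1)}{3}\enspace.
\end{equation*}
Dividing by $m=n-1$ gives total disconnection cost $\frac{1}{3}\,n\,(n+1)$, which completes the claimed equality. The $\Theta(n\alpha+n^2)$ estimate then follows immediately since $(n-1)\,\alpha = \Theta(n\alpha)$ and $\frac{1}{3}\,n\,(n+1) = \Theta(n^2)$.

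There is no real obstacle here; the only thing to be careful about is applying the separation formula with the correct factor of $2$ (ordered pairs, as defined in \autoref{sec:adversary-model}) and collecting the arithmetic sums correctly. No characteristic of equilibria is used, so the proposition holds for any path regardless of orientation of the underlying strategy profile.
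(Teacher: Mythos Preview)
Your proof is correct and essentially identical to the paper's: both compute $\frac{1}{m}\sum_{e}\sep(e)$ for the path by summing $2i(n-i)$ over $i=1,\dots,n-1$ and simplifying with the standard closed forms. The only cosmetic difference is that the paper starts from the relevance-based expression $\frac{1}{m}\sum_{v}\sum_{e}\rel(e,v)$ and swaps the order of summation to reach $\sep(e)$, whereas you start directly from the separation-based social cost formula.
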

\begin{proof}
We have the social cost of a path:
\begin{align*}
\MoveEqLeft
m\,\alpha + \frac{1}{m} \sum_{v\in V} \sum_{e\in E} \rel(e,v)
 = (n-1)\,\alpha + \frac{1}{n-1} \sum_{e\in E} \sum_{v\in V} \rel(e,v)\\
& = (n-1)\,\alpha + \frac{1}{n-1} \sum_{e\in E} \sep(e) \\
& = (n-1)\,\alpha + \frac{1}{n-1} \sum_{e\in E} 2 \, \nu(e) \, (n-\nu(e))\\
& = (n-1)\,\alpha + \frac{2}{n-1} \sum_{k=1}^{n-1} k \, (n-k)\\
& = (n-1)\,\alpha + \frac{2}{n-1} \delim{(}{n \frac{(n-1)\,n}{2} - \frac{(n-1)\,n\,(2n-1)}{6}}{)}\\
& = (n-1)\,\alpha + \frac{1}{3} n \, (n+1) = \Theta(n\alpha + n^2)\enspace.
\qedhere
\end{align*}
\end{proof}

\subsection*{Bounding Cost Changes and Cycle Length}
We estimate the benefit for a player of building or selling
a particular edge.
This will become useful in several places.
It moreover immediately leads to a structural result 
on the length of cycles.
The following remark is purely graph-theoretic
and will be used here and later, in \autoref{sec:bilateral},
when we study convexity of cost.
\begin{remark}
Let $G=(V,E)$ be a graph.
\label{rem:one-cycle}%
\begin{enumerate}[label=(\roman*)]
\item\label{rem:one-cycle:i}%
Let $e=\set{v,w}\not\in E$
and $C$ be \emphasis{any} cycle in $G+e$ with $e \in E(C)$.
Then all bridges in $G$ that are non-bridges in $G+e$
are located on $C$.
\item\label{rem:one-cycle:ii}%
Let $e=\set{v,w}\in E$ be a non-bridge
and $C$ be \emphasis{any} cycle with $e\in E(C)$.
Then all bridges of $G-e$ that are non-bridges in $G$,
are in~$E(C)$.
\end{enumerate}
\end{remark}
\begin{proof}
\ref{rem:one-cycle:i}~The additional edge $e$ creates exactly one cycle $\tC$ in the bridge tree.
All bridges in $G$ that are non-bridges in $G+e$ correspond to edges on $\tC$,
and all those in turn correspond to edges on $C$.
\par
\ref{rem:one-cycle:ii}~Let $f$ be a non-bridge in $G$ and a bridge in $G-e$.
Then $G-e$ consists of two subgraphs $G_1$ and $G_2$ that are connected only by $f$.
Since $f$ was no bridge before $e$ was removed,
$e$ must also connect $G_1$ with $G_2$.
Moreover, there are no other edges between $G_1$ and $G_2$.
It follows that any cycle that contains $e$ also contains $f$.
\end{proof}
\begin{proposition}
For each player $v$ we have $\Rel(v)\leq \frac{n\,(n-1)}{2}$.
\end{proposition}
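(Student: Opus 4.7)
My plan is to reformulate $\Rel(v)$ as a weighted distance sum in the bridge tree and then obtain the bound by induction. First, since $\rel(e,v)=0$ for every non-bridge~$e$, only bridges contribute to $\Rel(v)$. Re-indexing the incidences $(e,w)$ by the ``separated'' vertex~$w$, and using that the bridges separating $v$ from a given $w \notin B_v$ are exactly the edges of the $\tG$-path from $B_v$ to $B_w$, I rewrite
\[
\Rel(v) \;=\; \sum_{w \in V \setminus B_v} d_{\tG}(B_v, B_w) \;=\; \sum_{B \in V(\tG) \setminus \{B_v\}} |B| \cdot d_{\tG}(B_v, B),
\]
where $B_v$ denotes the \bcc\ containing~$v$, and the second equality groups vertices by their \bcc\ (recall the convention of counting each $B \in V(\tG)$ with multiplicity $|B|$).

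I would now prove the purely combinatorial statement: for any rooted tree $T$ with root $r$ and positive integer vertex weights summing to $n$, the quantity $S(T,r) := \sum_{B \neq r} |B| \cdot d_T(r, B)$ is at most $n(n-1)/2$, by induction on $n$. The base case $n=1$ is the trivial empty sum. For the inductive step, let $r$ have children $c_1, \ldots, c_k$ with subtrees $T_i$ of total weights $n_i \geq 1$ summing to $n - |r|$. Using $d_T(r,B) = 1 + d_{T_i}(c_i,B)$ for $B \in T_i$ together with the induction hypothesis $S(T_i,c_i) \leq n_i(n_i-1)/2$, I obtain
\[
S(T,r) \;\leq\; \sum_{i=1}^{k} \Bigl( n_i + \tfrac{n_i(n_i-1)}{2} \Bigr) \;=\; \tfrac{1}{2} \sum_{i=1}^{k} n_i(n_i+1).
\]

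The last step is the algebraic observation $\sum_i n_i^2 \leq (\sum_i n_i)^2 = (n - |r|)^2$ (valid since all $n_i \geq 0$), which bounds $S(T,r)$ by $(n-|r|)(n-|r|+1)/2$; since $|r| \geq 1$ and the map $k \mapsto (n-k)(n-k+1)$ is non-increasing for $k \geq 1$, this is at most $n(n-1)/2$. I do not foresee a serious obstacle, only a little care with the bookkeeping in the bridge-tree rewriting. It is worth noting that Lemma~\ref{lem:double-count} alone yields only $\Rel(v) \leq (n-1)\,\diam(\tG)$, which a path of length $n-1$ pushes to $(n-1)^2$ — too weak — so the finer, subtree-weighted summation above is genuinely what is needed, and the bound is tight, being attained by the path graph rooted at an endpoint.
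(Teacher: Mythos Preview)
Your proof is correct. Both you and the paper start from the same reindexing idea, swapping the sum over bridges for a sum over separated vertices~$w$; your version keeps the exact equality $R(v)=\sum_{B}|B|\,d_{\tG}(B_v,B)$ and then bounds it by a weighted-tree induction. The paper instead relaxes one step earlier: it bounds the number of bridges on each shortest $v$--$w$ path by the path length itself, obtaining $R(v)\leq\sum_{w}\dist_G(v,w)$, and then simply observes that this sum is maximal when $G$ is a path with $v$ at an endpoint, giving $\sum_{k=1}^{n-1}k=\tfrac{n(n-1)}{2}$ in one line. So the paper's second step is considerably shorter and avoids the induction entirely; on the other hand, your route is fully self-contained and shows that the bound already holds for the bridge-tree quantity, without the detour through $\dist_G$. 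Your closing remark that Lemma~\ref{lem:double-count} alone is too weak is accurate in spirit: the paper reuses only the double-counting \emph{technique} from that lemma's proof, not its diameter conclusion.
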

\begin{proof}
We repeat the counting argument from the proof of \autoref{lem:double-count}:
\begin{align*}
R(v) & \stackrel{\text{def}}{=}
\sum_{e\in E} \rel(e,v)
\stackrel{\text{\eqref{eqn:rel-paths}}}{=}
\sum_{\substack{e\in E\\\text{$e$ is a bridge}}} \card{\setst{P\in\shp(v)}{e\in E(P)}}\\
& = \sum_{P\in \shp(v)} \card{\setst{e\in E(P)}{\text{$e$ is a bridge}}}
\leq \sum_{P\in \shp(v)} \card{E(P)}
= \sum_{w\in V}\dist(v,w)
\enspace.
\end{align*}
This is maximal if $G$ is a path with $v$ at its end;
then $\Rel(v)=\frac{n\,(n-1)}{2}$.
\end{proof}
\par
Fix a player $v$.
Let $R\df\Rel(v)$ and let $R'$ be the same quantity when an additional edge~$e$ is built by~$v$.
By the previous proposition, we have $R,R'\leq \frac{n\,(n-1)}{2}$.
The benefit in disconnection cost of building this edge for player $v$ is
$\frac{1}{m} R
-\frac{1}{m+1} R'$.
Due to the change in denominators from \enquote{$m$} to \enquote{$m+1$}
this expression looks somewhat unhandy.
Yet, we can give good bounds incorporating the change in relevances,
\mbox{$\Delta R \df R - R' \geq 0$},
with one denominator.
We can do something similar 
for the case when the player \emphasis{sells} an edge,
where we put $\Delta R \df R' - R \geq 0$.
\begin{proposition}
\label{prop:delta-r}%
\hfill
\begin{enumerate}[label=(\roman*)]
\item\label{prop:delta-r:i}%
If a player builds an additional edge
and the sum of her relevances 
drops from $R$ to $R'$ by $\Delta R \df R - R'$,
then her improvement in disconnection cost is
at least $\frac{1}{m+1} \Delta R$
and at most
$\frac{1}{2} + \frac{1}{m+1} \Delta R \leq \frac{n}{2}$.
\item\label{prop:delta-r:ii}%
If a player sells a non-bridge and the sum of her relevances 
increases from $R$ to $R'$ by $\Delta R \df R' - R$,
then her impairment in disconnection cost is
at least $\frac{1}{m} \Delta R$
and at most
$\frac{1}{2} + \frac{1}{m} \Delta R \leq \frac{n}{2}$.
\end{enumerate}
\end{proposition}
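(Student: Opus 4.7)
The plan is to write the disconnection cost of player $v$ before and after the change purely in terms of the sum of relevances and the number of edges, and then rearrange the difference so that the dominant term is exactly the one stated in the proposition, with a controllable remainder.

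For part~\ref{prop:delta-r:i}, before the edge is built the disconnection cost of $v$ is $\frac{R}{m}$, and after it is built it is $\frac{R'}{m+1} = \frac{R - \Delta R}{m+1}$. I would combine over a common denominator to obtain
\begin{equation*}
\frac{R}{m} - \frac{R-\Delta R}{m+1}
= \frac{\Delta R}{m+1} + \frac{R}{m(m+1)}.
\end{equation*}
The lower bound $\frac{\Delta R}{m+1}$ then follows from $R \geq 0$. For the upper bound, I would invoke the previous proposition, which gives $R \leq \frac{n(n-1)}{2}$, together with the fact that any connected $G$ satisfies $m \geq n-1$, so $m(m+1) \geq (n-1)\,n$ and the correction term $\frac{R}{m(m+1)}$ is bounded by $\frac{1}{2}$. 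The final estimate $\frac{1}{2} + \frac{\Delta R}{m+1} \leq \frac{n}{2}$ reduces to $\frac{\Delta R}{m+1} \leq \frac{n-1}{2}$, which again follows from $\Delta R \leq R \leq \frac{n(n-1)}{2}$ and $m+1 \geq n$.

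Part~\ref{prop:delta-r:ii} is analogous, but with the roles of the old and new edge counts exchanged: the impairment is $\frac{R'}{m-1} - \frac{R}{m}$ with $R' = R + \Delta R$, and the same kind of partial-fraction rearrangement gives
\begin{equation*}
\frac{R'}{m-1} - \frac{R}{m} = \frac{\Delta R}{m} + \frac{R'}{m(m-1)}.
\end{equation*}
Here the crucial point is that selling a \emph{non-bridge} keeps the graph connected, so the post-sale graph still satisfies $m-1 \geq n-1$, i.e.\ $m \geq n$. Applying $R' \leq \frac{n(n-1)}{2}$ to the new graph and using $m(m-1) \geq n(n-1)$ bounds the correction term $\frac{R'}{m(m-1)}$ by $\frac{1}{2}$, and the same reasoning as before yields $\frac{\Delta R}{m} \leq \frac{n-1}{2}$.

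The argument is essentially a routine algebraic manipulation, so there is no real obstacle; the only place requiring a moment's care is case~\ref{prop:delta-r:ii}, where one must remember that the hypothesis \enquote{non-bridge} is precisely what forces the post-sale edge count to satisfy $m-1 \geq n-1$, and that the bound $R' \leq \frac{n(n-1)}{2}$ has to be applied to the new (still connected) graph rather than to the old one.
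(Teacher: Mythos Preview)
Your argument is correct and matches the paper's proof essentially line for line: both rewrite the change in disconnection cost as $\frac{\Delta R}{m+1} + \frac{R}{m(m+1)}$ (respectively $\frac{\Delta R}{m} + \frac{R'}{m(m-1)}$) and bound the correction term using $R,R' \leq \frac{n(n-1)}{2}$ together with $m \geq n-1$ (respectively $m \geq n$). The paper in fact only spells out part~\ref{prop:delta-r:i} and dismisses part~\ref{prop:delta-r:ii} with \enquote{proved alike, using $n \leq m$ since the graph contains a cycle}; your explicit remark that the non-bridge hypothesis is what guarantees $m-1 \geq n-1$ and that the bound on $R'$ must be read in the post-sale graph is exactly the care that step requires.
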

\begin{proof}
\ref{prop:delta-r:i}~We have
\begin{align*}
\MoveEqLeft
\frac{1}{m} R - \frac{1}{m+1} R'
=\frac{1}{m} R - \frac{1}{m+1} \, \parens{R + (R'-R)}
=\parens{\frac{1}{m}-\frac{1}{m+1}} R + \frac{1}{m+1} \Delta R\\
&=\frac{1}{m\,(m+1)} R + \frac{1}{m+1} \Delta R
\quad
\begin{cases}
\leq \frac{1}{2} + \frac{1}{m+1} \Delta R \leq \frac{n}{2}\\
\geq \frac{1}{m+1} \Delta R
\end{cases}.
\end{align*}
We used $\Delta R \leq \frac{n\,(n-1)}{2}$ and $n-1\leq m$ 
for the upper bound.
\ref{prop:delta-r:ii} is proved alike,
using $n\leq m$ since the graph contains a cycle.
\end{proof}
\begin{proposition}
\label{prop:delta-r-cycle}%
\hfill
\begin{enumerate}[label=(\roman*)]
\item\label{prop:delta-r-cycle:iii}%
If a player builds an edge creating a cycle of length $\ell$,
the improvement in disconnection cost is at most
$\frac{1}{2} + \frac{1}{m+1} \, (\ell-1) \, (n-\frac{\ell}{2})$.
(The graph is allowed to already contain other cycles.)
\item\label{prop:delta-r-cycle:iv}%
If a player sells an edge destroying a cycle of length $\ell$,
the impairment in disconnection cost is at most
$\frac{1}{2} + \frac{1}{m} \, (\ell-1) \, (n-\frac{\ell}{2})$.
\end{enumerate}
\end{proposition}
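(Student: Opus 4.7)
The plan is to reduce both parts to Proposition \ref{prop:delta-r} by establishing, in each case, the bound
\[
\Delta R \leq (\ell-1)\bigl(n-\ell/2\bigr)
\]
on the change of relevance experienced by $v$: this turns the conclusions of Proposition \ref{prop:delta-r}\ref{prop:delta-r:i} and \ref{prop:delta-r:ii} into exactly (iii) and (iv).

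For (iii) I fix the cycle $C$ of length $\ell$ that $e=\{v,w\}$ creates and write its other edges as a path $P=(v_0,\ldots,v_{\ell-1})$ with $v_0=v$ and $v_{\ell-1}=w$. The first step is to argue that only bridges of $G$ lying on $C$ contribute to $\Delta R$. Non-bridges have relevance $0$ before and after, so contribute nothing. A bridge of $G$ not on $C$ remains a bridge in $G+e$ by Remark \ref{rem:one-cycle}\ref{rem:one-cycle:i}, and its associated vertex bipartition is unchanged, because both endpoints of $e$ must lie on the same side of such a bridge (else a cycle in $G+e$ would contain it, contradicting that it remains a bridge). Bridges of $G$ on $C$ become non-bridges in $G+e$ and so drop their relevance entirely. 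The second step is a bipartition estimate: for a bridge $f_j=\{v_j,v_{j+1}\}$ of $G$ on $P$, removing $f_j$ separates the initial segment $v_0,\ldots,v_j$ from the final segment $v_{j+1},\ldots,v_{\ell-1}$, whence $v$'s component has at least $j+1$ vertices and $\rel_G(f_j,v)\leq n-(j+1)$. Summing over $j=0,\ldots,\ell-2$ (which only enlarges the sum, since each term is non-negative as $\ell\leq n$) gives the target bound on $\Delta R$, and Proposition \ref{prop:delta-r}\ref{prop:delta-r:i} yields (iii).

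Part (iv) follows by the symmetric argument. Fixing a cycle $C$ of length $\ell$ in $G$ through the sold edge $e$, Remark \ref{rem:one-cycle}\ref{rem:one-cycle:ii} restricts the edges contributing to $\Delta R$ to those non-bridges of $G$ on $C$ that turn into bridges of $G-e$. The same bipartition estimate, applied inside $G-e$, shows $\rel_{G-e}(f_j,v)\leq n-(j+1)$ for each such $f_j$; summing yields $\Delta R\leq(\ell-1)(n-\ell/2)$ and Proposition \ref{prop:delta-r}\ref{prop:delta-r:ii} finishes (iv).

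There is no substantive obstacle; the only points requiring care are the identification of $v_0,\ldots,v_j$ with the $v$-side of the relevant bridge and the argument that edges off $C$ do not contribute to $\Delta R$, both of which rest on Remark \ref{rem:one-cycle} together with a brief inspection of the component structure.
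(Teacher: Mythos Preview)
Your proof is correct and follows essentially the same approach as the paper: both use Remark~\ref{rem:one-cycle} to localize the change in relevance to edges on the chosen cycle, bound each $\rel(f_j,v)$ by $n-(j+1)$ via the observation that $v$'s side of the cut contains the initial segment of the path, sum to obtain $\Delta R\leq(\ell-1)(n-\ell/2)$, and then invoke Proposition~\ref{prop:delta-r}. Your write-up is in fact slightly more explicit than the paper's about why off-cycle bridges do not contribute (unchanged bipartition), but the argument is the same.
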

\begin{proof}
\ref{prop:delta-r-cycle:iii}~Let $C=(v,e_1,v_1,\hdots,v_{\ell-1},e_{\ell},v)$ be any new cycle,
created by the new edge $e_{\ell}$ bought by~$v$.
By \autoref{rem:one-cycle}\ref{rem:one-cycle:i},
all edges for which a change in relevance occurs by adding $e_{\ell}$,
\ie all edges that were bridges and become non-bridges due to the new edge,
are located on this cycle.
In the best case, \ie in case of maximal improvement,
\begin{compactitemize}
  \item all $\ell$ edges were bridges before and became non-bridges now, and
  \item without the additional edge,
	$n-1$ vertices are reached from $v$ only through $e_1$,
	$n-2$ through the next edge, and so on;
	edge $e_{\ell-1}$ is relevant for $(n-(\ell-1))$ vertices.
\end{compactitemize}
It follows 
\begin{equation*}
\Delta R \leq \sum_{k=1}^{\ell-1} (n-k)= (\ell-1) \, n - \sum_{k=1}^{\ell-1} k
= (\ell-1) \, n - \frac{(\ell-1)\,\ell}{2} = (\ell-1) \, (n-\frac{\ell}{2})
\enspace.
\end{equation*}
The statement follows with \autoref{prop:delta-r}\ref{prop:delta-r:i}.
\par
\ref{prop:delta-r-cycle:iv}~By \autoref{rem:one-cycle}\ref{rem:one-cycle:ii},
we may consider any cycle that is destroyed.
The rest is the same calculation as for~\ref{prop:delta-r-cycle:iii}.
\end{proof}
\begin{proposition}
\label{prop:cycle-length}%
Let $\ell < \alpha + \frac{1}{2}$.
\begin{enumerate}[label=(\roman*)]
\item
\label{prop:cycle-length:i}%
If a player builds an edge
creating a cycle of length $\ell$,
she suffers an impairment in her cost.
\item
\label{prop:cycle-length:ii}%
If a player sells an edge
destroying a cycle of length $\ell$,
she experiences an improvement in her cost.
\end{enumerate}
\end{proposition}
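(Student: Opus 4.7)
The plan is to combine Proposition~\ref{prop:delta-r-cycle} with the trivial observation that the denominator appearing there (either $m+1$ or $m$) is at least $n$ in each case, which collapses the bound to a linear function of $\ell$ alone.

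For part~\ref{prop:cycle-length:i}, I would first note that we may assume the graph before the move is connected: the new edge creates a cycle through its two endpoints, so those endpoints already lie in the same connected component, and the new edge therefore cannot change the set of components. If the graph were disconnected, both disconnection costs would be $\infty$ and the statement is vacuous. So assume connectivity. By Proposition~\ref{prop:delta-r-cycle}\ref{prop:delta-r-cycle:iii} the improvement in disconnection cost from adding the new edge is at most
\[
\tfrac{1}{2} + \tfrac{1}{m+1}(\ell-1)\bigl(n-\tfrac{\ell}{2}\bigr).
\]
After the edge is added the graph is connected and contains a cycle, so it has at least $n$ edges, i.e., $m+1 \geq n$. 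Since $\ell \geq 3$, also $n-\ell/2 \leq n \leq m+1$, hence $(n-\ell/2)/(m+1) \leq 1$, and the improvement is at most $\tfrac{1}{2}+(\ell-1) = \ell - \tfrac{1}{2}$. By hypothesis $\ell - \tfrac{1}{2} < \alpha$, so the gain in disconnection cost falls short of the additional building cost $\alpha$ the player incurs, and her individual cost strictly rises.

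For part~\ref{prop:cycle-length:ii} the argument is symmetric. The graph before the move is connected (otherwise cost is $\infty$ and there is nothing to improve) and already contains the cycle of length $\ell$ being destroyed, so $m \geq n$. Proposition~\ref{prop:delta-r-cycle}\ref{prop:delta-r-cycle:iv} bounds the impairment in disconnection cost by $\tfrac{1}{2} + \tfrac{1}{m}(\ell-1)(n-\tfrac{\ell}{2}) \leq \ell - \tfrac{1}{2} < \alpha$, using the same inequality $(n-\ell/2)/m \leq 1$. The player saves $\alpha$ in building cost, which strictly exceeds her loss in disconnection cost, so her individual cost strictly decreases.

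I do not anticipate any real obstacle: the substantive inequalities are already packaged in Proposition~\ref{prop:delta-r-cycle}, and the only step to verify is the bound $m+1 \geq n$ (respectively $m \geq n$), which is immediate from connectivity together with the presence of a cycle in the relevant graph.
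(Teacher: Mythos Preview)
Your proof is correct and follows essentially the same route as the paper: apply Proposition~\ref{prop:delta-r-cycle}, then use $m+1\geq n$ (respectively $m\geq n$) to reduce the bound on the disconnection-cost change to $\ell-\tfrac{1}{2}$ and compare with~$\alpha$. Your simplification $(n-\ell/2)/(m+1)\leq 1$ is in fact a bit tidier than the paper's algebra, which expands $n(\alpha+\tfrac{1}{2})>\ell(n-\tfrac{\ell}{2}+\tfrac{1}{2})$ before arriving at the same inequality $\ell<\alpha+\tfrac{1}{2}$.
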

\begin{proof}
\ref{prop:cycle-length:i}
By \autoref{prop:delta-r-cycle}\ref{prop:delta-r-cycle:iii}, 
the player suffers an impairment in her cost if 
\begin{equation*}
\alpha > \frac{1}{2} + \frac{1}{m+1} \parens{\ell-1} \parens{n-\frac{\ell}{2}}\enspace.
\end{equation*}
Since $m\geq n-1$, this is the case if
$\alpha > \frac{1}{2} + \frac{1}{n} \, (\ell-1) \, (n-\frac{\ell}{2})$,
which is the same as
$n \, (\alpha+\frac{1}{2})>\ell \, (n-\frac{\ell}{2} +\frac{1}{2})$.
Since $\ell\geq 3 \geq 1$, this is the case if
$n \, (\alpha+\frac{1}{2})>\ell n$.
\par
We show \ref{prop:cycle-length:ii} in almost exactly the same way,
using \autoref{prop:delta-r-cycle}\ref{prop:delta-r-cycle:iv}
and that $m\geq n$, since the original graph contains a cycle.
\end{proof}
It follows the structural result:
\begin{corollary}
\label{cor:no-short-cycles}%
No \NE\ contains cycles shorter than $\alpha+\frac{1}{2}$.
\hfill\qedsymbol
\end{corollary}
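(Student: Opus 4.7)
The plan is to derive the corollary as an immediate consequence of Proposition~\ref{prop:cycle-length}\ref{prop:cycle-length:ii}, arguing by contradiction against the Nash property.

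Suppose for contradiction that a \NE\ $S$ contains a cycle $C$ of length $\ell<\alpha+\tfrac{1}{2}$. Pick any edge $e=\{v,w\}\in E(C)$. Since we have restricted attention to essential strategy profiles under \ULF, exactly one endpoint of $e$ is the player who requested it; say $S_{vw}=1$ and $S_{wv}=0$. Consider the unilateral deviation of player $v$ to the strategy $S_v-(v,w)$, \ie selling her link to $w$. Because no other player requested $e$, this removes $e$ from the final graph; since $e$ lies on the cycle $C$ of length $\ell$, this deviation destroys a cycle of length~$\ell$.

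Now apply Proposition~\ref{prop:cycle-length}\ref{prop:cycle-length:ii} with this $\ell<\alpha+\tfrac12$: the impairment in $v$'s disconnection cost caused by removing $e$ is strictly less than the savings $\alpha$ in her building cost, so her total individual cost strictly decreases. This contradicts $S$ being a \NE.

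I do not expect any real obstacle here: the work has already been done in Proposition~\ref{prop:cycle-length}, so the only thing to verify is that an individual player really can unilaterally destroy the cycle. The one subtlety is that in principle the cycle-destroying edge could be one nobody paid for, but the restriction to essential strategy profiles rules this out and makes exactly one endpoint responsible for each edge in the final graph; this is the step worth stating explicitly.
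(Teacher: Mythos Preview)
Your argument is correct and is exactly the intended one: the paper states the corollary with a bare \qedsymbol\ because it follows immediately from Proposition~\ref{prop:cycle-length}\ref{prop:cycle-length:ii}, and you have simply written out the one-line contradiction explicitly. Your remark about essential strategy profiles guaranteeing a unique owner for each edge is a fair point to spell out, though the paper treats this as already absorbed into the standing conventions.
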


\subsection*{Bounding the Price of Anarchy}
The following observation
is the key to showing that a \NE\ does not have 
many more edges than a tree.
\begin{proposition}
\label{prop:chord-free}%
A \NE\ is chord-free.
\end{proposition}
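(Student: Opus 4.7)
The plan is to argue by contradiction: suppose $S$ is a \NE\ whose final graph $G$ contains a chord, that is, a cycle $C$ in $G$ and an edge $e=\set{u,w}\in E(S)$ with $u,w\in V(C)$ but $e\notin E(C)$. By essentiality, the edge $e$ is paid for by exactly one of its endpoints; without loss of generality let this player be~$u$. I will exhibit a strict improvement for $u$ obtained by simply dropping her request for $e$, contradicting the \NE\ property.

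The crux of the argument is to show that $\Delta R \df \Rel_{G-e}(u) - \Rel_{G}(u) = 0$, i.e., that selling the chord changes neither the set of bridges nor the relevance at $u$ of any of them. The cycle $C$ splits at $u$ and $w$ into two edge-disjoint $u$-$w$ paths $P_1,P_2$, yielding two cycles $C_i \df e+P_i$ through $e$ in $G$. Applying \autoref{rem:one-cycle}\ref{rem:one-cycle:ii} to $C_1$, every edge of $G$ that becomes a bridge in $G-e$ must lie in $E(P_1)$; applied to $C_2$, it must also lie in $E(P_2)$. Since $E(P_1)\cap E(P_2)=\emptyset$, no new bridges appear. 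For a bridge $f$ already present in $G$, the endpoints of the non-bridge $e$ lie in the same component of $G-f$ (some cycle through $e$ avoids $f$), so removing $e$ additionally does not alter the bipartition of $V$ induced by $f$, and $\rel(f,u)$ is preserved. Hence $\Delta R = 0$.

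Feeding $\Delta R=0$ into \autoref{prop:delta-r}\ref{prop:delta-r:ii} bounds the impairment in $u$'s disconnection cost from selling $e$ by $\tfrac{1}{2}$. Thus, as long as $\alpha>\tfrac{1}{2}$, the building-cost saving $\alpha$ strictly exceeds this impairment, so $u$ strictly improves by selling $e$, contradicting the assumed \NE\ property.

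The main obstacle is the vanishing-relevance step: one must simultaneously invoke \autoref{rem:one-cycle}\ref{rem:one-cycle:ii} for two cycles whose non-$e$ edges are disjoint, and separately verify carefully that every pre-existing bridge retains its relevance at~$u$. The regime $\alpha\leq\tfrac{1}{2}$ is not directly excluded by this single-move deviation; in that range one would need either a combined buy/sell deviation for~$u$ or a tighter accounting of $\Rel(u)$ through the bridge-tree structure.
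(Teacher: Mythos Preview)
Your argument that $\Delta R = 0$ is correct and amounts to the paper's observation that removing a chord leaves the bridge tree unchanged. However, your proof is incomplete: as you yourself note, the single deviation ``sell the chord'' only yields a strict improvement when $\alpha > \tfrac{1}{2}$, because the impairment from the denominator shift $\tfrac{1}{m} \to \tfrac{1}{m-1}$ can be as large as $\tfrac{1}{2}$. The proposition is asserted for all $\alpha > 0$, so this is a genuine gap, not a cosmetic one.

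The paper closes the gap not by sharper accounting but by a different deviation. If the graph has a bridge $e'$, the chord owner $v$ can \emph{exchange} the chord for a new edge that puts $e'$ on a cycle. This keeps both the building cost and the edge count $m$ fixed while strictly decreasing $\Rel(v)$ (since $\rel(e',v)$ drops from a positive value to $0$ and no relevance increases), hence strictly lowering $v$'s cost for every $\alpha>0$. The only obstruction to this swap is when the required new edge would have to be a multi-edge, which forces $e'=\{v,u\}$ with $u$ a leaf; the paper then shows, by a direct calculation under the standing assumption $\tfrac{1}{m(m-1)}\Rel(v)\geq\alpha$, that the leaf $u$ strictly benefits from buying an edge to any third vertex, again contradicting the \NE\ property. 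If the graph is bridgeless, $\Rel(v)=0$ and simply selling the chord is already a strict improvement for any~$\alpha$.

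The missing idea, then, is the edge-exchange deviation, which is available precisely because we are in \ULF. This is also why the bilateral analogue (\autoref{prop:chord-free-bilateral-simple}) genuinely needs the hypothesis $\alpha>\tfrac12$: there, your single-sell argument is essentially all that survives.
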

\begin{proof}
Selling a chord $e=\set{v,w}$ from a cycle 
$C=(v,\hdots,w,\hdots,v)$ does not increase the relevance 
of any edge for any player.
To see this, we show that the bridge tree does not change by removal of $e$.
Assume for contradiction that there exists an edge $e'$
which is a bridge in $G' \df G - e$, but which is no bridge in $G$.
Then $G'-e'$ consists of two components $G_1$ and $G_2$.
Since $e'$ is no bridge in $G$, the edge $e$ connects $G_1$ and $G_2$.
But then, due to the existence of $C$,
in addition to $e$ there are \emphasis{two} more edges between $G_1$ and $G_2$.
Hence removal of the single edge $e'$ from $G'$ cannot disconnect $G_1$ from $G_2$.
\par
If the graph is bridgeless, removing a chord would thus decrease 
the player's building cost without increasing the disconnection cost.
Now let the graph contain a bridge $e'$.
Due to the decrease in the denominator of the disconnection cost,
removing a chord impairs the disconnection cost.
However, the player owning the chord, say~$v$, would rather remove the chord and 
instead build an edge to form a new cycle containing~$e'$.
The only case where this is impossible
is when $v$ is one endpoint of the bridge $e'=\set{v,u}$,
and $u$ is a leaf vertex.
Then, a double-edge between $v$ and $u$ would be needed,
which is not allowed unless we use a multigraph.
\par
We consider this case now
and show that we in fact do not need a multigraph.
By selling the chord, the disconnection cost for $v$ increases by $\frac{1}{m\,(m-1)}\Rel(v)$.
If this increase is strictly smaller than $\alpha$, we are done.
Hence assume $\frac{1}{m\,(m-1)}\Rel(v)\geq\alpha$ now.
Edge $\set{v,u}$ has relevance $n-1$ for $u$.
Due to the positions of $v$ and $u$, we have $\Rel(u)=\Rel(v)+(n-1)-1$.
If $u$ builds an edge to any other vertex, save $v$,
edge $e'$ is put on a cycle.
The improvement in disconnection cost for $u$ by building such an edge is at least
\begingroup\smaller
\begin{align*}
\MoveEqLeft
\frac{1}{m}\Rel(u) - \frac{1}{m+1} \, (\Rel(u)-(n-1))
  = \frac{1}{m} \, (\Rel(v)+n-2) - \frac{1}{m+1} \, (\Rel(v)-1) \\
& = \delim{(}{\frac{1}{m}-\frac{1}{m+1}}{)} \Rel(v)+ \frac{n-2}{m} + \frac{1}{m+1}
  = \frac{1}{m\,(m+1)} \Rel(v)+ \frac{n-2}{m} + \frac{1}{m+1} \\
& = \delim{(}{\frac{1}{m\,(m-1)}+\frac{1}{m\,(m+1)}-\frac{1}{m\,(m-1)}}{)} \Rel(v)+ \frac{n-2}{m} + \frac{1}{m+1} \\
& \geq \alpha - \frac{1}{m} \parens{\frac{1}{m-1}-\frac{1}{m+1}} \Rel(v)+ \frac{n-2}{m} + \frac{1}{m+1} \\
& \geq \alpha - \frac{1}{m} \parens{\frac{1}{m-1}-\frac{1}{m+1}} \frac{n\,(n-1)}{2} + \frac{n-2}{m} + \frac{1}{m+1} \\
& \geq \alpha - \parens{\frac{1}{m-1}-\frac{1}{m+1}} \frac{n-1}{2} + \frac{n-2}{m} + \frac{1}{m+1} \\
& = \alpha - \frac{2}{(m-1)\,(m+1)} \frac{n-1}{2} + \frac{n-2}{m} + \frac{1}{m+1} \\
& \geq \alpha - \frac{1}{m+1} + \frac{n-2}{m} + \frac{1}{m+1} > \alpha\enspace.
\end{align*}
\endgroup
So $u$ has an incentive to buy an additional edge,
a contradiction to \NE.
\end{proof}
\par
The next two are graph-theoretic results.
The first is a straightforward adaption of a result (and its proof) 
on vertex-connectivity to edge-connectivity;
see, \eg~\cite[\nref{Prop.}{3.1.3}]{Die05} for the version for vertex-connectivity.
\begin{proposition}
Any bridgeless connected graph can be constructed from a cycle 
by successively adding paths or cycles of the form $(u,e_1,v_1,\hdots,v_k,e_{k+1},w)$,
where $u,w$ are vertices of the already constructed graph ($u=w$ is allowed)
and $v_1,\hdots,v_k$ are zero or more new vertices.
\end{proposition}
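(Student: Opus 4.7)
The plan is to construct an ascending chain of subgraphs $G_0 \subsetneq G_1 \subsetneq \hdots \subsetneq G_t = G$, with $G_0$ a cycle of $G$ and each $G_{i+1}$ arising from $G_i$ by attaching an ear of the prescribed form. This parallels the classical proof for the $2$-vertex-connected analogue, except that an ear here may be either \emph{open} (a path with distinct endpoints $u \neq w$ in $V(G_i)$) or \emph{closed} (a cycle attached at a single vertex $u = w$ of $V(G_i)$), reflecting the weaker edge-connectivity hypothesis.

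For the base, I would pick any edge $e_0$ of $G$; since $G$ is bridgeless, $e_0$ lies on a cycle $C_0$ of $G$, and I set $G_0 \df C_0$. For the inductive step, assume $G_i \subsetneq G$ is already built, and distinguish two cases. First, if $E(G) \setminus E(G_i)$ contains some edge $e = \set{u,w}$ with both endpoints in $V(G_i)$, then $e$ itself serves as an ear with $k=0$, and I set $G_{i+1} \df G_i + e$. Second, if no such edge exists, connectivity of $G$ forces the existence of an edge $e = \set{u,x} \in E(G) \setminus E(G_i)$ with $u \in V(G_i)$ and $x \notin V(G_i)$; this is the interesting case.

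In that second case, bridgelessness of $G$ guarantees that $e$ lies on some cycle $C$ of $G$. I would traverse $C$ starting at $u$ and immediately using $e$, and let $w$ be the first vertex of $V(G_i)$ encountered after leaving $u$ via $e$. Such a $w$ exists because $C$ returns to $u \in V(G_i)$ at the latest. The segment of $C$ from $u$ to $w$ through the intermediate vertices $v_1 = x, v_2, \hdots, v_k$ is then the desired ear: the $v_j$ are pairwise distinct (they lie on the cycle $C$) and outside $V(G_i)$ by choice of $w$ as the \emph{first} return. If $u \neq w$ this gives a path ear; if $u = w$ it gives a closed cycle ear, in which case $k \geq 2$ is automatic since $G$ is a simple graph with no multi-edges.

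The only delicate point is verifying that the ear produced in the second case genuinely meets the stated form: that the $v_j$ are really new vertices and that the $u = w$ subcase does not collapse into a loop or a multi-edge. Both are settled by the simple-graph assumption together with the fact that $C$ is a proper cycle. Termination is immediate, since each step strictly increases $\card{E(G_i)}$ and $G$ has finitely many edges, so the chain reaches $G_t = G$ in finitely many steps.
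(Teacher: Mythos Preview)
Your proof is correct and follows essentially the same approach as the paper's: both arguments use bridgelessness to place a boundary edge on a cycle and then traverse that cycle until first re-entry into the already-constructed subgraph, yielding the required ear. The only cosmetic difference is that the paper phrases this via a maximal constructible subgraph and derives a contradiction, whereas you build the chain explicitly; the underlying mechanism is identical.
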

\begin{proof}
Clearly, any graph that was constructed in this manner is connected and bridgeless.
Now let $G$ be connected and bridgeless
and $H$ a subgraph of $G$ that is constructible in this manner,
chosen such that it has a maximum number of edges among all such subgraphs.
Since $G$ contains a cycle, $H$ is not empty.
Also, $H$ is an induced subgraph since $H+e$ is also constructible for any edge $e$.
If $H\neq G$, then since $G$ is connected,
there is an edge $e=\set{v,w}$ with $v\not\in V(H)$ and $w\in V(H)$.
Since $G$ is bridgeless, this edge is on a cycle $C=(w,e,v=v_1,\hdots,v_k=w)$.
Let $v_i$ be the first vertex with $v_i\in V(H)$.
Then $P\df(w,\hdots,v_i)$ is a path or cycle of the form used in the construction,
and so $H+P$ is constructible and has more edges than $H$, 
a contradiction.
\end{proof}
\begin{proposition}
\label{prop:chord-free-general}%
A chord-free graph on $n$ vertices contains no more than $2n-1=O(n)$ edges.
\end{proposition}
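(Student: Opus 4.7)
The plan is to combine the bridge-tree decomposition with the ear decomposition provided by the previous proposition, reducing the statement to a bound on chord-free 2-edge-connected graphs and then using ear lengths to count edges. First, by bounding each connected component separately and summing, it suffices to treat the case where $G$ is connected: for a graph with $c$ components one obtains $m \le \sum_j (2 n_j - 1) = 2n - c \le 2n - 1$.

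So assume $G$ is connected. Let the BCCs of $G$ be $B_1, \dots, B_k$ with $n_i = |B_i|$ vertices and $m_i$ internal edges. The bridge tree of $G$ is a tree with $k$ nodes and $k-1$ edges (the bridges of $G$), so $m = (k-1) + \sum_i m_i$ and $n = \sum_i n_i$. No BCC has exactly two vertices, since a connected two-vertex subgraph has its single edge as a bridge of itself; and each $G[B_i]$ is chord-free as a subgraph of $G$ (any chord of a cycle of $G[B_i]$ is automatically a chord of the same cycle in $G$). So each BCC is either a single vertex (with $m_i = 0$) or a chord-free 2-edge-connected graph on $n_i \ge 3$ vertices.

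For each nontrivial BCC, I apply the previous proposition to write $G[B_i]$ as an initial cycle of length $\ell_0 \ge 3$ together with $j_i$ successively added ears of lengths $\ell_1, \dots, \ell_{j_i}$. A direct count gives $n_i = \ell_0 + \sum_{r}(\ell_r - 1)$ and $m_i = \ell_0 + \sum_{r} \ell_r$, hence $m_i = n_i + j_i$. The crucial observation is that under chord-freeness every ear has length at least $2$: a length-$1$ ear would add a single edge $\{u,w\}$ between two distinct vertices of the already-built 2-edge-connected subgraph $H$, but 2-edge-connectedness supplies two edge-disjoint $u$--$w$ paths in $H$, and these can be parlayed into a cycle of $H$ (or of $H + \{u,w\}$) on which $\{u,w\}$ or some existing edge of $E(H)$ appears as a chord, contradicting chord-freeness of $G$. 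With every ear length $\ge 2$, each ear contributes at least one new vertex, so $n_i \ge 3 + j_i$ and therefore $m_i \le 2 n_i - 3$; for singleton BCCs the weaker bound $m_i = 0 \le 2 n_i - 2$ also holds. Summing,
\begin{equation*}
m \;=\; (k-1) + \sum_{i=1}^{k} m_i \;\le\; (k-1) + \sum_{i=1}^{k} (2n_i - 2) \;=\; 2n - k - 1 \;\le\; 2n - 1.
\end{equation*}

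The main obstacle is the length-$1$-ear step: making the chord-creation argument airtight when the two edge-disjoint $u$--$w$ paths from 2-edge-connectedness share internal vertices (so that they do not directly form a single cycle through both $u$ and $w$) requires picking the right cycle in $H + \{u,w\}$ and locating a chord of it among the edges of $H$. Once this step is secured, the remainder of the proof is routine bookkeeping over the bridge tree.
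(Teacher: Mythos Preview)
Your approach is the paper's: reduce to connected $G$, pass to the bridge tree, bound each $2$-edge-connected BCC via ear decomposition, and sum. The bookkeeping is essentially identical (yours is marginally sharper, extracting $2n_i-3$ per nontrivial BCC).

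You are right to flag the length-$1$-ear step as the crux, and in fact the paper's own justification there does not hold up either. The paper asserts that any two vertices $u,w$ of the already-built $2$-edge-connected subgraph $H$ lie on a common cycle; this is false for $2$-edge-connected (as opposed to $2$-vertex-connected) graphs. Your proposed patch---find a chord somewhere in $H$ or in $H+\{u,w\}$---fails for the same reason. Concretely: let $H$ be two $5$-cycles sharing a single vertex $x$, pick $u$ and $w$ at distance~$2$ from $x$ in the respective pentagons, and set $G=H+\{u,w\}$. One checks directly that $G$ is $2$-edge-connected and chord-free, yet the ear sequence ``first pentagon; second pentagon as a closed ear at $x$; then $\{u,w\}$'' ends with a length-$1$ ear. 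No chord appears anywhere in $G$, so neither your argument nor the paper's rules out this decomposition.

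A clean repair is to insert one more layer of decomposition: split each BCC into its \emph{blocks} (maximal $2$-vertex-connected subgraphs); there are no bridge-blocks since the BCC has no bridges. In a $2$-connected graph any two vertices genuinely lie on a common cycle, so there a length-$1$ ear really would be a chord, and the ear count gives $m_j\le 2n_j-3$ per block on $n_j\ge 3$ vertices. Summing over the block tree (with $s$ blocks one has $\sum_j n_j=n+s-1$) yields $m\le 2n-3$ for the whole BCC, after which your bridge-tree summation goes through unchanged.
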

\begin{proof}
Let $G$ be a chord-free graph, \wlg being connected.
We first consider the case that $G$ is bridgeless.
By the previous proposition, $G$ can be constructed 
from a cycle on, say, $N_0$ vertices,
by successively adding paths of the form $(u,e_1,v_1,\hdots,\linebreak v_k,e_{k+1},w)$,
where $u,w$ are vertices of the already constructed graph
and $v_1,\hdots,v_k$, $k\in\NNzero$, are zero or more new vertices.
For any two vertices $u,w$ in the already constructed graph,
there is a cycle $C$ with $u,w\in V(C)$.
Since $G$ is chord-free, we may not add a path $(u,e_1,w)$.
Hence $k\geq 1$ in each step,
\ie at least one new vertex is added.
It follows that there are at most $t\leq n-N_0\Df N_1 \leq n-1$ steps in this construction.
Let $n_i$ and $m_i$ be the number of new vertices and edges, respectively,
inserted in step $i$.
Then $m_i=n_i+1$ for each $i\in\setn{t}$ and so
we add $\sum_{i=1}^t m_i=\sum_{i=1}^t(n_i+1)=N_1+t\leq 2N_1$ edges
to the initial cycle.
It follows that $G$ has at most $N_0+2N_1 = n + N_1 \leq 2n - 1$ edges.
\par
If $G$ is not bridgeless, we consider its \bcc s $\eli{B}{r}$;
these correspond to the vertices of the bridge tree, and $r=\card{\tV}$ is the number of vertices of the bridge tree.
By what we proved above, the \bcc s contribute at most $\sum_{i=1}^r (2 \card{B_i} - 1) = 2n - r$ edges.
In addition, since the bridge tree is a tree, there are at most $r-1$ edges
(these are all the bridges of~$G$, or in other words all the edges of the bridge tree).
So we have a total bound of $2n-r+(r-1)=2n-1$.
\end{proof}
\begin{corollary}
\label{cor:number-edges-bound}%
A \NE\ has at most $2n-1=O(n)$ edges.
\end{corollary}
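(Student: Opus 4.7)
The plan is essentially to chain the two immediately preceding results. Proposition~\ref{prop:chord-free} establishes that any \NE\ is chord-free as an undirected graph, and the second graph-theoretic proposition gives a $2n-1$ bound on the number of edges of any chord-free graph on $n$ vertices. So the corollary follows by a one-line combination: take an arbitrary \NE\ $S$, apply chord-freeness to $G(S)$, then apply the edge bound to~$G(S)$.

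In writing this out I would first invoke the convention that we may restrict to essential strategy profiles, so that $\card{E(S)} = \card{E(G(S))}$ and chord-freeness of the final (undirected) graph is exactly what is needed. Then I would cite Proposition~\ref{prop:chord-free} to say $G(S)$ contains no chord, and finally cite the chord-free edge bound to conclude $\card{E(G(S))} \leq 2n-1$.

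I do not expect any real obstacle here; both ingredients have been proved just above, and the corollary is stated explicitly as a direct consequence. The only tiny care-point is to remind the reader that the bound refers to the undirected final graph (which is the object to which both the notion of \enquote{chord} and the counting argument apply), but since we have already agreed to restrict attention to essential strategy profiles there is no discrepancy between $\card{E(S)}$ and the number of edges of $G(S)$.
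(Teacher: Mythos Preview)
Your proposal is correct and matches the paper's own proof, which simply cites \autoref{prop:chord-free} and \autoref{prop:chord-free-general} in one line. The extra remarks about essential strategy profiles and the undirected final graph are harmless clarifications but not needed for the argument.
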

\begin{proof}
Follows from 
\autoref{prop:chord-free} and \autoref{prop:chord-free-general}.
\end{proof}
Now we know that the total building cost in a \NE\
is $O(n\alpha)$, hence it is of the same order as the optimal social cost.
In order to bound the price of anarchy,
we are left with bounding the disconnection cost.
To this end, we make use of the bridge tree.
The following is a corollary to \autoref{lem:double-count}.
\begin{corollary}
\label{cor:disconnect-diam-bound}%
The disconnection cost is bounded by $n\,\diam(\tG)$.
\end{corollary}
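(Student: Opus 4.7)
The plan is to combine \autoref{lem:double-count} with the expression for total disconnection cost under the simple-minded adversary given at the start of \autoref{sec:uni-simple}. Recall that for this adversary
\begin{equation*}
\sum_{v\in V} I_v(S) \;=\; \frac{1}{m}\sum_{v\in V} \Rel(v)\enspace.
\end{equation*}

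First I would apply \autoref{lem:double-count} to each player $v$ individually to get $\Rel(v) \leq (n-1)\,\diam(\tG)$, and sum over all $n$ players to obtain
\begin{equation*}
\sum_{v\in V} \Rel(v) \;\leq\; n\,(n-1)\,\diam(\tG)\enspace.
\end{equation*}
Dividing by $m$ and using that the graph in question is connected (so that $m\geq n-1$, which is precisely the condition needed to make disconnection cost finite and thus meaningful) then gives
\begin{equation*}
\sum_{v\in V} I_v(S) \;\leq\; \frac{n\,(n-1)}{m}\,\diam(\tG) \;\leq\; n\,\diam(\tG)\enspace.
\end{equation*}

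There is no real obstacle here; the only thing worth stating explicitly is the reminder that we are working with connected final graphs, so $m\geq n-1$, which absorbs the extra factor of $n-1$ coming from the lemma. The whole argument is two lines once \autoref{lem:double-count} is in hand.
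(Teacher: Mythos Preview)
Your proposal is correct and matches the paper's own proof essentially line for line: apply \autoref{lem:double-count} to each player, sum over $V$, divide by $m$, and use $m\geq n-1$ from connectedness. There is nothing to add.
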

\begin{proof}
We have by \autoref{lem:double-count}:
\begin{align*}
\MoveEqLeft
\frac{1}{m} \sum_{v\in V} \sum_{e\in E} \rel(e,v)
= \frac{1}{m} \sum_{v\in V} R(v)
\leq \frac{1}{m} \sum_{v\in V} (n-1) \, \diam(\tG)\\
&= \frac{n}{m} \, (n-1) \, \diam(\tG)
\leq n \, \diam(\tG)\enspace.
\qedhere
\end{align*}
\end{proof}
A bound on the diameter of the bridge tree holding for all \NE\
will hence yield a bound on the price of anarchy.
This is accomplished by the following lemma.
\begin{lemma}
\label{lem:diam}%
The bridge tree of a \NE\ has its diameter bounded by $8 \al = O(\alpha)$.
\end{lemma}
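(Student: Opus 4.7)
The plan is to argue by contradiction: assume $D := \diam(\tG) > 8\al$, and exhibit a strictly improving deviation for some player, contradicting that the underlying strategy profile is a \NE. Fix a diameter-realizing path $B_0, B_1, \ldots, B_D$ in $\tG$, with bridges $e_i$ (in $G$) joining the \bcc s $B_{i-1}$ and $B_i$. Pick any $u \in B_0$ and $v \in B_D$; the candidate deviation to be analyzed is $u$ buying the edge $\{u,v\}$.

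Next I compute the reduction in $R(u) = \Rel_G(u)$ caused by this single-link addition. The new edge together with any $u$-$v$ path in $G$ forms a cycle; since the bridge tree is a tree, the bridges lying on every such $u$-$v$ path are exactly $e_1, \ldots, e_D$. By Remark~\ref{rem:one-cycle}\ref{rem:one-cycle:i} all of $e_1, \ldots, e_D$ become non-bridges after the deviation, so their relevance for $u$ drops from $N_i^u := \rel_G(e_i, u)$ to $0$. Hence $\Delta R \geq \sum_{i=1}^D N_i^u$. Now use the simple identity $N_i^u + N_i^v = n$ (the two sides of the cut $G - e_i$ partition $V$) to obtain $\sum_i N_i^u + \sum_i N_i^v = nD$. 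By symmetry, \wlg\ $\sum_i N_i^u \geq nD/2$; otherwise swap the roles of $u$ and $v$ and have $v$ be the deviator.

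Applying Proposition~\ref{prop:delta-r}\ref{prop:delta-r:i} together with Corollary~\ref{cor:number-edges-bound} ($m \leq 2n - 1$), the improvement in disconnection cost for $u$ is at least
\begin{equation*}
\frac{\Delta R}{m+1} \;\geq\; \frac{nD/2}{2n} \;=\; \frac{D}{4}\enspace.
\end{equation*}
Building the one new edge costs $\alpha$, so for $u$ not to have a strictly profitable deviation we need $D/4 \leq \alpha$, whence $D \leq 4\alpha$ (from which the claimed $D \leq 8\al$ is immediate; the stated constant leaves room for a slightly coarser bookkeeping on $m$ and on $\sum N_i^u$). Since this contradicts $D > 8\al$, the lemma follows.

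The main obstacle is the global step of turning a long path in the bridge tree into a large $\Delta R$: one has to observe that adding the single chord $\{u,v\}$ simultaneously annihilates the relevance of \emph{all} $D$ bridges on the diameter path for $u$, and then to exploit the double-counting identity $N_i^u + N_i^v = n$ to guarantee that at least one of $u$, $v$ sees the drop in $R$ of order $nD$. Everything else is a direct application of the previously established tools (Remark~\ref{rem:one-cycle}, Proposition~\ref{prop:delta-r}, and the $O(n)$ edge bound from Corollary~\ref{cor:number-edges-bound}).
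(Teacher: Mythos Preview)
Your proof is correct and follows essentially the same approach as the paper: take a diameter path in the bridge tree, have one endpoint buy an edge to the other, lower-bound the resulting drop in relevance, and convert via the $m\leq 2n-1$ bound into an $\Omega(D)$ improvement that must be at most~$\alpha$. The only difference is in the bookkeeping for $\Delta R$: the paper argues that the $\bar\ell=\lceil\ell/2\rceil$ bridges nearest the chosen endpoint each have relevance at least $n/2$ (via a ``middle-edge'' pigeonhole), yielding $\Delta R\geq \ell n/4$, whereas your complementarity identity $N_i^u+N_i^v=n$ summed over all $D$ bridges gives $\Delta R\geq nD/2$ for one of the two endpoints---a factor-$2$ sharper bound that in fact delivers $D\leq 4\alpha$ rather than the stated~$8\alpha$.
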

\begin{proof}
Let $G$ be a \NE.
Let $\tP=(v_0,e_1,v_1,\hdots,e_\ell,v_\ell)$ 
be a path
in the bridge tree $\tG$ connecting two leaves $v_0$ and $v_\ell$.
Let $\bar{\ell}\df\sceil{\frac{\ell}{2}}\geq1$.
Then at least one of the following is true 
(recall the convention \vpageref{con:vertex-counting}
regarding vertex-counting in the bridge tree):
\begin{compactitemize}
\item At least $\sceil{\frac{n}{2}}$ vertices lie beyond $e_{\bar{\ell}}$
from the view of $v_0$.
\item At least $\sceil{\frac{n}{2}}$ vertices lie beyond $e_{\bar{\ell}}$
from the view of $v_\ell$.
\end{compactitemize}
\par
Let us assume the first; the other case can be treated alike.
Let $v\df v_0$ and $w\df v_{\ell}$
and recall that we may treat vertices of the bridge tree $\tG$ as single players with respect to building of new links.
Then $e_1,\hdots,e_{\bar{\ell}}$ 
for $v$ have relevance at least $\sceil{\frac{n}{2}}$ each.
So $\sum_{i=1}^{\bar{\ell}}\rel(e_i,v)\geq \bar{\ell} \, \frac{n}{2}\geq \frac{\ell}{2} \, \frac{n}{2} =\Omega(\ell n)$.
By building $\set{v,w}$, player $v$ would have a benefit in disconnection cost
of at least $\frac{1}{m+1} \frac{\ell n}{4} \geq \frac{1}{2n} \frac{\ell n}{4} 
=\Omega(\ell)$,
using the bound $m\leq 2n-1$ from \autoref{cor:number-edges-bound}.
Since the edge is not built, $\alpha$ is larger than this benefit, 
so $\ell \leq 8 \al = O(\alpha)$.
\end{proof}
\begin{corollary}
\label{cor:disconnection-diam-bound-alpha}%
The disconnection cost in a \NE\ is bounded by $8n\al = O(n\alpha)$.
\end{corollary}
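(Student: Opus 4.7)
The plan is to simply combine the two immediately preceding results. By \autoref{cor:disconnect-diam-bound}, for any strategy profile $S$ (with connected final graph) the total disconnection cost $\sum_{v\in V} I_v(G(S))$ is bounded above by $n\,\diam(\tG)$, where $\tG$ is the bridge tree of $G(S)$. By \autoref{lem:diam}, for any $S$ that is a \NE, the bridge tree satisfies $\diam(\tG) \leq 8\alpha$.

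Chaining the two bounds gives
\begin{equation*}
\sum_{v\in V} I_v(G(S)) \;\leq\; n \, \diam(\tG) \;\leq\; n \cdot 8\alpha \;=\; 8n\alpha,
\end{equation*}
which is the asserted $O(n\alpha)$ estimate.

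There is essentially no obstacle here — the corollary is a one-line composition of the previous two results. The only thing worth noting is that \autoref{lem:diam} is stated for \NE, which is exactly the hypothesis of the corollary, so the two bounds compose directly without needing any additional assumption on $S$ or on the adversary (beyond its being the simple-minded adversary, which is the running hypothesis of the section and is implicit in both the lemma and the preceding corollary through the formula $\Pr{\set{e}}=1/m$).
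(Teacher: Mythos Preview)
Your proof is correct and follows exactly the same approach as the paper: the paper's proof also simply invokes \autoref{cor:disconnect-diam-bound} and \autoref{lem:diam} and chains the two bounds.
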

\begin{proof}
Follows from \autoref{cor:disconnect-diam-bound}
and \autoref{lem:diam}.
\end{proof}
\begin{theorem}
\label{thm:uni-simple-O1}%
The price of anarchy 
with a simple-minded adversary is bounded by $O(1)$.
\end{theorem}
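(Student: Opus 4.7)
The plan is to assemble the pieces already built: the social cost of a Nash equilibrium decomposes into total building cost and total disconnection cost, and we have separate $O(n\alpha)$ bounds for each, while the optimum is $\Omega(n\alpha)$ by \autoref{prop:uni-opt}. Dividing then yields the $O(1)$ bound.

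In detail, I would fix an arbitrary \NE\ $S$ with final graph $G(S)$ and $m = \card{E(S)}$. The total building cost is simply $m\alpha$, and by \autoref{cor:number-edges-bound} we have $m \leq 2n-1$, so total building cost is at most $(2n-1)\,\alpha$. The total disconnection cost is bounded by $8n\alpha$ thanks to \autoref{cor:disconnection-diam-bound-alpha}. Adding these two contributions gives
\begin{equation*}
\SC(S) \leq (2n-1)\,\alpha + 8n\,\alpha = (10n-1)\,\alpha
\enspace.
\end{equation*}

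Finally, \autoref{rem:concrete-bound-general} applied with $c_1 = 10$ and $c_0 = -1$ turns this absolute bound on equilibrium cost into a bound on the price of anarchy, namely $10 + \tfrac{9}{n-1} = 10 + o(1) = O(1)$, which matches the constant advertised in the abstract.

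The main obstacle is really not in this final assembly step, which is essentially a one-line computation, but in the two ingredients it relies on: the edge bound from \autoref{cor:number-edges-bound} (which rests on chord-freeness of \NE\ and on the graph-theoretic lemma bounding edges in chord-free graphs by $2n-1$) and the disconnection cost bound from \autoref{cor:disconnection-diam-bound-alpha} (which rests on the $O(\alpha)$ diameter bound for the bridge tree proved via the benefit-of-a-shortcut argument in \autoref{lem:diam}). Since both ingredients are already in place, the theorem itself reduces to an arithmetic combination and an appeal to \autoref{rem:concrete-bound-general}, with no new idea required.
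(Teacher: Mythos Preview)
Your proof is correct and follows exactly the paper's approach: combine the $O(n\alpha)$ building-cost bound from \autoref{cor:number-edges-bound} with the $O(n\alpha)$ disconnection-cost bound from \autoref{cor:disconnection-diam-bound-alpha}, and compare against the $\Theta(n\alpha)$ optimum from \autoref{prop:uni-opt}; the paper's proof is literally this one sentence, with the explicit constant deferred to a separate remark. One cosmetic point: \autoref{rem:concrete-bound-general} as stated asks for $c_0>0$, so to invoke it cleanly you should relax $(10n-1)\,\alpha$ to $10n\,\alpha$ and take $c_0=0$ (as the paper itself does), though the arithmetic in that remark obviously goes through for your choice $c_0=-1$ too and yields the slightly sharper $10+\tfrac{9}{n-1}$.
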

\begin{proof}
The building cost and the disconnection cost in a \NE\ are both $O(n\alpha)$
by \autoref{cor:number-edges-bound} and \autoref{cor:disconnection-diam-bound-alpha}.
The theorem follows with \autoref{prop:uni-opt},
which states that the optimum social cost is $\Theta(n\alpha)$.
\end{proof}
\par
A closer look at \autoref{lem:diam} and its proof reveals that
there exists a constant $c>0$ such that if $m=O(n)$
then there are players who can improve their disconnection cost by~$c$
through the building of new links,
as long as the graph contains bridges.
It follows that for $\al<c$, all \NE\ are bridgeless,
\ie they have disconnection cost $0$
and the adversary cannot harm them.
This does not rule out, however,
that they may contain an unnecessarily high number of links,
compared to an optimum.
On the other hand, the ratio cannot be more than $O(1)$
by \autoref{thm:uni-simple-O1}.
\begin{remark}
The constant in \autoref{thm:uni-simple-O1} is bounded by $10 + \frac{10}{n-1}=10+o(1)$.
\end{remark}
\begin{proof}
Building cost of a \NE\ is bounded by $2n\al$
by \autoref{cor:number-edges-bound}.
Disconnection cost of a \NE\ is bounded by $8 n \al$ 
by \autoref{cor:disconnection-diam-bound-alpha}.
In total, social cost of a \NE\ is bounded by $10n \al$.
Using \autoref{rem:concrete-bound-general} with $c_1 \df 10$ and $c_0 \df 0$
proves the claim.
\end{proof}

\section{Smart Adversary}
\label{sec:uni-smart}%
We remain with \ULF\ and consider an adversary that destroys an edge
which separates a maximum number of vertex pairs.
If there are several such edges, one of them is chosen uniformly at random.
In other words, we replace the uniform probability distribution on the edges
for one that is concentrated on the edges which cause maximum overall damage.
Recall that $\sep(e)$ is
the number of separated vertex pairs when edge $e$ is deleted.
Let $\sep_{\max}\df \max_{e\in E}\sep(e)$
and $\Emax\df \setst{e\in E}{\sep(e)=\sep_{\max}}$
and $\mmax\df \card{\Emax}$.
These are the edges of which each causes a maximum number of
separated vertex pairs when it is deleted.
We call them the \term{critical} edges.
The adversary chooses one of those uniformly at random.
Clearly, this yields a symmetric adversary,
and so disconnection cost is anonymous.
We have the individual and social cost:
\begin{align*}
C_v(S) &= \nrm{S_v}\,\alpha + \frac{1}{\mmax} \sum_{e\in \Emax} \rel(e,v) & \text{for $v\in V$,}\\
\SC(S)
&= m\,\alpha + \frac{1}{\mmax} \sum_{e\in \Emax} \sep_{\max}
= m\,\alpha + \sep_{\max}\enspace.
\end{align*}
\par
If $\sep_{\max}=0$, then the graph is bridgeless
and all edges are critical 
-- however, their removal does not separate any vertex pairs.
If \mbox{$\sep_{\max}>0$}, then there are one or more critical edges,
and each of them is a bridge.
Recall that if $e$ is a bridge,
$\nn(e)$ denotes the number of vertices in the smaller component of $G-e$,
or $\frac{n}{2}$ if both are of equal size.
If $e$ is no bridge, then $\nn(e)=0$.
We have $\sep(e)=2\nn(e)\,(n-\nn(e))$ for all edges.
The function $x\mapsto2x\,(n-x)$ is strictly increasing 
on $[0,\frac{n}{2}]$,
so $\nn(e)=\nn(e')$ follows from $\sep(e)=\sep(e')$.
Hence $\nn(e)=\nn(e')$ for all critical edges $e,e'\in\Emax$.
\begin{proposition}
\label{prop:critical-star}%
If $\sep_{\max}>0$ and if there are more than one critical edges,
they form a subgraph that is a star in the bridge tree $\tG$.
\end{proposition}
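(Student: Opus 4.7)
The plan is to reduce the claim to a purely tree-theoretic statement and then to rule out non-adjacent critical edges by a localised sizes argument. First I observe that since $\sep_{\max}>0$, every critical edge $e$ has $\sep(e)>0$ and is therefore a bridge of $G$; under the bijection between bridges of $G$ and edges of $\tG$ recorded earlier, the critical edges correspond to a subgraph of $\tG$. Because $\tG$ is a tree, it is triangle-free, and any subgraph of a triangle-free graph in which every pair of edges shares a common vertex must be a star (if three pairwise-adjacent edges did not share a common vertex, their six endpoints would collapse to a triangle). Hence it suffices to show that any two critical edges are adjacent in $\tG$.

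Suppose for contradiction that $e_1,e_2$ are critical edges with no common vertex in $\tG$. Removing both from $\tG$ yields three components which I label $C_1,C_2,C_3$ so that $e_1$ joins $C_1$ to $C_2$ and $e_2$ joins $C_2$ to $C_3$. Let $\alpha,\beta,\gamma$ denote their weighted sizes (using the vertex-counting convention). Since $e_1,e_2$ are non-adjacent, their $C_2$-endpoints are distinct vertices of $\tG$, so $\beta \geq 2$. Let $\nn_{\max}$ denote the common value of $\nn(\cdot)$ on critical edges. From $\min(\alpha,\beta+\gamma)=\nn(e_1)=\nn_{\max}$ and $\min(\alpha+\beta,\gamma)=\nn(e_2)=\nn_{\max}$, a short case distinction forces $\alpha=\gamma=\nn_{\max}$ (and thus $\beta=n-2\nn_{\max}$); the remaining cases demand $\beta+\gamma=\nn_{\max}$ or $\alpha+\beta=\nn_{\max}$, each of which collapses $\beta$ to $0$, contradicting $\beta\geq 2$.

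To conclude, I pick any edge $e^*$ on the path inside $C_2$ joining the $C_2$-endpoints of $e_1$ and $e_2$; such an edge exists because $\beta\geq 2$. Removing $e^*$ splits $C_2$ into two non-empty subtrees of weighted sizes $\beta_L,\beta_R\geq 1$, so the two components of $\tG-e^*$ have weighted sizes $\alpha+\beta_L=\nn_{\max}+\beta_L$ and $\gamma+\beta_R=\nn_{\max}+\beta_R$, both strictly larger than $\nn_{\max}$. Hence $\nn(e^*)>\nn_{\max}$, contradicting the maximality of $\nn_{\max}$, and so no two critical edges can be non-adjacent. The main obstacle is the small bookkeeping needed to pin down $\alpha=\gamma=\nn_{\max}$; once that is in place, everything else rests only on $\tG$ being a tree and the weighted sizes summing to $n$.
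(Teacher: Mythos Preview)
Your proof is correct and follows essentially the same strategy as the paper: both arguments show that any two critical edges must be adjacent in $\tG$ by finding, between two hypothetically non-adjacent critical edges, an intermediate bridge whose smaller side strictly contains the smaller side of one of them, forcing a larger $\nn$-value and contradicting maximality. Your three-component decomposition with the case analysis on $\alpha,\beta,\gamma$ absorbs the paper's preliminary observation that $\nn(e)<\frac{n}{2}$ for all bridges into the case distinction itself, which makes the bookkeeping slightly more compact, but the underlying idea is identical.
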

\begin{proof}
Let $\sep_{\max}>0$.
For any two distinct bridges $e$ and $e'$,
one component of $G-e$ is strictly contained in one component of $G-e'$.
Therefore, with multiple critical edges, $\nn(e)<\frac{n}{2}$ for all $e\in E_{\max}$,
and so also for all other bridges (since they have smaller $\nn(\cdot)$ value).
In other words, there is always a small and a large component of $G-e$,
with $e$ being a bridge.
\par
Let $P=(v_0,e_1,v_1,\hdots,v_{\ell-1},e_{\ell},v_{\ell})$
be a path in the bridge tree $\tG$ with $e_1$ and $e_{\ell}$ being distinct critical edges.
First assume that $v_{\ell}$ is in the larger component of $G-e_{\ell}$.
Then $v_0$ is in the smaller component of $G-e_1$.
Then the smaller component of $G-e_2$ cannot contain $v_0$, 
since otherwise $\nn(e_1)<\nn(e_2)$,
and $e_1$ would not be critical.
So the component of $G-e_2$ containing $v_0$ is the larger one,
and then the same holds for the component of $G-e_{\ell}$ containing $v_0$.
This contradicts that $v_{\ell}$ is in the larger component of $G-e_{\ell}$.
We can carry out the same argument with $v_0$ and $e_1$.
Summarizing, now we know that the smaller component of $G-e_1$ is located `before' $P$
and that the smaller component of $G-e_{\ell}$ is located `beyond'~$P$.
\par
If $\ell\geq 3$, 
then there is an edge $f$ between $e_1$ and $e_{\ell}$ on $P$.
The smaller component of $G-f$ strictly contains 
either the smaller component of $G-e_1$ or $G-e_{\ell}$.
Since $\nn(e_1)=\nn(e_2)$,
we have thus in particular, $\nn(f)>\nn(e_1)$, 
a contradiction that $e_1$ is critical.
Hence there is no such edge $f$, and so $\ell=2$.
Since this holds for all pairs $(e_1,e_{\ell})$ of critical edges,
the set of all critical edges forms a star (in the bridge tree).
\end{proof}
\begin{figure}
\newcommand{\skel}{%
	\node (c) { };
	\node (r1) [right = of c]  { };
	\node (r2) [right = of r1] { };
	\node (r3) [right = of r2] { };
	\node (r4) [right = of r3] { };
	\node (l1) [left = of c]  { };
	\node (l2) [left = of l1] { };
	\node (l3) [left = of l2] { };
	\node (l4) [left = of l3] { };
	\path (r1) edge (r2)
		  (r2) edge (r3)
		  (r3) edge (r4);
	\path (l1) edge (l2)
		  (l2) edge (l3)
		  (l3) edge (l4);}
\centering
\subfloat[]%
[\label{fig:uni-ne-path:1}A path with all edges pointing to the nearest end.]
{\begin{tikzpicture}[%
	every node/.style=player,%
	every edge/.style=dirlink,%
	node distance=1.65cm]
	\skel
	\path (c) edge[dashed] (r1);
	\path (c) edge[dashed] (l1);
\end{tikzpicture}}
\vspace{.5cm}
\subfloat[]%
[\label{fig:uni-ne-path:1}Center vertex cannot improve by exchanging her edges.]
{\begin{tikzpicture}[%
	every node/.style=player,%
	every edge/.style=dirlink,%
	node distance=1.65cm,%
	bend angle=40]
	\skel
	\path (c) edge[dashed,bend right] (r2);
	\path (c) edge[dashed,bend left] (l3);
\end{tikzpicture}}
\caption{%
	\label{fig:uni-ne-path}%
	\NE\ if $\alpha\geq\frac{n}{2}$.
	Critical edges are drawn dashed.
	Disconnection cost for the center vertex is $\sfloor{\frac{n}{2}}$ in both cases,
	which is $4$ here.}
\end{figure}
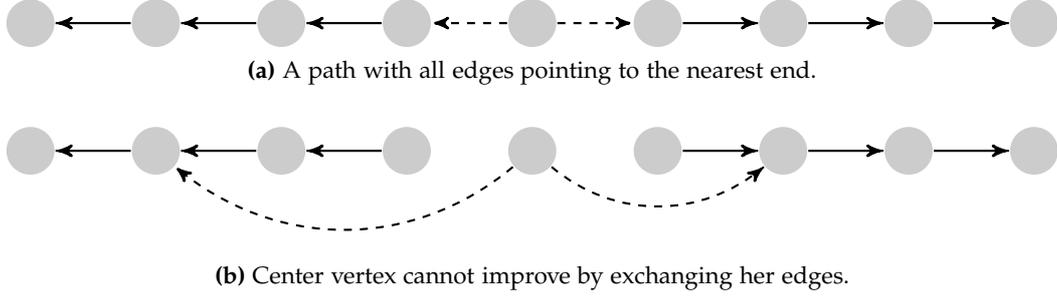%
The smart adversary admits a new \NE\ topology:
\begin{proposition}
\label{prop:uni-ne-path}%
If $\alpha \geq \frac{n}{2}$,
then a path with all edges pointing to the nearest end
(in case of even $n$, the middle edge having arbitrary orientation)
is a \NE\ with social cost $\Theta(n\alpha+n^2)$.
If $\alpha > \frac{n}{2}$, it is a \MNE.
\end{proposition}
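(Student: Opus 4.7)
The plan is as follows. First, I will compute the social cost directly. The described profile $S$ is essential, has $n-1$ edges, and its one (for even $n$) or two (for odd $n$) middle edges are the critical edges, each separating $2\sfloor{\frac{n}{2}}\sceil{\frac{n}{2}}$ ordered vertex pairs. Hence $\SC(S) = (n-1)\,\al + 2\sfloor{\frac{n}{2}}\sceil{\frac{n}{2}} = \Theta(n\al+n^2)$.

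Next, I will compute the disconnection cost of each player in $S$. For odd $n=2k+1$ the two critical edges are both incident to the center $v_{k+1}$ and each is chosen with probability $\sfrac{1}{2}$, giving $I_{v_{k+1}}(S)=k$ and $I_v(S)=k+\sfrac{1}{2}=n/2$ for every other player. For even $n=2k$ the unique critical middle edge separates the graph into two halves of size $n/2$, so every player has disconnection cost exactly $n/2$. In particular $I_v(S)\leq n/2\leq\al$ for all $v$.

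To verify the Nash condition I will analyze any deviation $S\switch S'$ by a player $v$. Since every edge of the path is a bridge, a net reduction $|S'_v|<|S_v|$ means $v$ sells more bridges than she replaces and disconnects the graph, so her cost becomes infinite. A net addition $|S'_v|>|S_v|$ raises her building cost by at least $\al$ and decreases her disconnection cost by at most $I_v(S)\leq n/2\leq\al$; this is weakly worse, and strictly worse for $\al>n/2$, which will also yield the \MNE\ claim. For an exchange $|S'_v|=|S_v|$ the building cost is unchanged, so I have to show $I_v(S')\geq I_v(S)$. Here $G'\df G(S')$ is again a connected tree on $n$ vertices and $v$'s degree in $G'$ equals $|S_v|$, since $S$ is essential and the other players keep their strategies. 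For $v\in\{v_1,v_n\}$ no exchange is possible, and for $v\in\{v_2,v_{n-1}\}$ any non-trivial exchange would orphan the endpoint leaf; the crucial cases are the inner players.

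The hard part is bounding $I_v(S')$ from below for an inner player, and the sharpest instance is the center $v_{k+1}$ for odd $n$, whose original disconnection cost $k$ is already the smallest value attained in $S$. For this I will invoke \autoref{prop:critical-star}: since $G'$ is a tree, its bridge tree is $G'$ itself, and all critical edges share a common endpoint~$c$. When $c=v_{k+1}$, both edges incident to her are critical and the corresponding relevances are the two subtree sizes hanging off $v_{k+1}$, summing to $n-1$; averaging gives disconnection cost at least $\lfloor(n-1)/2\rfloor=k$. When $c\neq v_{k+1}$, I will track the side of each cut containing $v_{k+1}$: using $\sep_{\max}=2\nu_{\max}(n-\nu_{\max})$, together with the fact that the unchanged votes of the other players sharply constrain the possible subtree sizes at $c$, one checks that every critical relevance for $v_{k+1}$ is still at least $k$, whence $I_{v_{k+1}}(S')\geq k$. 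The analogous argument for non-central inner players, with target $n/2$ in place of $\lfloor n/2\rfloor$, goes through with more slack.
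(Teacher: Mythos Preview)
Your setup and the cases of net addition and net reduction are fine, but the exchange analysis has real gaps and misses the paper's key observation.

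First, the claim that ``$v$'s degree in $G'$ equals $|S_v|$'' is false for non-central inner players. Such a player $v_i$ owns exactly one edge (the one toward the nearer end) but also has an \emph{incoming} edge from its other neighbour, which stays in $G'$ regardless of what $v_i$ does. So after a one-for-one exchange $v_i$ still has degree~$2$, not $|S_{v_i}|=1$. This matters because your subsequent reasoning about subtree sizes at $v$ leans on controlling $v$'s degree.

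Second, the case ``$c\neq v_{k+1}$'' is only asserted, not argued; and for non-central players this is precisely the case that always occurs. When an inner player $v_i$ (left half, say) exchanges her edge $\{v_{i-1},v_i\}$ for an edge to some $v_j$ with $j<i$, the middle edge(s) of the original path are untouched and still split the vertex set $\lfloor n/2\rfloor$ vs.\ $\lceil n/2\rceil$, so they remain critical and $c$ is the original center, not $v_i$. Your appeal to \autoref{prop:critical-star} then tells you nothing directly about $\rel(\cdot,v_i)$; you would need the additional argument that the relevance of those middle edges for $v_i$ is unchanged, which you do not give.

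The paper's proof avoids all of this by exploiting the specific orientation: because each player's owned edge points \emph{toward the nearest end}, selling it disconnects a piece that lies entirely on $v$'s own side of the middle and hence never contains a critical edge. Reattaching that piece anywhere leaves the vertex counts on either side of the (formerly) critical edge(s) unchanged, so those edges remain critical with the same relevance for~$v$. The same reasoning handles the center's two-edge exchange. This single observation replaces your \autoref{prop:critical-star} machinery and closes the argument in a few lines; I recommend rebuilding the exchange case around it.
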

\begin{proof}
The social cost of the path is
$(n-1)\,\alpha + \sfloor{\frac{n}{2}}\sceil{\frac{n}{2}}=\Theta(n\alpha+n^2)$.
The adversary removes the one or two
-- depending on whether $n$ is even or odd -- middle edges.
The disconnection cost for each player is $\frac{n}{2}\leq\alpha$
if $n$ is even
and at most
$\frac{1}{2}\,(\sfloor{\frac{n}{2}}+\sceil{\frac{n}{2}})=\frac{n}{2}\leq \alpha$ if $n$ is odd.
Hence, there is no incentive for any player to build more edges than she currently owns, 
even after exchanging the currently built edges for others.
\par
Now consider that a player $v$ sells one (or two) of her edges 
and buys one (or two) different ones instead.
First consider that one edge is exchanged.
Since all edges point outward,
the part of the path that becomes disconnected from $v$ 
does not contain the critical edge(s).
So, after reconnecting it with~$v$ via a new edge,
there are as many vertices on both sides of the formerly critical edge(s)
as before the exchange.
No separation value increases.
Hence the formerly critical edges remain critical.
They also maintain their relevance for $v$.
With the same argument,
if the exchanged edge itself was critical,
the new one will be critical as well, 
also with the same relevance for $v$.
\par
When two edges are exchanged,
$v$ is the center vertex,
and in particular all critical edge(s) are among the exchanged ones,
see \autoref{fig:uni-ne-path}.
This again means that the disconnected parts
do not contain critical edges,
and so the exchange cannot 
change that each of the two edges has $\sfloor{\frac{n}{2}}$ vertices on
the one and $\sceil{\frac{n}{2}}$ vertices on the other side,
so they remain critical.
Also, their relevance for $v$ does not change.
The \MNE\ property is clear by \autoref{rem:mne}.
\end{proof}

\subsection*{Bounding the Price of Anarchy}
The proof of the following is even easier than previously:
\begin{remark}
\label{rem:chord-free-smart}%
A \NE\ is chord-free.
\end{remark}
\begin{proof}
Removing a chord does not change the relevance of any edge,
nor does it change $\sep_{\max}$,
hence it does not change $\Emax$.
Selling a chord so is always beneficial.
\end{proof}
\par
With \autoref{prop:chord-free-general}, it follows immediately:
\begin{corollary}
\label{cor:bound-edges-smart}%
A \NE\ has at most $2n-1=O(n)$ edges.
\hfill\qedsymbol
\end{corollary}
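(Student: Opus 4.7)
The plan is essentially to chain together the two results immediately preceding the corollary, since nothing further is needed. First I would invoke \autoref{rem:chord-free-smart} to conclude that any \NE\ (for the smart adversary) is a chord-free graph: removing a chord leaves the bridge structure and hence every $\sep(\cdot)$ value unchanged, so the set $\Emax$ is preserved together with the relevance of each of its edges for every player, while the player who sold the chord saves~$\al$ in building cost. Thus if a chord existed, at least one player would strictly improve by selling it, contradicting the \NE\ property.

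Second, I would apply the graph-theoretic bound of \autoref{prop:chord-free-general}, which says that any chord-free graph on $n$ vertices has at most $2n-1$ edges. Composing the two gives the desired conclusion $m \leq 2n-1 = O(n)$, and there is no further obstacle: the work has already been done in proving \autoref{rem:chord-free-smart} (which is the analogue of \autoref{prop:chord-free} for the simple-minded adversary, but much easier because the critical-edge structure is insensitive to chord removal) and in the structural \autoref{prop:chord-free-general}. Since the corollary is a direct combination of these two facts, the proof reduces to a one-line citation, which is presumably why the author marks it with \qedsymbol\ rather than spelling anything out.
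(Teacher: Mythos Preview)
Your proposal is correct and matches the paper's approach exactly: the corollary is obtained by combining \autoref{rem:chord-free-smart} (every \NE\ for the smart adversary is chord-free) with \autoref{prop:chord-free-general} (a chord-free graph on $n$ vertices has at most $2n-1$ edges). The paper likewise treats this as an immediate consequence and supplies no further argument.
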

\par
We are again left with bounding the disconnection cost of \NE.
This requires some effort and is accomplished in the following 
remark and two lemmas.
\begin{remark}
\label{rem:critical-remains}%
If there are $k\geq 2$ critical edges, say $\Emax=\set{e_1,\hdots,e_k}$,
and $e_1$ is put on a cycle by an additional edge, but not $e_2,\hdots,e_k$,
then the new critical edges are $e_2,\hdots,e_k$. 
If $k\geq 3$ and the additional edge puts $e_1$ and $e_2$ on a cycle,
but not $e_3,\hdots,e_k$,
then the new critical edges are $e_3,\hdots,e_k$.
\end{remark}
\begin{proof}
An additional edge $e$ only changes the $\nn(\cdot)$ value
of those edges which are put on a cycle by $e$,
namely it reduces them to $0$.
Hence, none of the edges in $\set{e_2,\hdots,e_k}$ 
(or $\set{e_3,\hdots,e_k}$) becomes less attractive for the adversary 
when $e$ is added.
Also no other edge becomes more attractive by the addition of~$e$,
since no $\nn(\cdot)$ value increases.
\end{proof}
\begin{lemma}
\label{lem:mmax-3}%
Let $\al \leq c n$ for a constant $c>0$ and fix a \NE\ with $\mmax \geq 3$.
Then we have $\sep_{\max} \leq 2 \, (1+9c) \, n\al = O(n \al)$.
\end{lemma}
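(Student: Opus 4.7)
The plan is to exploit the star structure of the critical edges in the bridge tree (\autoref{prop:critical-star}) together with one carefully chosen single-edge deviation by a player sitting in one leaf subtree of the star.

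Let $k \df \mmax \geq 3$ and let $B^*$ be the center of the star that the critical edges $e_1, \hdots, e_k$ form in $\tG$; since $\sep(e_i) = \sep_{\max}$ for all $i$, they share a common value of $\nn(e_i)$, which I denote by $\nn_{\max}$. For each $j$, let $T_j$ be the component of $\tG - B^*$ containing the far endpoint of $e_j$. A short case analysis shows that two or more of the $T_j$ cannot lie on the larger side of their critical edge (otherwise their sizes would sum to more than $n$), so each $T_j$ consists of exactly $\nn_{\max}$ vertices. Since the $T_j$ are pairwise vertex-disjoint and disjoint from $B^*$, this yields the packing bound $k\,\nn_{\max} \leq n - 1$, and in particular $\nn_{\max} < n/k \leq n/3$, so $n - 2\nn_{\max} > n/3 > 0$.

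Next I consider the deviation where a player $v \in T_1$ buys the edge $\set{v,u}$ for some $u \in T_2$ (such a $u$ exists since $\nn_{\max} \geq 1$, and the edge is new because otherwise $e_1,e_2$ would not be bridges). This edge places $e_1$ and $e_2$ on a common cycle through $B^*$ while leaving $e_3, \hdots, e_k$ alone, so by \autoref{rem:critical-remains} the new set of critical edges is exactly $\set{e_3, \hdots, e_k}$. Since for $v$ each of these contributes relevance $\nn_{\max}$, the old and new disconnection costs of $v$ are
\[
  \frac{1}{k}\big((n-\nn_{\max}) + (k-1)\,\nn_{\max}\big) = \frac{n + (k-2)\,\nn_{\max}}{k}
  \quad\text{and}\quad
  \frac{1}{k-2}\,(k-2)\,\nn_{\max} = \nn_{\max},
\]
respectively, giving an improvement of $(n-2\nn_{\max})/k$. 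Since $S$ is a \NE\ this is at most $\al$, equivalently $k\al \geq n - 2\nn_{\max}$.

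Combining this \NE\ bound with the packing bound $k \leq n/\nn_{\max}$ gives $\nn_{\max}(n - 2\nn_{\max}) \leq n\al$. Using $n - 2\nn_{\max} > n/3$ then yields $\nn_{\max} \leq 3\al$, and the hypothesis $\al \leq cn$ upgrades this to $\nn_{\max}^2 \leq 9\al^2 \leq 9cn\al$. Applying the elementary identity $2\nn_{\max}(n - \nn_{\max}) = 2\nn_{\max}(n - 2\nn_{\max}) + 2\nn_{\max}^2$ finishes the argument with $\sep_{\max} \leq 2n\al + 18cn\al = 2(1+9c)n\al$. The main thing to be careful about is that \autoref{rem:critical-remains} really applies to the described deviation, i.e., that the new cycle is confined to $T_1$, $T_2$, $B^*$, and the new edge, so that $e_3, \hdots, e_k$ indeed remain bridges; once this is verified, the remaining bookkeeping is purely algebraic.
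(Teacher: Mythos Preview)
Your argument is correct and essentially the same as the paper's. You use the identical deviation (a player in $T_1$ buys a link to a player in $T_2$, putting exactly $e_1,e_2$ on a cycle and invoking \autoref{rem:critical-remains}), obtain the same key inequalities $k\alpha \geq n-2\nn_{\max}$, $k\,\nn_{\max}\leq n$, and $\nn_{\max}\leq 3\alpha$, and only repackage the final algebra slightly (via $2\nn_{\max}(n-\nn_{\max})=2\nn_{\max}(n-2\nn_{\max})+2\nn_{\max}^2$ instead of $n-\nn_{\max}\leq k\alpha+\nn_{\max}$).
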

\begin{proof}
Fix two critical edges $e_1$ and $e_2$,
and set $n_0\df\nn(e_1)$.
For each $i\in\set{1,2}$ fix a player $v_i$ 
in the smaller component of $G-e_i$.
Then for each $i\in\set{1,2}$ we have $\rel(e_i,v_i)=n-n_0$
and $\rel(e,v_i)=n_0$ for all critical edges $e\neq e_i$;
recall that all critical edges have the same $\nn(\cdot)$ value.
Building $\set{v_1,v_2}$ puts $e_1$ and $e_2$ on a cycle
and leaves the other $\mmax-2$ critical edges critical 
by \autoref{rem:critical-remains}.
For each $i\in\set{1,2}$,\footnote{%
It would suffice to restrict to $i=1$ or $i=2$.
However, here and in the proof of the following \autoref{lem:mmax-2-1},
we point out all arguments that are 
symmetric in the sense that both endpoints would like to build the edge.
This is interesting for \BLF\ discussed in \autoref{sec:bilateral}.}
player $v_i$ has her disconnection cost decreased by:
\begin{align*}
\MoveEqLeft
\frac{1}{\mmax} \sum_{e\in \Emax} \rel(e,v_i) 
- \frac{1}{\mmax-2} \sum_{\substack{e\in \Emax\\e\not\in\set{e_1,e_2}}} \rel(e,v_i) \\
&= \frac{1}{\mmax} \, ( (\mmax - 1) \, n_0 + n - n_0 )
- \frac{1}{\mmax-2} \, (\mmax - 2) \, n_0 \\
&= \frac{1}{\mmax} \, ( (\mmax - 2) \, n_0 + n )
- n_0 = \frac{1}{\mmax} \, ( n-2n_0) \enspace.
\end{align*}
Since we are in a \NE, this is at most $\alpha$.
Since $n\geq \mmax n_0$,
we have $n-2n_0 \geq (\mmax-2) \, n_0$,
and so it follows
$\alpha \geq (1-\frac{2}{\mmax}) \, n_0 \geq \frac{1}{3} n_0$.
Moreover, it follows
$\mmax\alpha + n_0 \geq n - n_0$.
With these two inequalities at hand, we can bound $\sep_{\max}$.
We have
\begin{align*}
\sep_{\max} & = 2 n_0 \, (n-n_0)
\leq 2 n_0 \, (\mmax\alpha + n_0)
\leq 2 \, (n_0 \mmax \alpha + 9 \alpha^2) \\
& \leq  2 \, (n \alpha + 9 \alpha^2)
\stackrel{\al \leq cn}{\leq}  2 \, (n \alpha + 9 cn \alpha)
=  2 \, (1 + 9 c) \, n \al
\enspace.\qedhere
\end{align*}
\end{proof}
\begin{lemma}
\label{lem:mmax-2-1}%
Fix a \NE\ with $\mmax \in \set{1,2}$.
Then
\begin{compactitemize}
\item we have $\sep_{\max} \leq 4 n \al = O(n \al)$
\item or we have $\al \geq \frac{1}{6} n = \Omega(n)$.
\end{compactitemize}
\end{lemma}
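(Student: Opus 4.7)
The plan is to handle the subcases $\mmax=2$ and $\mmax=1$ separately, in each case exhibiting a single-edge-building deviation that puts \emph{all} critical edges on a common cycle. The Nash condition applied to this deviation then produces an inequality from which the stated dichotomy follows by a short case analysis.

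For $\mmax=2$, \autoref{prop:critical-star} gives us that the two critical edges $e_1,e_2$ share a common BCC $B$ in the bridge tree. I would pick $v_1,v_2$ in the two smaller components of $G-e_1$ and $G-e_2$ (each of size $n_0$); note $\{v_1,v_2\}\notin E$ since $e_1,e_2$ are bridges. The deviation is $v_1$ building $\{v_1,v_2\}$, which places both $e_1,e_2$ on a common cycle in $G':=G+\{v_1,v_2\}$. A direct computation analogous to \autoref{lem:mmax-3} yields $I_{v_1}=\tfrac{1}{2}((n-n_0)+n_0)=n/2$. The key technical step is the bound $I'_{v_1}\leq n_0-1$, which I would establish as follows: any surviving bridge $e$ in $G'$ was a non-critical bridge of $G$ (so $\nn'(e)=\nn(e)<n_0$), and $v_1$ must lie on the larger side of $e$, because otherwise $v_2$ (living in the smaller component across $e_2$, disjoint from the smaller side of $e$) would be on the opposite side of $e$, and the added edge $\{v_1,v_2\}$ would then connect the two sides of $e$, contradicting the assumption that $e$ survives as a bridge. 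Hence $\rel_{G'}(e,v_1)=\nn(e)\leq n_0-1$ for every surviving critical edge, and $I'_{v_1}\leq n_0-1$ follows. The Nash condition $\alpha\geq I_{v_1}-I'_{v_1}$ then gives $n_0\geq n/2-\alpha+1$.

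For $\mmax=1$, let $e_1=\{u_1,u_2\}$ be the unique critical edge separating $G$ into $G_1$ (size $n_0$, containing $u_1$) and $G_2$ (size $n-n_0$, containing $u_2$). In the special case $n_0=1$, there are no other bridges in $G$ at all (any other bridge would need $\nn\geq 1=n_0$, contradicting uniqueness of the maximum), so the deviation in which $u_1$ builds an edge to any $v\in G_2\setminus\{u_2\}$ renders $G'$ bridgeless; then $I'_{u_1}=0$ and the Nash condition immediately yields $\alpha\geq n-1$, which dominates $n/6$. For $n_0\geq 2$, I would consider the analogous deviation by $u_1$: now $I_{u_1}=n-n_0$, and the structurally identical ``larger-side'' argument gives $I'_{u_1}\leq n_0-1$, producing $\alpha\geq n-2n_0+1$.

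In both cases we end up with a Nash-derived lower bound on $n_0$ (namely $n_0\geq n/2-\alpha+1$ for $\mmax=2$ and the analogous $2n_0\geq n-\alpha+1$ for $\mmax=1$). The dichotomy follows from a case split on the size of $n_0$ relative to $\alpha$: if $n_0\leq 2\alpha$ then $\sep_{\max}=2n_0(n-n_0)\leq 4n\alpha$ directly; otherwise $n_0>2\alpha$, and combining this with the Nash inequality and the bound $n_0\leq n/2$ (strict for $\mmax=2$ by \autoref{prop:critical-star}) forces $\alpha\geq n/6$. The main obstacle is the structural ``larger-side'' argument that delivers $I'_v\leq n_0-1$: one must carefully track which bridges of $G$ survive in $G'$ and verify, through a brief case analysis depending on whether the surviving bridge lies inside $G_1$, inside the smaller component across $e_2$, or elsewhere, that $v_1$ (resp.\ $u_1$) always ends up on the larger side. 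Once this is in hand, the algebra closing the dichotomy is routine.
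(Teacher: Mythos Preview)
Your ``larger-side'' structural argument is in fact correct: for any bridge $e$ surviving in $G'$, both $v_1$ and $v_2$ end up on the larger side, so $I'_{v_1}\le n_0-1$. The gap is not there but in what you call the ``routine'' algebra. The Nash inequality you extract, $\alpha\ge \tfrac{n}{2}-n_0+1$ (and $\alpha\ge n-2n_0+1$ for $\mmax=1$), is simply too weak when $n_0$ is close to $n/2$. Take $\mmax=1$, $n_0=\lfloor n/2\rfloor$: your inequality gives only $\alpha\ge 1$, while $n_0>2\alpha$ holds and $\sep_{\max}\approx n^2/2$ is nowhere near $4n\alpha$. Your case split ``$n_0\le 2\alpha$ versus $n_0>2\alpha$'' combined with $n_0\le n/2$ yields only $\alpha<n/4$, not $\alpha\ge n/6$.

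What is missing is a \emph{stronger} bound on $I'_{v_1}$, and that requires choosing $v_1,v_2$ more carefully than ``arbitrary points in the smaller components''. The paper's proof splits into two cases. Either one can place $v_1,v_2$ so that no new critical edge lies inside $T_1\cup T_2$; then any new critical edge sits in some other subtree $T_k$, and a counting argument gives $|T_k|<n/3$, hence improvement $\ge n/2-n/3=n/6$. Or no such placement exists; then one chooses $v_i$ by descending from $u_i$ always into a largest subtree, which forces any critical edge appearing inside $T_1\cup T_2$ to have $\nu\le n_0/2$ (not merely $\le n_0-1$), giving improvement $\ge (n-n_0)/2$ and hence $\sep_{\max}\le 4n\alpha$. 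The halving from $n_0-1$ down to $n_0/2$ is exactly what makes the second branch close, and it does not come for free: it needs the specific ``heavy-path'' choice of $v_1,v_2$.
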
%
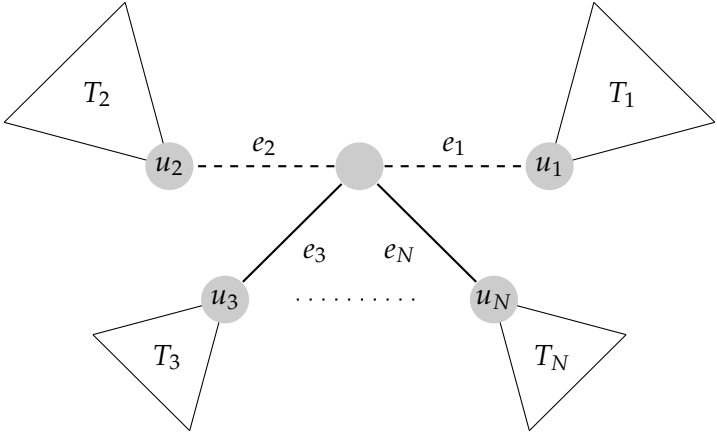
\begin{SCfigure}
\caption{%
	\label{fig:tree-structure}%
	Schematic view of the bridge tree with two critical edges $e_1$ and $e_2$, drawn dashed.
	Subtrees are represented by triangles.}
\begin{tikzpicture}[scale=.9]
	\TreeStructureSkel
	\path (c) edge[dashed] node {$e_1$} (u1);
	\path (c) edge[dashed] node[swap] {$e_2$} (u2);
	\path (c) edge node {$e_3$} (u3);
	\subtree{u2}{135}{2.5}{$T_2$}
\end{tikzpicture}
\end{SCfigure}%
\begin{SCfigure}
\caption{%
	\label{fig:insert-edge-structure}%
	How $e_1$ and $e_2$ are put on a cycle 
	by a new edge $\{v_1,v_2\}$.
	Paths that are part of the new cycle and located inside $T_1$ and $T_2$ 
	are depicted as zig-zag paths.
	New critical edges can emerge, \eg $e_3$ can become critical.}
\begin{tikzpicture}[scale=.9]
	\TreeStructureSkel
	\path (c) edge node {$e_1$} (u1);
	\path (c) edge node[swap] {$e_2$} (u2);
	\path (c) edge[dashed] node {$e_3$} (u3);
	\subtree{u2}{135}{2.5}{$T_2$}
	\node[player] (v1) [above right = 1.7 and 1 of u1] {$v_1$};
	\node[player] (v2) [above left = 1.7 and 1 of u2] {$v_2$};
	\path (v1) edge[decorate,decoration=zigzag] (u1);
	\path (v2) edge[decorate,decoration=zigzag] (u2);
	\path (v1) edge (v2);
\end{tikzpicture}
\end{SCfigure}
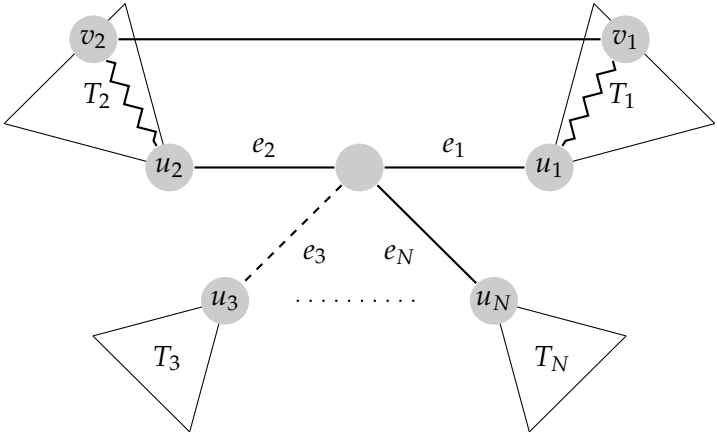%
\vspace{-.7\baselineskip}
\begin{proof}
First we consider the case $\mmax=2$.
A player can make the two critical edges part of a cycle 
by building an additional edge.
The difficulty lies in that new critical edges,
with a smaller separation value, can emerge.
We will have to put some more effort into estimating 
the improvement in disconnection cost that a player is able to achieve by building another edge.
Consider the bridge tree.
There are two subtrees $T_1$ and $T_2$ 
that are connected to the rest by the two critical edges $e_1$ and $e_2$, respectively.
They both have $n_0\df\nn(e_1)=\nn(e_2)$ vertices.
There may be more subtrees $T_3,\hdots,T_N$ connected by $e_3,\hdots,e_N$ to the center vertex.
To streamline notation, we often write $T_k$ instead of $V(T_k)$, $k\in\setn{N}$,
when we refer to the set of vertices of a tree.
\autoref{fig:tree-structure} 
\vpageref{fig:tree-structure} 
depicts the situation.
\autoref{fig:insert-edge-structure}
shows how a new edge would put $e_1$ and $e_2$ on a  cycle.
\par
First assume that we can arrange $v_1\in T_1$ and $v_2\in T_2$ 
such that after building $\set{v_1,v_2}$,
there are no critical edges in $T_1$ nor in $T_2$.
If there are no subtrees except $T_1$ and $T_2$, \ie if $N=2$,
this means that we can make the graph bridgeless by the additional edge.
The improvement in disconnection cost for $v_1$ (and also for $v_2$) of building $\set{v_1,v_2}$ is hence 
their original disconnection cost, \ie
$\frac{1}{2} \, (n-n_0 + n_0) = \frac{1}{2} n \geq \frac{1}{6} n$.
If $N\geq 3$, then critical edges emerge in one or more of the $T_3+e_3,\hdots,T_N+e_N$ after building.
Fix $k\in\set{3,\hdots,N}$.
Since $e_k$ is not critical without the new edge,
we have $\card{T_k}<n_0$ or $\card{T_k}\geq\sceil{\frac{n}{2}}$.
The latter can be excluded,
since it would imply that the smaller (or equally sized) component of $G-e_k$
includes $T_1$ and the center vertex, and so $\nn(e_k)>n_0=\nn(e_1)$,
in which case $e_1$ would not be critical.
Moreover, we have $\card{T_k}\leq n-2n_0 < n-2\,\card{T_k}$,
so $\card{T_k}<\frac{1}{3} n$.
For a player in $T_1$ (or $T_2$),
a critical edge in $T_k+e_k$ can have relevance at most $\card{T_k}$ 
and so no more than $\frac{1}{3} n$.
The improvement in disconnection cost for $v_1$ (and also for $v_2$) gained by building $\set{v_1,v_2}$ is hence at least
the original disconnection cost minus $\frac{1}{3}n$, \ie
$\frac{1}{2} \, (n-n_0 + n_0) - \frac{1}{3} n = \parens{\frac{1}{2} - \frac{1}{3}} n = \frac{1}{6} n$.
\par
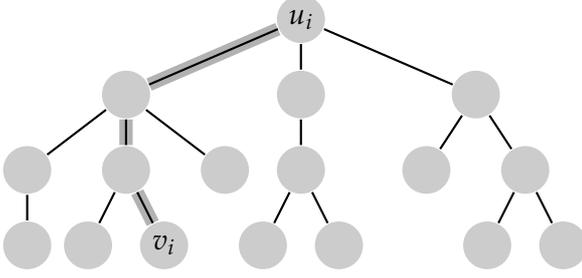
\begin{SCfigure}
\caption{%
	\label{fig:insert-edge-tree}%
	Detailed view of $T_i$ for one \mbox{$i\in\set{1,2}$}.
	Path $P_i$ is highlighted.
	Recall that vertices of the bridge tree are counted
	according to the size of the respective \bcc s.
	Here, in this example, we assume that each vertex counts~$1$.
	The path is drawn accordingly, \ie 
	always descending into a subtree with a maximum number of vertices.}
\begin{tikzpicture}[%
	every node/.style=player,%
	level distance=1cm,%
	level 1/.style={sibling distance=2.3cm},%
	level 2/.style={sibling distance=1.3cm},%
	level 3/.style={sibling distance=1.0cm},%
	]
	\node (u) {$u_i$} 
		child {node (x1) { } 
			child {node { } 
				child {node { }}}
			child {node (x2) { }
				child {node { }}
				child {node (v) {$v_i$}}}
			child {node { }}}
		child {node { }
			child {node { }
				child {node { }}
				child {node { }}}}
		child {node { }
			child {node { }}
			child {node { }
				child {node { }}
				child {node { }}}};
	\begin{pgfonlayer}{background}
	\path (u) edge[highlight] (x1)
		(x1) edge[highlight] (x2)
		(x2) edge[highlight] (v);
	\end{pgfonlayer}
\end{tikzpicture}
\end{SCfigure}%
Now consider that for all choices of $v_1\in T_1$ and $v_2\in T_2$,
building $\set{v_1,v_2}$ induces a critical edge in at least one of $T_1$ or $T_2$.
For each $i\in\set{1,2}$ we can do the following.
Let $u_i$ be the vertex where $T_i$ is connected to the rest of the graph
and consider $T_i$ being rooted at~$u_i$.
Let $P_i$ be a path starting at $u_i$ 
and ending at one of the leaves of $T_i$, say $w_i$,
such that the path always descends 
into a subtree that has a maximum number of vertices,
as shown in \autoref{fig:insert-edge-tree}.
If we choose $v_i\df w_i$, $i=1,2$,
then each $P_i$ does not contain a critical edge when we build $\set{v_1,v_2}$,
since these paths then both are located on a cycle.
However, by assumption, there is a critical edge $f$ in, say $T_1$.
By construction of $P_1$, we have $\nn(f)\leq \frac{n_0}{2}$.
So, player $v_1$ (and also $v_2$) can reduce her disconnection cost to 
no more than~$\frac{n_0}{2}$.
It follows that the improvement in disconnection cost is at least
$\frac{1}{2} \, (n-n_0 + n_0) - \frac{1}{2} n_0 = \frac{1}{2} \, (n-n_0)$,
which is at most $\alpha$, since we are in a \NE.
It follows
$\sep_{\max} = 2 n_0 \, (n-n_0) \leq 2 n_0 \cdot 2 \alpha \leq 4 n\alpha$.
\par\smallskip
The case of $\mmax=1$ can be treated similarly.
Let $e_1$ be the critical edge
and $T_1$ the subtree with $n_0\df\nn(e_1)$ vertices.
There are zero or more additional subtrees, say $T_2,\hdots,T_N$.
If there are zero such trees,
define $T_2\df\tG-T_1$, which consists of just one vertex in the bridge tree then
(but can consist of multiple vertices in $G$).
Let the ordering be such that $\card{T_2} \geq \card{T_k}$ 
for all $k\in\set{3,\hdots,N}$.
Then we argue similar to before with $T_1$ and $T_2$
in the roles of the former subtrees of the same name.
Assume first that we can
find $v_1\in T_1$ and $v_2\in T_2$
such that building $\set{v_1,v_2}$ does not induce any critical edges
in $T_1$ nor~$T_2$.
If $N\leq 2$, then we can make the graph bridgeless
and this means an improvement for $v_1$ of at least 
$n-n_0$, and so 
$\sep_{\max} = 2 n_0 \, (n-n_0) \leq 2 n_0 \, \alpha \leq 4 n\alpha$.
If $N\geq 3$,
then fix $k\in\set{3,\hdots,N}$.
We have $\card{T_k}<n_0$.
Moreover, we have $\card{T_k} \leq n - (\card{T_2}+n_0) \leq n - 2 \, \card{T_k}$,
and so $\card{T_k} \leq \frac{1}{3} n$.
Then building $\set{v_1,v_2}$ 
reduces the disconnection cost of $v_1$ to no more than $\frac{1}{3}n$.
This means an improvement for $v_1$ of at least 
$n-n_0-\frac{1}{3}n\geq \frac{2}{3} n - \frac{1}{2}n =\frac{1}{6} n$.
\par
If each choice of $v_1$ and $v_2$
induces a critical edge in $T_1$ or $T_2$,
we can, as before, show that by a careful choice of these vertices,
building $\set{v_1,v_2}$ reduces the disconnection cost for $v_1$ (and $v_2$)
to at most~$\frac{n_0}{2}$.
Player $v_1$  originally has disconnection cost $n-n_0\geq n_0$,
so she experiences an improvement of at least~$\frac{n_0}{2}$.
(Player $v_2$ originally has disconnection cost $n_0$, 
so she as well experiences an improvement of at least~$\frac{n_0}{2}$.)
It follows $n_0\leq 2\alpha$ and so 
$\sep_{\max} = 2 n_0 \, (n-n_0) \leq 4 \alpha n$.
\end{proof}
\begin{theorem}
\label{thm:smart-bound}%
The price of anarchy with a smart adversary is $O(1)$.
\end{theorem}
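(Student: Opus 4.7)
The plan is to combine the bound on the number of edges (Corollary~\ref{cor:bound-edges-smart}) with the two case-analysis lemmas (Lemma~\ref{lem:mmax-3} and Lemma~\ref{lem:mmax-2-1}) to bound social cost by $O(n\al)$, and then divide by the optimum $\Theta(n\al)$ from Proposition~\ref{prop:uni-opt}.

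First, I would recall that under the smart adversary the social cost of a strategy profile~$S$ splits cleanly as $\SC(S)=m\al+\sep_{\max}$. By Corollary~\ref{cor:bound-edges-smart}, any \NE\ satisfies $m\leq 2n-1$, so total building cost is at most $(2n-1)\al=O(n\al)$. The task is therefore reduced to bounding $\sep_{\max}$ by $O(n\al)$.

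Next, I would dispose of the large-$\al$ regime immediately. Fix the constant $c>0$ implicit in Lemma~\ref{lem:mmax-3} (any constant will do; the same $c$ as in that lemma's statement is convenient). If $\al\geq cn$, then since \NE\ have $O(n)$ edges, Corollary~\ref{cor:simple-bound}\ref{cor:simple-bound:iii} already yields an $O(1)$ price of anarchy, and we are done. So I may assume $\al<cn$ in what follows.

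In this remaining range $\al=O(n)$, I would split by $\mmax$. If $\mmax\geq 3$, Lemma~\ref{lem:mmax-3} directly gives $\sep_{\max}=O(n\al)$. If $\mmax\in\{1,2\}$, Lemma~\ref{lem:mmax-2-1} provides a dichotomy: either $\sep_{\max}\leq 4n\al=O(n\al)$, or else $\al\geq \tfrac{1}{6}n=\Omega(n)$. The latter branch contradicts the assumption $\al<cn$ as long as $c$ was chosen $\leq \tfrac{1}{6}$, which we may arrange at the outset; so in the relevant regime the first branch always holds. Combining the cases, $\sep_{\max}=O(n\al)$ for every \NE\ with $\al<cn$.

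Putting it together, every \NE\ has $\SC(S)=m\al+\sep_{\max}=O(n\al)$, while by Proposition~\ref{prop:uni-opt} the optimum is $\Theta(n\al)$; dividing yields $O(1)$. There is no real obstacle here since all the work has been done in the preceding lemmas; the only minor care is in threading the constant $c$ between the $\al\geq cn$ branch (handled by Corollary~\ref{cor:simple-bound}) and the $\al<cn$ branch (handled by Lemmas~\ref{lem:mmax-3} and~\ref{lem:mmax-2-1}), making sure the same $c$ works for both, which is simply a matter of choosing $c\leq \tfrac{1}{6}$.
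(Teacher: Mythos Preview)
Your proposal is correct and follows essentially the same route as the paper: split by $\al\gtrless cn$ with $c=\tfrac{1}{6}$, invoke Corollary~\ref{cor:simple-bound}\ref{cor:simple-bound:iii} for large~$\al$, and for small~$\al$ use Lemma~\ref{lem:mmax-3} when $\mmax\geq 3$ and Lemma~\ref{lem:mmax-2-1} when $\mmax\in\{1,2\}$ to get $\sep_{\max}=O(n\al)$, then combine with the $O(n)$ edge bound and divide by the optimum $\Theta(n\al)$. The only cosmetic wobble is the phrase ``the constant $c$ implicit in Lemma~\ref{lem:mmax-3}'': $c$ there is a free parameter in the hypothesis rather than an implicit constant of the lemma, but since you immediately fix $c\leq\tfrac{1}{6}$ this is harmless.
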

\begin{proof}
Let $c > 0$ be the constant from the \enquote{$\al = \Om(n)$} statement of \autoref{lem:mmax-2-1},
\eg we may choose $c \df \frac{1}{6}$.
Consider first $\al < cn$.
We use $c$ as the constant in the premise in \autoref{lem:mmax-3}.
So if $\mmax \geq 3$, then \autoref{lem:mmax-3} gives $\sep_{\max}=O(n\al)$.
Otherwise, if $\mmax \in \set{1,2}$, \autoref{lem:mmax-2-1} gives the same,
since $\al \geq cn$ is ruled out.
Since $\sep_{\max}$ is the total disconnection cost,
it so has a ratio of $O(1)$ to the optimum.
\autoref{cor:bound-edges-smart} ensures that the same holds for the building cost.
\par
If $\al \geq cn$,
then this and \autoref{cor:bound-edges-smart}
allow us to invoke \autoref{cor:simple-bound}\ref{cor:simple-bound:iii}.
\end{proof}
\begin{remark}
The constant in \autoref{thm:smart-bound} is bounded by $8 + \frac{8}{n-1} = 8 + o(1)$.
\end{remark}
\begin{proof}
We proceed as in the proof of the theorem,
but do more detailed calculations.
Let $c \df \frac{1}{6}$.
We start again with $\al < cn$.
If $\mmax \geq 3$, 
then \autoref{lem:mmax-3} gives 
$\sep_{\max} \leq 2 \, (1+9c) \, n\al = 5n\al$.
Otherwise, if $\mmax \in \set{1,2}$, \autoref{lem:mmax-2-1} gives 
$\sep_{\max} \leq 4n\al$.
\par
Next consider $\al \geq cn$, \ie $\frac{\al}{c} \geq n$.
Using the trivial bound $n^2$ on the disconnection cost 
yields the $\frac{\al}{c} n = 6 n \al$ bound on it.
\par
In all cases disconnection cost is bounded by $6 n \al$.
Building cost is bounded by $2 n \al$ by \autoref{cor:bound-edges-smart}.
Finally invoking \autoref{rem:concrete-bound-general} with $c_1 \df 8$ and $c_0 \df 0$
yields a bound on the price of anarchy of $8 + \frac{8}{n-1}$.
\end{proof}

\section{Bilateral Link Formation}
\label{sec:bilateral}%
Recall that bilateral link formation (\BLF) means that $v$ requires $w$'s consent
to build $\set{v,w}$,
and if they both agree, they pay $\al$ each.
This is expressed by the bilateral final graph $G^\Bi=(V,E^\Bi(S))$ with
\begin{equation*}
E^\Bi(S)\df\setst{\set{v,w}}{S_{vw}=1\land S_{wv}=1}\enspace.
\end{equation*}
Social cost is $\SC(S) = 2\,\card{E(S)}\,\al + \sum_{v\in V}I_v(G(S))$,
differing from the social cost under \ULF\
only in a factor $2$ in building cost.
Appropriate equilibrium concepts are \PNE\ and \PS.
In general, the former implies the latter.
The converse holds if cost is convex.
\par
Restricting to essential strategy profiles,
individual and social cost are both determined by the final graph,
as are the properties of being a \PNE\ or being \PS.
Hence we will sometimes work with the final graph
in place of a strategy profile.
\par
Certain classes of simple-structured \MNE\ under \ULF\
are \PNE\ under \BLF.
The following two propositions hold in a more general setting,
not limited to the adversary model.
Let $S$ be a strategy profile.
Define $S^\Bi$ by
$S^\Bi_{vw} \df \min\set{1, S_{vw}\!+\!S_{wv}}$ for all $v,w\in V$.
Then $G^\Uni(S)=G^\Bi(S^\Bi)$.
In other words, forming $S^\Bi$ means adding to $S$
the necessary requests so that for \BLF\ the same final graph 
emerges as we have for \ULF.
\begin{proposition}
Let $S$ be a \MNE\ under \ULF\
with $G\df G^\Uni(S)$ being a cycle
and using anonymous indirect cost.
Then $S^\Bi$ is a \PNE\ under \BLF.
\end{proposition}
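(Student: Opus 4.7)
The plan is to verify the two defining halves of \PNE\ under \BLF: that $S^\Bi$ is a \NE\ under \BLF, and that no absent edge would be jointly preferred. For notation, $S^\Bi$ is essential under \BLF\ with $G^\Bi(S^\Bi)=G$, so each player requests exactly her two cycle neighbors and pays $2\al$ in building cost. Anonymity on the vertex-transitive cycle gives a common indirect cost $c\df I_v(G)$, and anonymity on the path $G-e$ (via its flip automorphism) gives a common endpoint indirect cost $p_e$ for every cycle edge~$e$.

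The core step is to derive $p_e\geq c+\al$ for every cycle edge $e=\{a,b\}$ from the \NE\ half of \MNE. Since $S$ is essential under \ULF, exactly one endpoint of $e$ owns it in $S$; call that owner $a$. The \NE\ condition forbids a profitable single-edge sale by $a$, which rearranges to $I_a(G-e)\geq c+\al$, and flip anonymity promotes the bound to both endpoints. This transfer from \ULF-owner to non-owner is the one place where anonymity has to be used with care, and is the main conceptual obstacle: the \NE\ hypothesis directly constrains only the owners of edges in~$S$, but in $S^\Bi$ every player effectively owns both of her incident edges under \BLF.

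With this inequality the \NE\ check under \BLF\ is routine. A deviation $S^\Bi + A - D$ by $v$ can only remove incident edges from the final graph, since $w$'s strategy is unchanged and adding requests for non-edges is merely wasted building cost. Removing both incident requests disconnects $v$ and yields infinite indirect cost; removing a single request (for edge~$e$) yields $G-e$ with $v$ at an endpoint, so $v$'s new cost is $\al + p_e \geq 2\al + c = C_v(S^\Bi)$. Hence no deviation strictly improves.

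The pairwise-addition condition follows immediately from maximality. For any $\{v,w\}\notin E(G)$, maximality of $S$ gives $(\nrm{S_v}+1)\al + I_v(G+\{v,w\}) > \nrm{S_v}\al + c$, i.e., $I_v(G+\{v,w\}) > c-\al$. Under \BLF\ the mutual addition costs $v$ an additional $\al$ in building cost, so $C_v(S^\Bi+(v,w)+(w,v)) = 3\al + I_v(G+\{v,w\}) > 2\al+c = C_v(S^\Bi)$. The premise of~\eqref{eqn:pne} thus fails for every candidate non-edge, and the implication holds vacuously.
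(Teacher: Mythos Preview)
Your proof is correct and follows essentially the same approach as the paper: use maximality to show the premise of~\eqref{eqn:pne} fails for every non-edge, and use anonymity via the flip automorphism of the path $G-e$ to transfer the ``this edge is worth $\alpha$'' inequality from the \ULF\ owner to the other endpoint. Your write-up is more explicit than the paper's in handling arbitrary deviations under \BLF\ (ruling out added requests and double deletions) and in tracking the cost arithmetic, but the key ideas are identical.
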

\begin{proof}
By the definition of \MNE,
any additional link is an impairment for the buyer.
So the premise of~\eqref{eqn:pne} \vpageref{eqn:pne} is never true,
\ie all absent edges are justified.
\par
New edges cannot be formed unilaterally.
We are hence left to show that each edge is wanted by both endpoints,
\ie none of the endpoints can improve her individual cost by deleting the edge.
Let $v$ be the owner of an edge $\set{v,w}$ under \ULF.
Since we have a \NE\ there, 
$v$ cannot improve her individual cost by selling this edge.
Selling the edge means that $v$ would be at the end of the path $G-\set{v,w}$.
By anonymity of indirect cost we conclude:
it is worth or at least no impairment paying $\alpha$ for not being at the end of the path
that results from $G$ by deletion of one edge.
Therefore, both of each two neighboring vertices 
maintain their requests in $S^\Bi$ for having an edge between them.
\end{proof}
\begin{proposition}
Let $S$ be a \MNE\ under \ULF\
with $G\df G^\Uni(S)$ being a tree.
Let the indirect cost assign $\infty$ to a disconnected graph.
Then $S^\Bi$ is a \PNE\ under \BLF.
\end{proposition}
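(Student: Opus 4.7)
My plan is to verify both conditions for $S^\Bi$ to be a \PNE\ under \BLF. First, by construction of $S^\Bi$, every edge of $G$ is doubly requested and every non-edge has neither endpoint requesting it; hence $G^\Bi(S^\Bi) = G^\Uni(S) = G$ and $S^\Bi$ is essential. What is left to check is (a)~that $S^\Bi$ is a \NE\ under \BLF, and (b)~that the implication~\eqref{eqn:pne} holds for every absent edge.

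For (a) I fix a player $v$. Restricting as always to essential deviations, the only moves of $v$ that can alter the final graph are removals of her requests for neighbors in $G$, because adding requests to players who do not reciprocate cannot create any new edge. Since $G$ is a tree, every such removal breaks a bridge and disconnects the final graph, pushing $v$'s indirect cost to~$\infty$. Combined deviations inherit this $\infty$ as soon as one link is deleted. Hence no deviation strictly improves $v$'s cost, and $S^\Bi$ is a \NE\ under \BLF.

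For (b) I fix $\set{v,w} \notin E(G)$. Bilaterally adding this edge from $S^\Bi$ yields the final graph $G + \set{v,w}$ and raises $v$'s building cost by exactly $\alpha$; since indirect cost depends only on the final graph, the overall change in $C_v$ equals $\alpha + I_v(G + \set{v,w}) - I_v(G)$, which is exactly the change $v$ would face under \ULF\ if she alone bought $\set{v,w}$. The \MNE\ hypothesis applied to $v$ and the single non-edge $\set{v,w}$ forces this quantity to be strictly positive, so $C_v(S^\Bi + (v,w) + (w,v)) > C_v(S^\Bi)$. The premise of~\eqref{eqn:pne} is therefore false and the implication holds vacuously. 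The main subtlety is that a plain \NE\ hypothesis would yield only the weak inequality $\alpha + I_v(G + \set{v,w}) - I_v(G) \geq 0$, which would leave the premise of~\eqref{eqn:pne} possibly true and force a separate (not obviously tractable) analysis for the endpoint $w$; the \MNE\ strengthening at $v$ is precisely what makes the implication trivially satisfied.
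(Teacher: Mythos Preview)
Your proof is correct and follows essentially the same approach as the paper: the \MNE\ property makes the premise of~\eqref{eqn:pne} false for every absent edge, and the tree structure together with the $\infty$ penalty for disconnection rules out any beneficial removal, hence the \NE\ part. Your write-up is more explicit (verifying essentiality of $S^\Bi$, noting that unilateral additions under \BLF\ cannot create edges, and explaining why \MNE\ rather than plain \NE\ is needed), but the underlying argument is identical to the paper's.
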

\begin{proof}
As in the previous proposition,
\eqref{eqn:pne} follows from the properties
of a \MNE.
So we are left to consider removals.
Since the final graph is a tree,
removal of any edge would make it disconnected, yielding indirect cost $\infty$.
Hence no player wishes to remove an edge.
\end{proof}
Now we turn to the adversary model.
It follows from the two previous propositions that the equilibrium existence results from
\autoref{prop:uni-ne-star}, \autoref{prop:uni-ne-cycle},
and \autoref{prop:uni-ne-path} carry over from \MNE\ to \PNE.
Hence for anonymous disconnection cost,
we have proved existence of \PNE, and hence also \PS\ graphs,
under \BLF\ for all ranges of $\alpha$ and $n\geq 9$.
\par
\autoref{cor:simple-bound} holds independently of the equilibrium concept.
The cycle is optimal for $\alpha\leq n-1$
with social cost $2n\alpha$,
and a star is optimal for $\alpha\geq n-1$
with social cost $2\,(n-1)\,(\al+1)$;
so the optimum social cost is always $\Theta(n\alpha)$.
Ranges for $\alpha$ and the exact expressions for the social cost
are different from those in \autoref{prop:uni-opt},
accounting for the factor $2$ in building cost.
Otherwise, arguments are the same.
It can be checked easily that \autoref{thm:uni-stability} also carries over,
to \PNE\ as well as \PS.

\subsection*{Simple-Minded Adversary}
We show that for the simple-minded adversary,
cost is convex,
hence \PNE\ and \PS\ are equivalent.
The following proposition is purely graph-theoretical.
\begin{proposition}
\label{prop:becoming-bridges}%
Let $G=(V,E)$ be a connected graph,
$v\in V$ a vertex, \mbox{$e=\set{v,w}\in E$} an edge, and $F\subseteq E\setminus\set{e}$ a set of edges,
each incident with $v$,
so that $G'\df G-F-e$ is still connected.
Let $B_1$ be those edges that are non-bridges in $G$ but bridges in $G-e$.
Let $B_2$ be those edges that are non-bridges in $G-F$ but bridges in $G-F-e$.
Then $B_1\subseteq B_2$.
\end{proposition}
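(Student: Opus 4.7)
The plan is to take an arbitrary $f \in B_1$ and verify the two defining properties of $B_2$: that $f$ is a non-bridge in $G-F$ and a bridge in $G-F-e$. Two easy preliminary observations drive everything. First, every cycle of $G$ through $f$ must contain $e$: otherwise such a cycle would also be a cycle through $f$ in $G-e$, contradicting $f\in B_1$. Second, consequently $e$ is itself a non-bridge of $G$, so $G-e$ is connected. Before the main work I must also rule out $f\in F$: if $f=\{v,u\}\in F$, then the assumed connectivity of $G-F-e$ yields a $v$--$u$ path avoiding $e$ and $f$, which together with $f$ is a cycle through $f$ in $G-e$, a contradiction. So $f$ is indeed an edge of $G-F$.

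For the bridge claim in $G-F-e$, the structural observation is that, since $G-e$ is connected and $f$ is a bridge there, $G-e-f$ has exactly two components $A,B$ separating the endpoints of $f$. Because $G-F-e-f$ is a spanning subgraph of $G-e-f$, every component of $G-F-e-f$ is contained in $A$ or in $B$; combined with the facts that $G-F-e$ is connected and removing one edge can produce at most two components, the component partition of $G-F-e-f$ coincides with $\{A,B\}$. In particular the endpoints of $f$ lie in different components of $G-F-e-f$, so $f$ is a bridge of the connected graph $G-F-e$.

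For the non-bridge claim in $G-F$, I need to show $G-F-f=(G-F-e-f)+e$ is connected. Its vertex set is partitioned as $A\cup B$ with no edges across apart from the added $e$. Using again a cycle of $G$ through both $e$ and $f$ (it exists by the first preliminary observation applied to any cycle through $f$) and opening it at $f$, one obtains a path between the endpoints of $f$ in $G-f$ that necessarily traverses $e$, which forces the two endpoints $v,w$ of $e$ to lie on opposite sides of $\{A,B\}$. Adding $e$ back therefore joins $A$ and $B$, so $G-F-f$ is connected, and $f\in B_2$ as required. The main technical obstacle is the identification of the components of $G-F-e-f$ with $A,B$, i.e.\ showing that refinement forces coincidence; once that is in hand, both properties of $f$ drop out immediately.
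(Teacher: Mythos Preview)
Your proof is correct. The component-identification step---that the bipartition $\{A,B\}$ of $G-e-f$ persists as the component structure of $G-F-e-f$---is the heart of your argument and is valid: a spanning subgraph can only refine components, removing one edge from the connected graph $G-F-e$ yields at most two, and neither $A$ nor $B$ is all of $V$, so the refinement must be an equality. Once that is established, both membership conditions for $B_2$ follow as you describe.

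The paper's route is different and shorter. It exploits the earlier remark that if $e$ is a non-bridge and $C$ is \emph{any} cycle through $e$, then every edge that turns into a bridge upon removing $e$ lies on $C$. Since $G-F-e$ is connected, one takes a $v$--$w$ path in $G-F-e$ and closes it with $e$, obtaining a cycle $C$ lying entirely in $G-F$. Then $B_1 \subseteq E(C)$, so every $f \in B_1$ sits on a cycle in $G-F$ and is therefore a non-bridge there; the bridge property in $G-F-e$ is immediate (and left implicit in the paper) from $G-F-e$ being a connected spanning subgraph of $G-e$. Your approach trades this cycle-containment shortcut for a direct cut analysis: more work, but fully self-contained, and it makes both halves of the $B_2$ membership explicit rather than leaving one to the reader.
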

\begin{proof}
Since $G'$ is connected,
there is a path $(v,e_1,v_1,\hdots,w)$ in $G'$.
Then the cycle $C\df(v,\hdots,w,e,v)$ is in $G-F$.
By \autoref{rem:one-cycle}\ref{rem:one-cycle:ii},
we have $B_1\subseteq E(C)$.
Hence all edges in $B_1$ are on a cycle that is not destroyed by removal of $F$,
so no edge in $B_1$ is made a bridge by removal of $F$.
It follows $B_1\subseteq B_2$.
\end{proof}
\begin{lemma}
\label{lem:simple-minded-convex}%
The simple-minded adversary induces convex cost.
\end{lemma}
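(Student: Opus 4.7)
The plan is to proceed by induction on $k = |F|$, the number of edges removed, where $F = \{(v,w_1), \ldots, (v,w_k)\}$. The case $k = 1$ is equality. For the inductive step it suffices to establish the ``splitting'' inequality
\[
I_v(H - e) - I_v(H) \geq I_v(G - e) - I_v(G),
\]
for $H = G - F'$ with $F'$ a set of edges incident to $v$ and $e$ one more such edge; telescoping this over successive single-edge removals then yields the full convexity statement. The disconnected cases are trivial: if $G - F$ is disconnected then $I_v(G - F) = \infty$ and the inequality is vacuous. Thus I may assume $G - F$ (and hence every intermediate subgraph $G - F'$ and each $G - e_i$) is connected, so in particular no edge in $F$ is a bridge in $G$.

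Next I will write $I_v(K) = R_K(v)/|E(K)|$ and use the algebraic identity
\[
I_v(K - e) - I_v(K) = \frac{R_{K-e}(v) - R_K(v)}{|E(K)| - 1} + \frac{R_K(v)}{|E(K)|\,(|E(K)| - 1)},
\]
applied to both $K = H$ and $K = G$, and compare the two expressions term by term. The second term is monotone in the desired direction because $R_H(v) \geq R_G(v)$ while $|E(H)| \leq |E(G)|$. The inequality on sums of relevances holds since removing a non-bridge edge incident to $v$ can only turn other non-bridges into bridges (relevance jumps from $0$ to a positive value) or enlarge the other-side components of existing bridges, never the converse.

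For the first term, the key observation is that in passing from $K$ to $K - e$ the sum $R(v)$ changes only through edges that transition from non-bridge to bridge: the edge $e$ itself is a non-bridge incident to $v$ and so contributes $0$ on both sides, while edges already bridges in $K$ remain bridges in $K - e$ with only possibly larger relevance. Hence $R_{K-e}(v) - R_K(v) = \sum_{f \in B_K} \rel_{K-e}(f, v)$, where $B_K$ is the set of edges becoming bridges when $e$ is removed from $K$. Now \autoref{prop:becoming-bridges} delivers $B_G \subseteq B_H$, and for each $f \in B_G$ one has $\rel_{H-e}(f, v) \geq \rel_{G-e}(f, v)$ since $H - e \subseteq G - e$ makes $f$'s other-side component at least as large; combined with $|E(H)| - 1 \leq |E(G)| - 1$ this delivers the required bound on the first term. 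The main hurdle will be keeping these two monotonicity observations (on $R(v)$ and on the relevances of the newly formed bridges) correctly aligned with the denominator shift so that all three factors push the inequality in the same direction; once \autoref{prop:becoming-bridges} is invoked, the proof should come down to this bookkeeping.
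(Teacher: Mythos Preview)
Your proposal is correct and follows essentially the same inductive route as the paper: reduce to the single-edge ``splitting'' inequality $I_v(H-e)-I_v(H)\geq I_v(G-e)-I_v(G)$ and invoke \autoref{prop:becoming-bridges} for $B_G\subseteq B_H$; you are simply more explicit than the paper about the denominator shift, which the paper compresses into the line ``since $G(S')$ has fewer edges than $G(S)$, it suffices to consider changes in relevance''. One small point to tighten: for your formula $R_{K-e}(v)-R_K(v)=\sum_{f\in B_K}\rel_{K-e}(f,v)$ you actually need that bridges of $K$ have \emph{unchanged} (not merely ``possibly larger'') relevance after removing the non-bridge $e$ --- this is true, since $e$ lies on a cycle entirely contained in one component of $K-f$, so removing $e$ does not move any vertex across~$f$.
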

\begin{proof}
Let $v\in V$ and $w_1,\hdots,w_k\in V$
and $S$ be a strategy profile.
We show~\eqref{eqn:def-convex-2} \vpageref{eqn:def-convex-2}
proceeding by induction on $k$.
The case $k=1$ is clear.
Let $k>1$ and set $S'\df S-(v,w_1)-\hdots-(v,w_{k-1})$.
We show that switching from $S'$ to $S'-(v,w_k)$
increases disconnection cost for $v$ at least as much as switching from $S$ to $S-(v,w_k)$. 
Since $G(S')$ has fewer edges than $G(S)$,
it suffices to consider changes in relevance~$R(\cdot)$.
\par
When removing $\set{v,w_k}$, relevance of zero or more edges 
changes from $0$ to a positive value;
these are precisely those edges which become bridges by the removal
and which were no bridges before.
No relevance is reduced by removal of edges.
\par
Let $B_1$ be all those edges that become bridges by the switch from $S$ to $S-(v,w_k)$,
and let $B_2$ those that become bridges by the switch from $S'$ to $S'-(v,w_k)$.
Then $B_1\subseteq B_2$ by \autoref{prop:becoming-bridges}.
The increase in relevance from $0$ to a positive value for $e\in B_1$
given $S'$ is at least as high as when given $S$.
In other words,
while $\set{v,w_1},\hdots,\set{v,w_{k-1}}$ are removed,
the effect of all edges in $B_1$ becoming bridges is saved 
until the removal of $\set{v,w_k}$.
We have shown that
\begin{equation}
\label{eqn:convexity-induction-step}%
I_v(S'-(v,w_{k})) - I_v(S')
\geq I_v(S-(v,w_k)) - I_v(S)\enspace.
\end{equation}
The proof is concluded by the following standard calculation:
\begin{align*}
&\alignskip I_v(S-(v,w_1)-\hdots-(v,w_k))-I_v(S)\\
&=I_v(S'-(v,w_k))-I_v(S)\\
&=I_v(S'-(v,w_k))-I_v(S)+I_v(S')-I_v(S')\\
&=I_v(S'-(v,w_k))-I_v(S')+I_v(S')-I_v(S)\\
&\geq I_v(S-(v,w_k))-I_v(S)+I_v(S')-I_v(S) & \text{by~\eqref{eqn:convexity-induction-step}}\\
&\geq I_v(S-(v,w_k))-I_v(S)
+ \sum_{i=1}^{k-1} \big(I_v(S-(v,w_i)) - I_v(S)\big) & \text{by induction} \\
& = \sum_{i=1}^{k} \big(I_v(S-(v,w_i)) - I_v(S)\big)\enspace.
\tag*{\qedhere}
\end{align*}
\end{proof}
\par\medskip
We restrict to the simpler concept of \PS\ in the following study 
of the price of anarchy,
knowing that by convexity it is the same as \PNE.
Recall that we have shown in \autoref{prop:chord-free}
that a \NE\ is chord-free.
The proof does not fully carry over to the bilateral case,
since it contains an argument of the form
\enquote{then the player would rather build a different link instead.}
Yet, we can use the idea of that proof 
to show chord-freeness if $\alpha$ is not too small.
For small $\alpha$, we can show a bound on the number of edges 
by a different simple argument.
\begin{proposition}
\label{prop:chord-free-bilateral-simple}%
Let a pairwise stable graph $G$ be given.
\begin{enumerate}[label=(\roman*)]
\item\label{prop:chord-free-bilateral-simple:i}%
If $\alpha>\frac{1}{2}$, then $G$ is chord-free
and hence by \autoref{prop:chord-free-general} only has $2n-1=O(n)$ edges.
\item\label{prop:chord-free-bilateral-simple:ii}%
In general, $G$ is chord-free
(with $O(n)$ edges)
or has at most $\frac{n}{\sqrt{2\alpha}}+1$ edges.
\end{enumerate}
\end{proposition}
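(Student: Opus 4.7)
The plan is to adapt the chord-freeness argument from the unilateral case (\autoref{prop:chord-free}) to the bilateral setting, replacing the ``player would rather build a different link instead'' step with a direct quantitative comparison between the single-link cost saving $\alpha$ and the induced disconnection-cost impairment. The key graph-theoretic fact from the unilateral proof---that deleting a chord leaves the bridge tree, and hence every $\rel(\cdot,\cdot)$ and every $R(\cdot)$, unchanged---is purely structural and carries over verbatim.

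First I would fix a \PS\ graph $G=G(S)$ containing a chord $e=\set{v,w}$. Under \BLF, unilaterally severing $e$ lowers $v$'s building cost by $\alpha$ and, by the invariance just mentioned, changes her disconnection cost from $\frac{R(v)}{m}$ to $\frac{R(v)}{m-1}$, i.e.\ increases it by $\frac{R(v)}{m\,(m-1)}$. The \PS\ condition $C_v(S-(v,w)) \geq C_v(S)$ then yields
$$\alpha\,m\,(m-1) \;\leq\; R(v) \;\leq\; \tfrac{n\,(n-1)}{2},$$
where the upper bound on $R(v)$ is the estimate established just before \autoref{prop:delta-r}.

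For part~\ref{prop:chord-free-bilateral-simple:i} I would observe that a \PS\ graph is connected (disconnected final graphs have disconnection cost $\infty$) and, containing a chord, also contains a cycle; hence $m \geq n$ and $m\,(m-1) \geq n\,(n-1)$. Combining this with the displayed inequality forces $\alpha \leq \tfrac{1}{2}$, contradicting $\alpha > \tfrac{1}{2}$. So $G$ is chord-free, and \autoref{prop:chord-free-general} supplies the $2n-1$ edge bound.

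For part~\ref{prop:chord-free-bilateral-simple:ii}, if $G$ has a chord the same inequality gives $(m-1)^2 \leq m\,(m-1) \leq \tfrac{n^2}{2\alpha}$, whence $m \leq \tfrac{n}{\sqrt{2\alpha}}+1$; otherwise $G$ is chord-free and \autoref{prop:chord-free-general} again applies. I expect no real obstacle here: the only subtlety worth noting is that under \PS\ it suffices for \emph{one} endpoint to refuse removal of $e$, so a single inequality for $v$ (as opposed to a multi-link deviation argument as required for \PNE) already delivers the edge-count bound.
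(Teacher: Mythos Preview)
Your proof is correct and follows essentially the same route as the paper's: both hinge on the observation that deleting a chord leaves all relevances unchanged, so the only impairment is the denominator shift $\frac{R(v)}{m(m-1)}$, and both combine this with the bound $R(v)\le\frac{n(n-1)}{2}$. Your treatment is in fact slightly more explicit in two places: you make the step $m\ge n$ (from the existence of a cycle) visible in part~\ref{prop:chord-free-bilateral-simple:i}, whereas the paper uses it tacitly when asserting $\frac{R(v)}{m(m-1)}\le\frac12$; and your inequality $\alpha\,m(m-1)\le R(v)$ automatically subsumes the bridgeless case ($R(v)=0$ forces a contradiction), which the paper disposes of separately beforehand. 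One minor phrasing issue: your closing remark that ``under \PS\ it suffices for \emph{one} endpoint to refuse removal'' is inverted---the \PS\ definition requires \emph{both} endpoints to refuse, which is precisely why you may invoke the inequality for whichever endpoint you like.
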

\begin{proof}
If $G$ is bridgeless, selling a chord is beneficial since
disconnection cost $0$ is maintained.
So for both parts we assume that $G$ contains bridges.
\par
\ref{prop:chord-free-bilateral-simple:i}
The impairment in disconnection cost for a player $v$ of selling a chord 
is only due to the change in the denominator of the disconnection cost
and is precisely $\frac{1}{m\,(m-1)}\,R(v)$,
which is upper-bounded by $\frac{1}{2}$
since $R(v)\leq\frac{n\,(n-1)}{2}$.
Hence if $\alpha$ is larger than that,
there is an incentive to sell the chord.
\par
\ref{prop:chord-free-bilateral-simple:ii}
Let $G$ possess a chord.
This means that any of its two endpoints, say $v$, 
deems it being no impairment
to pay $\alpha$ for this edge,
hence $\frac{1}{m\,(m-1)}\,R(v) \geq \alpha$.
It follows 
\begin{equation*}
\frac{n^2}{2} \geq \frac{n\,(n-1)}{2} \geq R(v) 
\geq m\,(m-1)\,\alpha \geq (m-1)^2\,\alpha
\enspace,
\end{equation*}
hence $\frac{n}{\sqrt{2\alpha}}+1\geq m$.
\end{proof}
\par
As for bounding disconnection cost,
\autoref{lem:diam} is no longer true,
but\linebreak \autoref{lem:double-count} and its \autoref{cor:disconnect-diam-bound} is.
If $\al>\frac{1}{2}$,
we can at least show a bound of $O(1+\sqrt{\sfrac{n}{\alpha}})$
on the price of anarchy;
we do not know whether it is tight.
If $\al\leq\frac{1}{2}$,
we can show $O(1+\sqrt{\sfrac{n}{\alpha^{1.5}}})$.
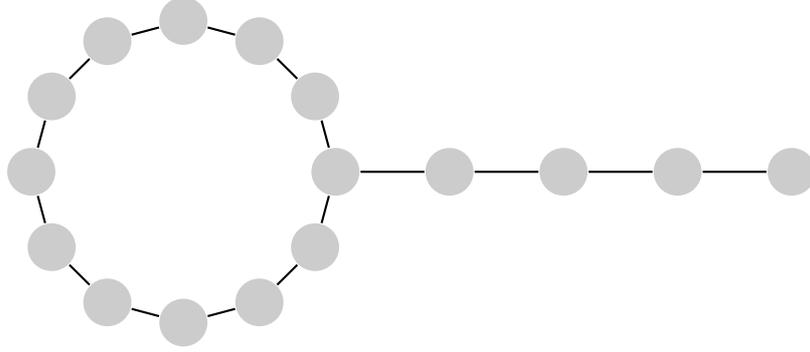
\begin{figure}
\centering
\begin{tikzpicture}[%
	every node/.style=player,%
	node distance=1.5cm,%
	]
	\foreach \i in {0,30,...,330}{
		\node (v\i) at (\i:2cm) { };
	};
	\foreach \i in {0,30,...,300}{
		\pgfmathtruncatemacro{\next}{\i+30}
		\path (v\i) edge (v\next);
	};
	\path (v330) edge (v0);
	\node (r1) [right = of v0] { };
	\node (r2) [right = of r1] { };
	\node (r3) [right = of r2] { };
	\node (r4) [right = of r3] { };
	\path (v0) edge (r1) (r1) edge (r2) (r2) edge (r3) (r3) edge (r4);
\end{tikzpicture}
\caption{%
	\label{fig:bilateral-cycle-path}%
	Cycle with path attached, here $n=16$ and $\ell=4$.}
\end{figure}%
\begin{lemma}
\hfill
\label{lem:bilateral-bridge-tree-diameter}%
\begin{enumerate}[label=(\roman*)]
\item
\label{lem:bilateral-bridge-tree-diameter:upper}%
The diameter of the bridge tree of a \PS\ graph is $O\parens{\sqrt{n\alpha}}$
if $\al>\frac{1}{2}$ 
and $O\parens{\sqrt{n\sqrt{\alpha}}}$ if $\al\leq\frac{1}{2}$.
\item
\label{lem:bilateral-bridge-tree-diameter:lower}%
For $\alpha=1$, the bridge tree can have diameter $\Omega(\sqrt{n})$,
even if the graph is \PS.
\end{enumerate}
\end{lemma}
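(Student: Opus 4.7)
My plan is to deduce the upper bound in (i) by showing that a long diametral path in the bridge tree would allow both of its endpoints to \emph{jointly} benefit strictly from adding the shortcut edge, contradicting \PS.

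Fix a \PS{} graph $G$ with bridge tree $\tG$, and let $\tP=(B_0,e_1,B_1,\dots,e_\ell,B_\ell)$ be a diametral path in $\tG$. Pick $v_0\in B_0$ and $v_\ell\in B_\ell$. Removing $e_i$ separates $G$ into the $v_0$-side (containing at least $B_0,\dots,B_{i-1}$) and the $v_\ell$-side (containing at least $B_i,\dots,B_\ell$); since every \bcc{} holds at least one vertex, we obtain $\rel(e_i,v_0)\geq \ell-i+1$ and $\rel(e_i,v_\ell)\geq i$, whence
\begin{equation*}
\sum_{i=1}^\ell \rel(e_i,v_0) \;\geq\; \frac{\ell(\ell+1)}{2} \quad\text{and}\quad \sum_{i=1}^\ell \rel(e_i,v_\ell) \;\geq\; \frac{\ell(\ell+1)}{2}.
\end{equation*}
Now consider adding $\{v_0,v_\ell\}$ to $G$. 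By \autoref{rem:one-cycle}\ref{rem:one-cycle:i}, exactly the bridges $e_1,\dots,e_\ell$ become non-bridges (they form the unique cycle with the new edge, since $\tG$ is a tree). Hence $\Delta R_{v_0}=\sum_i\rel(e_i,v_0)$ and $\Delta R_{v_\ell}=\sum_i\rel(e_i,v_\ell)$, both at least $\ell(\ell+1)/2$. By \autoref{prop:delta-r}\ref{prop:delta-r:i}, the benefit in disconnection cost is at least $\frac{\ell(\ell+1)}{2(m+1)}$ for each of $v_0,v_\ell$. The \PS{} condition~\eqref{eqn:pne} for the absent edge $\{v_0,v_\ell\}$ requires that at least one of the two benefits be strictly below $\alpha$, so $\frac{\ell(\ell+1)}{2(m+1)}<\alpha$, i.e.\ $\ell^2 = O(\alpha m)$. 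Combining with \autoref{prop:chord-free-bilateral-simple}: $m=O(n)$ for $\alpha>\frac{1}{2}$, giving $\ell=O(\sqrt{n\alpha})$; and $m=O(n/\sqrt{\alpha})$ for $\alpha\leq\frac{1}{2}$, giving $\ell=O(\sqrt{n\sqrt{\alpha}})$.

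For (ii), the construction is the graph in \autoref{fig:bilateral-cycle-path}: a cycle of $k\df n-\lfloor\sqrt{n}\rfloor$ vertices with a pendant path of $\ell\df\lfloor\sqrt{n}\rfloor$ vertices attached at one cycle vertex. Its bridge tree is a path on $\ell+1$ nodes with diameter $\ell=\Omega(\sqrt{n})$. \PS{} for $\alpha=1$ is checked case by case: path-edge removal disconnects; cycle-edge removal turns $G$ into a tree whose indirect cost for the endpoints rises by $\Omega(n)\gg 1$; for every absent edge $\{x,y\}$ the new cycle either lies entirely inside the original cycle (then $\Delta R=0$ and the benefit reduces to the correction $\frac{R(x)}{m(m+1)}=O(\ell^2/n^2)=o(1)$), or it includes pendant-path bridges, in which case the cycle-side endpoint has $\Delta R \leq \ell(\ell+1)/2$ and benefit at most $\frac{\ell(\ell+1)}{2(n+1)}+o(1)\leq \frac{1}{2}+o(1)<1=\alpha$. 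Either way, some endpoint is strictly impaired, confirming~\PS.

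The main obstacle lies in (i): whereas the unilateral proof of \autoref{lem:diam} only needed a one-sided relevance bound (a single player could unilaterally build the shortcut), in the bilateral setting \PS{} demands \emph{joint} benefit. The idea of placing $v_0,v_\ell$ at the two leaf ends of a diametral bridge-tree path yields the complementary counts $\rel(e_i,v_0)\geq\ell-i+1$ and $\rel(e_i,v_\ell)\geq i$, which give the \emph{same} quadratic lower bound $\ell(\ell+1)/2$ on both sides. This symmetry, together with the edge-count bound from \autoref{prop:chord-free-bilateral-simple}, yields the desired square-root dependence, which (ii) then shows is tight.
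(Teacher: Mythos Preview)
Your proof is correct and follows essentially the same approach as the paper. For~(i), both you and the paper argue that the two endpoints of a diametral bridge-tree path each gain $\Delta R\geq\ell(\ell+1)/2$ from the shortcut edge, then combine this with the edge-count bound of \autoref{prop:chord-free-bilateral-simple}; for~(ii), both use the cycle-with-pendant-path construction of \autoref{fig:bilateral-cycle-path} with $\ell\approx\sqrt{n}$ and $\alpha=1$, verifying \PS\ by checking that for every absent edge the endpoint nearer the cycle has benefit strictly below~$\alpha$.
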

\begin{proof}
\ref{lem:bilateral-bridge-tree-diameter:upper}~Building an edge that puts a path 
in the bridge tree of length $\ell\geq 1$
on a cycle brings to \emphasis{both endpoints} at least the $\Delta R$ 
of that path alone, \ie $\sum_{k=1}^\ell k = \frac{\ell \, (\ell+1)}{2}$.
If $\al>\frac{1}{2}$, then
the edge so brings an improvement in disconnection cost of at least
$\frac{1}{m+1}\Delta R \geq \frac{1}{2n} \frac{\ell \, (\ell+1)}{2}$,
so $4n\alpha \geq \ell \, (\ell+1)\geq \ell^2$.
We use the $2n-1$ bound on the number of edges from
\autoref{prop:chord-free-bilateral-simple}\ref{prop:chord-free-bilateral-simple:i} here.
If $\al\leq\frac{1}{2}$, then by
\autoref{prop:chord-free-bilateral-simple}\ref{prop:chord-free-bilateral-simple:ii}
we have an $O\parens{\frac{n}{\sqrt{\al}}}$ bound on the number of edges, 
yielding $\ell=O\parens{\sqrt{n\sqrt{\al}}}$ along the same lines.
\par
\ref{lem:bilateral-bridge-tree-diameter:lower}~Consider a cycle
with a path of length $\ell$ attached to it with one of its ends,
as shown in \autoref{fig:bilateral-cycle-path}
\vpageref{fig:bilateral-cycle-path}.
Let $n$ be the total number of vertices
and
let $\frac{1}{n} \frac{\ell \, (\ell+1)}{2}\leq 
\alpha 
\leq \frac{1}{n} \frac{((n-\ell)-1) \, (n-\ell)}{2}$;
such an $\alpha$ exists if $n\geq 3\ell$.
The bridge tree has diameter $\ell$.
Because of the lower bound on $\alpha$,
no vertex on the cycle wishes to connect to a vertex on the path,
and also no vertex on the path wishes to connect to a vertex that is 
located away from the cycle.
Because of the upper bound on $\alpha$,
it can also be shown easily that no two neighboring vertices 
on the cycle wish to sell the edge between them.
Trivially, no edge on the path will be sold.
Hence this graph is \PS.
However, we can choose $n\geq 9$, $\ell\df\sfloor{\sqrt{n}}$, and $\alpha\df 1$,
and so have a diameter of $\Omega(\sqrt{n})$.
\end{proof}
One might hope to find a lower bound matching $O(1+\sqrt{\sfrac{n}{\alpha}})$
by the construction used in this proof.
This is, however, not the case:
\begin{remark}
The example in
\autoref{lem:bilateral-bridge-tree-diameter}%
\ref{lem:bilateral-bridge-tree-diameter:lower}
does not prove
a price of anarchy beyond~$\Theta(1)$.
\end{remark}
\begin{proof}
The total disconnection cost is 
\begin{equation*}
\frac{1}{n}\,2\,\sum_{k=1}^\ell (n-k) \, k
= \frac{\ell\,(\ell+1)}{n} \parens{n-\frac{2\ell+1}{3}}\enspace.
\end{equation*}
For a lower bound on the price of anarchy,
we have to divide this by $\Theta(n\alpha)$,
so we choose $\alpha$ as small as possible, \ie $\alpha=\frac{1}{n}\frac{\ell\,(\ell+1)}{2}$.
Then we divide the total disconnection cost by $\Theta(n\alpha)=\Theta(\ell\,(\ell+1))$ and receive
a lower bound on the price of anarchy of only
\begin{equation*}
\Theta\delim{(}{\frac{1}{n} \parens{n-\frac{2\ell+1}{3}}}{)}
= \Theta\delim{(}{1-\frac{2\ell+1}{3n}}{)}
= O(1)\enspace.
\qedhere
\end{equation*}
\end{proof}
\begin{theorem}
The price of anarchy for a simple-minded adversary with \BLF\
and \PS\
is $O(1+\sqrt{\sfrac{n}{\alpha}})$ if $\al>\frac{1}{2}$
and $O(1+\sqrt{\sfrac{n}{\alpha^{1.5}}})$ otherwise.
\end{theorem}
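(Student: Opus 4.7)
The plan is straightforward: split the social cost of a \PS\ strategy profile into building cost $2m\alpha$ and disconnection cost, bound each separately using the preparatory results of this section, and divide by the optimum social cost $\Theta(n\alpha)$ given in \autoref{prop:uni-opt} (adapted by the factor $2$ for \BLF).

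For the building cost I would invoke \autoref{prop:chord-free-bilateral-simple}. If $\alpha>\frac{1}{2}$, part~\ref{prop:chord-free-bilateral-simple:i} gives $m \leq 2n-1$, so the building cost is $O(n\alpha)$ and contributes only $O(1)$ to the price-of-anarchy ratio. If $\alpha \leq \frac{1}{2}$, part~\ref{prop:chord-free-bilateral-simple:ii} yields $m = O(n/\sqrt{\alpha})$ (the bound $\frac{n}{\sqrt{2\alpha}}+1$ dominates $2n-1$ throughout this range), and hence the building cost is $O(n\sqrt{\alpha})$, contributing $O(1/\sqrt{\alpha})$ to the ratio; this term will be absorbed later.

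For the disconnection cost I would invoke \autoref{cor:disconnect-diam-bound} to bound it by $n \, \diam(\tG)$, and then plug in \autoref{lem:bilateral-bridge-tree-diameter}\ref{lem:bilateral-bridge-tree-diameter:upper}. For $\alpha > \frac{1}{2}$ this gives disconnection cost $O(n \cdot \sqrt{n\alpha}) = O(n^{3/2}\sqrt{\alpha})$, so the contribution to the ratio after dividing by $\Theta(n\alpha)$ is $O(\sqrt{n/\alpha})$. For $\alpha \leq \frac{1}{2}$ we get $O(n \cdot \sqrt{n\sqrt{\alpha}}) = O(n^{3/2}\alpha^{1/4})$, contributing $O(\sqrt{n}/\alpha^{3/4}) = O(\sqrt{n/\alpha^{1.5}})$ to the ratio.

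Combining: the ratio is $O(1 + \sqrt{n/\alpha})$ for $\alpha > \frac{1}{2}$, and for $\alpha \leq \frac{1}{2}$ it is $O(1/\sqrt{\alpha} + \sqrt{n/\alpha^{1.5}})$. In the latter range $\sqrt{n/\alpha^{1.5}} \geq 1/\sqrt{\alpha}$ (since $n\geq 1$ and $\alpha \leq 1$ imply $\alpha^{3/4} \leq \alpha^{1/2}$), so the building-cost term is absorbed and the bound reduces to $O(1 + \sqrt{n/\alpha^{1.5}})$, as claimed. No step presents a real obstacle since all the work is already packaged in the cited lemmas; the only care required is recognizing that in the $\alpha \leq \frac{1}{2}$ regime the building-cost contribution is dominated by the disconnection-cost contribution, so that the final bound has the clean form stated in the theorem.
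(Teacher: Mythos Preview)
Your proposal is correct and follows essentially the same route as the paper's proof: bound building cost via \autoref{prop:chord-free-bilateral-simple}, bound disconnection cost via \autoref{cor:disconnect-diam-bound} combined with \autoref{lem:bilateral-bridge-tree-diameter}\ref{lem:bilateral-bridge-tree-diameter:upper}, and divide by $\Theta(n\alpha)$. The only minor slip is the parenthetical claim that $\frac{n}{\sqrt{2\alpha}}+1$ dominates $2n-1$ throughout $\alpha\leq\frac{1}{2}$ (it need not near $\alpha=\frac{1}{2}$), but this is harmless since $2n-1=O(n/\sqrt{\alpha})$ in that range anyway, so your conclusion $m=O(n/\sqrt{\alpha})$ stands.
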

\begin{proof}
By \autoref{prop:chord-free-bilateral-simple}
building cost of a pairwise stable graph is $O(n\alpha)$
or $O((\frac{n}{\sqrt{2\alpha}} + 1)\,\alpha)$,
both having a ratio of $O(1+\frac{1}{\sqrt{\alpha}})$ to the optimum.
\par
If $\al>\frac{1}{2}$, then
by \autoref{lem:bilateral-bridge-tree-diameter}%
\ref{lem:bilateral-bridge-tree-diameter:upper} and
\autoref{cor:disconnect-diam-bound},
disconnection cost of a pairwise stable graph is $O(n\sqrt{n\alpha})$,
having a ratio of $O(\sqrt{\sfrac{n}{\alpha}})$ to the optimum.
Otherwise, if $\al\leq\frac{1}{2}$,
then disconnection cost is $O\parens{n\sqrt{n\sqrt{\alpha}}}$,
having a ratio of $O\parens{\sqrt{\sfrac{n}{\alpha^{1.5}}}}$ to the optimum.
\end{proof}

\subsection*{Smart Adversary}
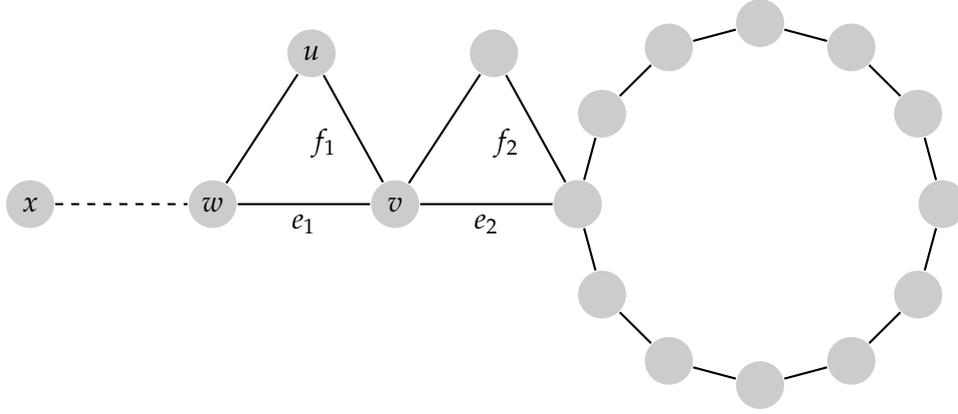
\begin{figure}
\centering
\begin{tikzpicture}
	\tikzset{node distance=2.4cm}
	\node[player] (x) {$x$};
	\node[player] (w) [right = of x] {$w$};
	\node[player] (u) [above right = 2cm and 1.3cm of w] {$u$};
	\node[player] (v) [right = of w] {$v$};
	\node[player] (uu) [above right = 2cm and 1.3cm of v] {};
	\node[player] (y) [right = of v] { };
	\node (c) [right = of y] { };
	\foreach \i in {0,30,...,330}{
		\node[player] (c\i) at ([shift={(\i:2.4cm)}] c) { };
	};
	\foreach \i in {0,30,...,300}{
		\pgfmathtruncatemacro{\next}{\i+30}
		\path (c\i) edge (c\next);
	};
	\path (c0) edge (c330);
	\path
		(x) edge[dashed] (w)
		(w) edge  (u)
		(w) edge node[swap] {$e_1$} (v)
		(u) edge node[swap,pos=.4] {$f_1$} (v)
		(v) edge node[swap] {$e_2$} (y)
		(v) edge (uu)
		(uu) edge node[swap,pos=.4] {$f_2$} (y);
\end{tikzpicture}
\caption{%
	\label{fig:non-convex}%
	Example for a non-convex individual cost $C_v$.
	The dashed edge is critical.
	Since the number of vertices on the cycle on the right is large enough,
	removal of $e_i$ makes $f_i$ critical for each $i\in\set{1,2}$.}
\end{figure}%
We start again by considering convexity of cost.
The difficulty with the smart adversary is 
that its probability measure can change substantially when edges are removed.
Indeed, exploiting this feature we show that cost 
is \emphasis{not} convex for the smart adversary.
\begin{proposition}
Consider the graph in \autoref{fig:non-convex}
and the player $v$ in it.
Then $C_v$ is not convex, since
\begin{equation*}
C_v(G-e_1-e_2)-C_v(G)
< (C_v(G-e_1)-C_v(G)) + (C_v(G-e_2)-C_v(G))\enspace.
\end{equation*}
\end{proposition}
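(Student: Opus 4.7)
The plan is to reduce the assertion to one about indirect cost alone and then compute $I_v$ explicitly in each of the four final graphs $G$, $G-e_1$, $G-e_2$, $G-e_1-e_2$. Under essential strategy profiles with \BLF, removing an edge $e_i$ incident to $v$ drops $\nrm{S_v}$ by exactly one, so $v$'s building cost decreases by exactly $\al$ in each case. Hence the building-cost contribution to both sides of the claimed inequality equals $-2\al$ and cancels, and it suffices to establish
\begin{equation*}
I_v(G-e_1-e_2)-I_v(G)<\left(I_v(G-e_1)-I_v(G)\right)+\left(I_v(G-e_2)-I_v(G)\right)\enspace.
\end{equation*}

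To compute each $I_v$ I first locate the critical edge(s). In $G$ every edge other than the dashed one lies on one of the two triangles or on the cycle, so the dashed edge is the only bridge and hence the unique critical edge; its relevance for $v$ is just the (small) $x$-side size. Removing $e_1$ dissolves the triangle $\{w,u,v\}$ into the path $w-u-v$, so the bridges of $G-e_1$ are $\{x,w\}$, $\{w,u\}$, and $f_1=\{u,v\}$, with $\nn$-values increasing monotonically along this chain. Since $\sep(e)=2\nn(e)(n-\nn(e))$ is strictly increasing in $\nn(e)$ on $[0,\sfrac{n}{2}]$ and a sufficiently large cycle keeps all three $\nn$-values below $\sfrac{n}{2}$, the bridge $f_1$ attains $\sep_{\max}$ and is the unique critical edge in $G-e_1$, with relevance $\card{\{x,w,u\}}$ for $v$. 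Symmetrically, $f_2$ is the unique critical edge of $G-e_2$, its relevance for $v$ being the (large) cycle-side size. In $G-e_1-e_2$ both triangles dissolve and the graph becomes a bridge-chain $x-w-u-v-u'-y$ attached to the cycle, again with $\nn$-values monotonically increasing, so $f_2$ remains critical and its relevance for $v$ is identical to its value in $G-e_2$.

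Plugging these quantities into the reduced inequality, the left-hand side equals $I_v(G-e_2)-I_v(G)$, whereas the right-hand side equals the same expression plus $I_v(G-e_1)-I_v(G)$. The latter quantity is strictly positive because $f_1$ has strictly larger relevance for $v$ in $G-e_1$ than the dashed edge has in $G$. Hence the strict inequality holds and convexity fails. The essence is that once $e_2$ has been removed the critical edge already sits on the cycle side of $v$, so dropping $e_1$ additionally leaves $v$'s disconnection cost unchanged -- yet dropping $e_1$ from the undamaged graph $G$ strictly hurts $v$. The main obstacle is merely the bookkeeping required to verify that the cycle is ``large enough'' so that in each sub-case $f_i$ (rather than some bridge farther from the cycle) attains $\sep_{\max}$; this reduces to comparing a handful of explicit separation values $2\nn(e)(n-\nn(e))$ along each bridge-chain, which the depicted $12$-vertex cycle handles by direct computation.
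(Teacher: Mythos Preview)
Your proof is correct and follows essentially the same line as the paper's: both identify the unique critical edge in each of $G$, $G-e_1$, $G-e_2$, $G-e_1-e_2$, compute $I_v$ there (the paper gets $1$, $3$, $k$, $k$ with $k$ the cycle size), and observe that the extra term $I_v(G-e_1)-I_v(G)=2>0$ on the right-hand side forces strict inequality. Your preliminary reduction to indirect cost by cancelling the $-2\alpha$ building-cost contribution is a clean touch that the paper leaves implicit.
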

\begin{proof}
Let $k$ be the number of vertices on the cycle on the right.
If $k$ is large enough,
removal of $e_i$ makes $f_i$ critical, for each $i\in\set{1,2}$.
Removing both $e_1$ and $e_2$ makes $f_2$ critical.
Thus we have:
\begin{align*}
C_v(G-e_1) - C_v(G) &= 3 - 1 - \alpha  = 2-\alpha\enspace,\\
C_v(G-e_2) - C_v(G) &= k - 1 - \alpha \enspace\text{,}\enspace\text{and}\\
C_v(G-e_1-e_2) - C_v(G) &= k - 1 - 2\alpha \enspace.
\end{align*}
So $C_v(G-e_1) - C_v(G) + C_v(G-e_2) - C_v(G) = 2 + k - 1 - 2\alpha =k+1-2\alpha$,
which is strictly larger than $k - 1 - 2\alpha$ by a difference of $2$.
\end{proof}
This result is only partly satisfactory,
since non-convexity is not shown on the set of \PS\ strategy profiles.
It remains unclear how to construct an example 
of a \PS\ graph that is not a \PNE.
In an attempt to make the example from \autoref{fig:non-convex} \PS,
we need $\alpha\leq 1$, or else $u$ would sell $\set{u,w}$,
which would make $e_1$ critical.
However, then there is an incentive for $x$ to put the critical edge on a cycle
by building an additional edge,
and no potential partner can decline such a request.
\par
Yet, this example provides evidence that the smart adversary 
is in some respect a substantially different model than the sum-distance model
or the simple-minded adversary.
For, these two have convex cost on the whole set of strategy profiles,
which the smart adversary has not.
\par\medskip
Now we turn to the price of anarchy.
Here again, the smart adversary provides us with something new.
The proof of \autoref{rem:chord-free-smart},
showing that a \NE\ is chord-free,
clearly carries over to the bilateral case and pairwise stability.
This is easier compared to the simple-minded adversary,
where we had to prove \autoref{prop:chord-free-bilateral-simple}.
Thus certainly we have the 
\begin{equation}
\label{eqn:simple-bound}%
O\parens{1+\frac{n}{\alpha}}
\end{equation}
bound of \autoref{cor:simple-bound}\ref{cor:simple-bound:ii},
since chord-freeness is maintained.
Can it be improved, like in all the other cases studied before?
The proofs of \autoref{lem:mmax-3} and \autoref{lem:mmax-2-1}, 
which are the basis for \autoref{thm:smart-bound},
are almost completely symmetric in $v_1$ and $v_2$
and so at first appear to apply for \BLF\ as well,
which would have meant an $O(1)$ bound.
The case of exactly one critical edge in \autoref{lem:mmax-2-1}, however, 
is not symmetric and provides an idea to a counterexample:
if there is only one critical edge~$e_0$,
then it can happen that vertices in the smaller component of $G-e_0$
wish (desperately) to put~$e_0$ on a cycle,
but they cannot find a partner in the other component 
that is willing to cooperate.
The following lower bound is tight by~\eqref{eqn:simple-bound}.
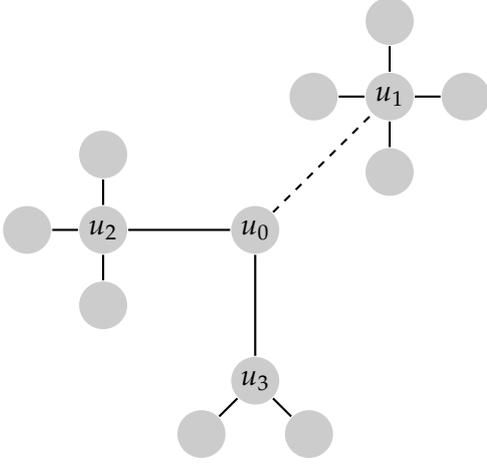
\begin{SCfigure}
\caption{%
\label{fig:three-stars}%
Three stars of sizes $n_0$, $n_0-1$, and $n_0-2$; here $n_0=5$.
The $n_0$ players in the star around $u_1$ would like to put the one critical edge on a cycle,
if $\alpha<n_0$.
Building, \eg $\set{u_1,u_2}$ would reduce their disconnection cost from $n-n_0$ to $n_0-2$,
meaning an improvement of $n_0$.
But no player from the stars around $u_2$ or $u_3$ is willing to cooperate.}
\begin{tikzpicture}[%
	every node/.style=player,%
	node distance=1cm,%
	]
	\node (u0) {$u_0$};
	\node (u1) [above right = 2.5cm of u0] {$u_1$};
	\node (u11) [below = of u1] { };
	\node (u12) [right = of u1] { };
	\node (u13) [above = of u1] { };
	\node (u14) [left = of u1] { };
	\node (u2) [left = 2cm of u0] {$u_2$};
	\node (u21) [below = of u2] { };
	\node (u22) [left = of u2] { };
	\node (u23) [above = of u2] { };
	\node (u3) [below = 2cm of u0] {$u_3$};
	\node (u31) [below right = of u3] { };
	\node (u32) [below left = of u3] { };
	\path (u0) edge[dashed] (u1)
		  (u0) edge         (u2)
		  (u0) edge         (u3);
	\path (u1) edge (u11)
		  (u1) edge (u12)
		  (u1) edge (u13)
		  (u1) edge (u14);
	\path (u2) edge (u21)
		  (u2) edge (u22)
		  (u2) edge (u23);
	\path (u3) edge (u31)
		  (u3) edge (u32);
\end{tikzpicture}
\hspace{1cm}
\end{SCfigure}%
\begin{theorem}
\label{thm:bilateral-smart-lower}%
Let $n \geq 9$ and $\al > 2$, 
then the price of anarchy for \PNE\ and \PS\ with the smart adversary 
is $\Omega(1 + \frac{n}{\alpha})$.
\end{theorem}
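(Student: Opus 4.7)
The plan is to construct, for every $n \geq 9$ and $\al > 2$, a specific \PNE\ graph $G$ of social cost $\Theta(n\al + n^2)$, so that its ratio to the optimum $\Theta(n\al)$ is $\Theta(1 + \frac{n}{\al}) = \Om(1 + \frac{n}{\al})$; since every \PNE\ is also \PS, the same lower bound applies to the \PS\ price of anarchy. The graph is the tree depicted in \autoref{fig:three-stars}: a hub $u_0$ joined by spokes to three star-centers $u_1, u_2, u_3$, with stars of sizes $a > b \geq c \geq 2$ (sizes counting the respective centers) satisfying $a + b + c + 1 = n$ and $a - c \leq 2$. Such a triple exists for every $n \geq 9$: for example $(a,b,c) = (4,2,2)$ when $n = 9$, $(4,3,2)$ when $n = 10$, $(4,3,3)$ when $n = 11$, $(5,4,3)$ when $n = 13$, and analogously in general.

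First I observe that $\{u_0, u_1\}$ is the unique critical edge. The function $x \mapsto x(n-x)$ is strictly increasing on $[0, n/2]$, and by $a > b \geq c$ together with $a \leq n/2$, the spoke $\{u_0, u_1\}$ has the strictly largest separation $2 a (n-a)$ among the three spokes; leaf edges have separation $2(n-1) < 2a(n-a)$ for $a \geq 2$ and $n \geq 9$. Consequently, $\sep_{\max} = 2a(n-a) = \Theta(n^2)$, and this quantity is the total disconnection cost of $G$.

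Next I argue that $G$ is a \PNE. Since $G$ is a tree, every edge removal disconnects the graph and hence produces indirect cost $\infty$, so the \NE\ deviation condition and the \PS\ deletion condition are vacuous. For the pairwise addition condition I enumerate the candidate non-edges $\{v,w\}$. In every case, adding $\{v,w\}$ places at most two of the three spokes on a newly created cycle; by \autoref{rem:critical-remains} the new critical edge is either the spoke not on the cycle or the old spoke $\{u_0, u_1\}$ itself (when it is untouched). At least one of $v, w$ lies outside star~$1$, since within star~$1$ every vertex pair is already either connected or at distance~$2$ through $u_1$, and even in the $\{v, w\} = \{\ell, \ell'\}$ case with $\ell, \ell' \in$ star~$1$ one checks directly that the critical edge is unchanged. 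For such an outside endpoint, her disconnection cost equals the size of the star that is separated from her by the (new) critical edge, and this value drops by at most $a - c \leq 2$ between \enquote{before} and \enquote{after}. Combined with the extra building cost $+\al$, her total cost changes by at least $\al - 2 > 0$, a strict loss. This establishes the premise of~\eqref{eqn:pne} and hence the pairwise addition condition. (The partner inside star~$1$ may well strictly benefit from the new edge, since she is \enquote{trapped} behind $\{u_0, u_1\}$; but this is compatible with~\eqref{eqn:pne}.) Finally, the social cost of $G$ is $2(n-1)\al + 2a(n-a) = \Theta(n\al + n^2)$, while by the \BLF\ analog of \autoref{prop:uni-opt} recorded at the beginning of \autoref{sec:bilateral} we have $\OPT(n,\al) = \Theta(n\al)$; dividing yields $\frac{\SC(G)}{\OPT(n,\al)} = \Om(1 + \frac{n}{\al})$.

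The main obstacle is the pairwise-addition case analysis: for each structural type of candidate edge (two star centers, a center and a foreign leaf, two leaves in the same or different stars, hub to leaf) one must verify that the outside endpoint's disconnection-cost drop is bounded by $a - c$ and identify the resulting new critical edge via \autoref{rem:critical-remains}. The constraint $a - c \leq 2$ together with $c \geq 2$ and $a > b$ is exactly what forces the hypothesis $n \geq 9$, and the bound $\al - (a-c) > 0$ needed in the \enquote{outside-star-$1$} estimate is what makes the threshold $\al > 2$ tight.
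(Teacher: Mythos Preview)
Your argument is correct and follows essentially the same construction as the paper: three stars of nearly equal sizes attached to a hub $u_0$, so that $\{u_0,u_1\}$ is the unique critical edge and any partner outside star~$1$ gains at most $a-c\leq 2<\al$ from an additional link. Two small remarks: your appeal to \autoref{rem:critical-remains} is loose, since that remark is stated for $\mmax\geq 2$ whereas here $\mmax=1$; what you actually use is the direct observation that adding an edge only zeroes the $\nu$-values along the new cycle and leaves the others unchanged, so the new critical edge is the largest remaining spoke. Also, the clause ``at least one of $v,w$ lies outside star~$1$, since within star~$1$ every vertex pair is already either connected or at distance~$2$'' does not justify the claim (two leaves of star~$1$ are non-adjacent); you correctly handle that case in the next sentence, but the ``since'' should be dropped. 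The paper's version differs only cosmetically: it fixes the sizes to $n_0,n_0-1,n_0-2$ and absorbs the residue $n\bmod 3$ by attaching one or two extra leaves to $u_0$, whereas you absorb it by varying $(a,b,c)$ directly.
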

\begin{proof}
First assume there is an integer $n_0\geq 3$ such that $n=3n_0-2$.
Consider three stars $S_i$, $i=1,2,3$ with center vertices $u_i$, $i=1,2,3$,
and $n_0$, $n_0-1$, and $n_0-2$ vertices, respectively.
Connect the stars via an additional vertex $u_0$ 
and additional edges $\set{u_0,u_i}$, $i=1,2,3$.
This construction uses $3n_0-2$ vertices.
See \autoref{fig:three-stars} 
for an illustration.
Then $e_0=\set{u_0,u_1}$ is the only critical edge
and $n_0=\Theta(n)$,
namely slightly more than $\frac{1}{3}n$.
We have a total disconnection cost of $2 \nn(e_0) \, (n-\nn(e_0))
= 2 n_0 \, (n-n_0) = \Omega(n^2)$.
We have a ratio to the optimum of $\Omega(1+\frac{n}{\alpha})$.
We are left to show that
this graph is a \PNE,
which implies \PS.
It is clear that no edge can be sold,
since that would make the graph disconnected.
Therefore we only have to ensure 
that no link can be added that would be beneficial for one endpoint
and at least no impairment for the other one,
\ie we have to show~\eqref{eqn:pne}.
\par
An edge $e$ that improves disconnection cost for some player 
has to put $\set{u_0,u_1}$ on a cycle.
If $e$ connects a vertex in $S_1$ with a vertex in $S_2$,
then $\set{u_0,u_3}$ will become critical.
For a vertex in $S_2$,
this reduces disconnection cost from $n_0$ to $n_0-2$.
So, since $\alpha> 2$, no vertex in $S_2$ agrees to build such an edge.
\par
A similar situation holds if $e$ connects a vertex in $S_1$ with a vertex in $S_3+u_0$.
It will result in $\set{u_0,u_2}$ becoming critical.
For a vertex in $S_3+u_0$,
this reduces disconnection cost from $n_0$ to $n_0-1$.
So, since $\alpha> 1$, no vertex in $S_3+u_0$ agrees to build such an edge.
\par\smallskip
If $n+2$ is not a multiple of $3$, we can do a similar construction.
We let $n_0$ be as large as possible so that $n\geq3n_0-2$
and do the same construction as above.
The remaining $1$ or $2$ vertices are connected directly to $u_0$.
Then the previous arguments essentially carry over.
\end{proof}
\par
\enlargethispage{\baselineskip}%
If we consider $\alpha>2$ a constant,
then the previous theorem gives a lower bound of $\Omega(n)$.
For $\alpha=\Omega(1)$ this
is the worst that can happen for \emphasis{any} adversary in this model.
Hence the `overall worst-case' is attained by the smart adversary
with \BLF.

\section*{Bibliographic Information}
This work is based on Chapter 5 of my dissertation~\cite{Kli10a}.
It has also been presented in form of a brief announcement at PODC 2010~\cite{Kli10b}.
The results on unilateral link formation have been published
in the open access journal \textit{Games}~\cite{Kli11a}.

\section*{Acknowledgements}
I thank the German Research Foundation
(Deutsche Forschungsgemeinschaft\linebreak (DFG))
for financial support through 
Priority Program 1307 \enquote{Algorithm Engineering} (Grant SR7/12-2).
I thank Ines Kusemeier for pointing out some errors in an early version of this work.
I thank Peter Munstermann for improving the constant in \autoref{prop:chord-free-general},
resulting in smaller constants in the bounds on the price of anarchy.

\preparebibliography
\newcommand{\etalchar}[1]{$^{#1}$}

\end{document}